\def\bC {\mathbf{C}}
\def\bR {\mathbf{R}}
\def\bV {\mathbf{V}}
\def\fH {\mathfrak{H}}
\def\fS {\mathfrak{S}}
\def\cC {\mathcal{C}}
\def\cD {\mathcal{D}}
\def\cE {\mathcal{E}}
\def\cH {\mathcal{H}}
\def\cK {\mathcal{K}}
\def\cL {\mathcal{L}}
\def\cM {\mathcal{M}}
\def\cN {\mathcal{N}}
\def\cP {\mathcal{P}}
\def\cR {\mathcal{R}}
\def\cS {\mathcal{S}}
\def\cT {\mathcal{T}}
\def\cV {\mathcal{V}}
\def\cX {\mathcal{X}}
\def\a {{\alpha}}
\def\b {{\beta}}
\def\g {{\gamma}}
\def\Ga {{\Gamma}}
\def\de {{\delta}}
\def\eps {{\epsilon}}
\def\l {{\lambda}}
\def\L {{\Lambda}}
\def\si {{\sigma}}
\def\om {{\omega}}
\def\d {{\partial}}
\def\grad {{\nabla}}
\def\Dlt {{\Delta}}
\def\rstr {{\big |}}
\def\la {\langle}
\def\ra {\rangle}
\def \La {\bigg\langle}
\def \Ra {\bigg\rangle}
\newcommand{\Div}{\operatorname{div}}
\newcommand{\Tr}{\operatorname{trace}}
\newcommand{\Lip}{\operatorname{Lip}}
\newcommand{\Ad}{\operatorname{\mathbf{ad}^*}}
\newcommand{\MKd}{\operatorname{dist_{MK,2}}}
\newcommand{\MKp}{\operatorname{dist_{MK,p}}}
\newcommand{\MK}{\operatorname{dist_{MK,1}}}
\newcommand{\Op}{\operatorname{OP}}
\def\hb {{\hbar}}
\newcommand{\ba}{\begin{aligned}}
\newcommand{\ea}{\end{aligned}}
\newcommand{\be}{\begin{equation}}
\newcommand{\ee}{\end{equation}}
\newcommand{\lb}{\label}
\newtheorem{Thm}{Theorem}[section]
\newtheorem{Cor}[Thm]{Corollary}
\newtheorem{Lem}[Thm]{Lemma}
\newtheorem{Def}[Thm]{Definition}
\begin{document}

\title{Mean-Field Limits in Statistical Dynamics}

\author[F. Golse]{Fran\c cois Golse}
\address[F.G.]{CMLS, \'Ecole polytechnique, 91128 Palaiseau Cedex, France}
\email{francois.golse@polytechnique.edu}

\begin{abstract}
These lectures notes are aimed at introducing the reader to some recent mathematical tools and results for the mean-field limit in statistical dynamics. As a warm-up, lecture 1 reviews the approach to the mean-field limit in classical mechanics 
following the ideas of W. Braun, K. Hepp and R.L. Dobrushin, based on the notions of phase space empirical measures, Klimontovich solutions and Monge-Kantorovich-Wasserstein distances between probability measures. Lecture 2 discusses
an analogue of the notion of Klimontovich solution in quantum dynamics, and explains how this notion appears in Pickl's method to handle the case of interaction potentials with a Coulomb type singularity at the origin. Finally, lecture 3 explains
how the mean-field and the classical limits can be taken jointly on quantum $N$-particle dynamics, leading to the Vlasov equation. These lectures are based on a series of joint works with C. Mouhot and T. Paul.
\end{abstract}

\maketitle

%%%%%%%%%%%%%%%%%%%%%%%%%%%%%%%%%%%%%%%%%%%%%%%%%%%%%%%%%%%%%%%%%%%%%%%%%%%%%%%%%%%%%%%%%%%%%%%%%%%%%%%%%%%%%%%%%%%%%%%%%

\section*{Introduction: What is a Mean-Field Dynamics?}

%%%%%%%%%%%%%%%%%%%%%%%%%%%%%%%%%%%%%%%%%%%%%%%%%%%%%%%%%%%%%%%%%%%%%%%%%%%%%%%%%%%%%%%%%%%%%%%%%%%%%%%%%%%%%%%%%%%%%%%%%

Consider a system of $N$ perfectly identical point particles, subject to pairwise interactions. We shall be concerned with the case where $N\gg 1$. For instance, in the case of an ideal gas, a volume of $22.4\cdot 10^{-3}\mathrm{m}^3$ 
contains $N_A\simeq 6.02\cdot 10^{23}$ (Avogrado's constant) gas molecules at the atmospheric pressure $101.3\,\mathrm{kPa}$ at the temperature of $0^\circ\mathrm{C}=273.1\mathrm{ K}$.

The evolution of such a system can be described either

\smallskip
\noindent
(a) by the system of motion equations (Newton's second law of motion) satisfied by each particle, or

\noindent
(b) by the motion equation for the ``typical particle'' driven by its collective interaction with all the other particles.

\smallskip
The description of such a large particle system following the procedure outlined in (b) is usually referred to as the ``mean-field approximation'' for the $N$-particle dynamics.

\smallskip
The advantages and drawbacks of each description can be summarized as follows:

\noindent
(a) is perfect in theory, but unfeasible in practice, since it involves observing initial data in a phase space of very high dimension ($6N$ components for the positions and momenta in the case $N$ point particles in the three-dimensional 
Euclidean space), not to mention the resolution of a coupled system of $6N$ differential equations;

\noindent
(b) is only an approximation, but is set on a phase space of relatively low (at least fixed and independent of $N$) dimension, specifically $6$ instead of $6N$.

\medskip
This obviously suggests the mathematical problem of justifying rigorously the mean-field approximation (b) starting from (a) as a first principle. Obviously, it would be desirable to obtain a convergence rate as the particle number $N\to\infty$,
in order to have an idea of the precision of the mean-field description.

\smallskip
There are numerous examples of mean-field equations in physics, such as

\smallskip
\noindent
(i) the Vlasov-Poisson or Vlasov-Maxwell systems used in the modeling of collisionless plasmas or ionized gases \cite{Glassey}, or

\noindent
(ii) the Hartree or Hartree-Fock equations used in quantum chemistry ab initio computations \cite{SzaboOstlund, ChadamGlassey}.

\smallskip
More recently, various mean-field theories have been proposed to describe the motion of living agents (such as the Vicsek, or Cucker-Smale models used to describe flocking or swarming). In the present lectures, we shall mostly focus on
well-known models studied in mathematical physics, but the reader should be aware that the ideas presented here could also be used in different contexts. For the same reason, our list of references will be very incomplete as regards these
relatively new applications of the mean-field theory.

%%%%%%%%%%%%%%%%%%%%%%%%%%%%%%%%%%%%%%%%%%%%%%%%%%%%%%%%%%%%%%%%%%%%%%%%%%%%%%%%%%%%%%%%%%%%%%%%%%%%%%%%%%%%%%%%%%%%%%%%%

\section*{Outline}

%%%%%%%%%%%%%%%%%%%%%%%%%%%%%%%%%%%%%%%%%%%%%%%%%%%%%%%%%%%%%%%%%%%%%%%%%%%%%%%%%%%%%%%%%%%%%%%%%%%%%%%%%%%%%%%%%%%%%%%%%

The purpose of these lecture notes is to introduce some new methods to handle the mean-field and classical limits in quantum mechanics. As a warm-up, we begin with Dobrushin's convergence rate estimate for the mean-field limit in classical
mechanics, assuming that the interaction force field is bounded and Lipschitz continuous (see \cite{Dobrushin}). One of the most innovative features in Dobrushin's bound is the use of optimal transport distances, specifically the Wasserstein
metric of exponent $1$ --- also referred to as the Monge-Kantorovich(-Rubinstein) metric: see chapter 7 in \cite{VillaniAMS} for a complete presentation of Wasserstein distances. Another key feature of Dobrushin's estimate is the notion of
Klimontovich solution (of the Vlasov equation). Specifically, the phase space empirical measure for a $N$-tuple of identical point particles is a weak solution of the Vlasov equation if and only if the $N$ particles evolve according to Newton's
second law of motion.

With Dobrushin's work as a motivation, the following objects have been defined in the past decade to handle the analogous problems in quantum mechanics:

\smallskip
\noindent
(a) an analogue of the Wasserstein distance of exponent $2$ for comparing a quantum density operator to a classical probability density, or two quantum density operators, and

\noindent
(b) a quantum analogue of Klimontovich solutions in quantum mechanics.

\smallskip
Lectures 2 and 3 discuss both the mean-field and the classical limits for the quantum $N$-body problem by means of the mathematical tools defined in (a)-(b). We have deliberately chosen to ignore the ``historic'' approach of the quantum
mean-field limit, involving BBGKY hierarchies. One of the drawbacks of the BBGKY hierarchy approach is the lack of uniformity in the semiclassical regime. Since this is one of our main interests in this course, we have decided not to include 
a presentation of the BBGKY hierarchy approach. The interested reader can find a rather detailed presentation of the fundamental mathematical techniques pertaining to the BBGKY hierarchy approach in \cite{FGTwente}.

%%%%%%%%%%%%%%%%%%%%%%%%%%%%%%%%%%%%%%%%%%%%%%%%%%%%%%%%%%%%%%%%%%%%%%%%%%%%%%%%%%%%%%%%%%%%%%%%%%%%%%%%%%%%%%%%%%%%%%%%%

\bigskip
\centerline{\textbf{Table of Contents}}

\smallskip
\noindent
Lecture 1: From Newton to Vlasov (mean-field limit in classical mechanics)

\smallskip
\noindent
Lecture 2: From Schr\"odinger to Hartree (mean-field limit in quantum mechanics)

\smallskip
\noindent
Lecture 3: Mean-field and classical limits in quantum mechanics

%%%%%%%%%%%%%%%%%%%%%%%%%%%%%%%%%%%%%%%%%%%%%%%%%%%%%%%%%%%%%%%%%%%%%%%%%%%%%%%%%%%%%%%%%%%%%%%%%%%%%%%%%%%%%%%%%%%%%%%%%

\section{Lecture 1: From Newton to Vlasov\\(Mean-Field Limit in Classical Mechanics)}

%%%%%%%%%%%%%%%%%%%%%%%%%%%%%%%%%%%%%%%%%%%%%%%%%%%%%%%%%%%%%%%%%%%%%%%%%%%%%%%%%%%%%%%%%%%%%%%%%%%%%%%%%%%%%%%%%%%%%%%%%

\subsection{The $N$-Body Problem in Classical Mechanics}

%%%%%%%%%%%%%%%%%%%%%%%%%%%%%%%%%%%%%%%%%%%%%%%%%%%%%%%%%%%%%%%%%%%%%%%%%%%%%%%%%%%%%%%%%%%%%%%%%%%%%%%%%%%%%%%%%%%%%%%%%

Consider a system of $N$ identical point particles of mass $m$ moving in the spatial domain $\bR^d$, subject to a pairwise interaction potential $V\equiv V(z)\in\bR$.

Let us write Newton's second law of motion for the $j$th particle:
$$
\left\{
\ba
{}&m\dot x_j=\xi_j\,,
\\
& &\qquad j=1,\ldots,N\,,
\\
&\dot\xi_j=\sum_{\genfrac{}{}{0pt}{3}{k=1}{k\not=j}}^N-\grad V(x_j-x_k)\,.
\ea
\right.
$$
Here $x_j\in\bR^d$ and $\xi_j\in\bR^d$ are respectively the position and momentum of the $j$th particle. The notation $\dot z(t)$ designates the time derivative $\frac{dz}{dt}(t)$, as usual in rational mechanics. Thus the first equation above 
is the kinematic definition of the momentum of the $j$th particle, while $-\grad V(x_j-x_k)$ is the force exerted by the $k$-th particle at the position $x_k$ on the $j$th particle at the position $x_j$.

\smallskip
For instance, the interaction potential $V$ could be the repulsive Coulomb potential between particles with the same electric charge (in the context of plasma physics), or the attractive gravitational potential (in the context of astronomy). In both
examples, the interaction potential $V\equiv V(z)$ is singular at $z=0$. 

\smallskip
For the sake of mathematical simplicity, we shall assume that the interaction potential $V$ satisfies the following conditions:

\smallskip
\noindent
\textbf{Assumptions on $V$}
$$
\ba
{}&\text{(H1)}\qquad&&V(z)=V(-z)&&\hbox{ for all }z\in\bR^d\,,
\\
&\text{(H2)}&&V\in C^1(\bR^d)\,,&&\hbox{ and }\grad V\in L^\infty(\bR^d)\cap\Lip(\bR^d)\,.
\ea
$$
Assumption (H1) corresponds to Newton's 3rd law: if the $j$th and the $k$th particles interact via the potential $V$, the force $-\grad V(x_j-x_k)$ exerted on the $j$th particle by the $k$th particle is the opposite of the force $-\grad V(x_k-x_j)$
exerted by the $j$th particle on the $k$th particle.

\smallskip
Henceforth, we systematically use the following notation to designate the $N$-tuple of positions and momenta of the $N$ particles:
$$
X_N:=(x_1,\ldots,x_N)\in\bR^{dN}\,,\quad\text{ and }\quad\Xi_N:=(\xi_1,\ldots,\xi_N)\in\bR^{dN}\,.
$$
By the Cauchy-Lipschitz theorem, for each $N$-tuple of initial positions $X_N^{in}\in\bR^{dN}$ and momenta $\Xi_N^{in}\in\bR^{dN}$, the differential system above has a unique solution 
\be\lb{DefPhi}
t\mapsto\Phi(t,X^{in}_N,\Xi^{in}_N)=(X_N(t),\Xi_N(t))
\ee
passing through $(X_N^{in},\Xi_N^{in})$ at time $t=0$ and defined for all $t\in\bR$.

\subsection{Mean Field Scaling}

%%%%%%%%%%%%%%%%%%%%%%%%%%%%%%%%%%%%%%%%%%%%%%%%%%%%%%%%%%%%%%%%%%%%%%%%%%%%%%%%%%%%%%%%%%%%%%%%%%%%%%%%%%%%%%%%%%%%%%%%%

Define scaled time $\hat t$, position $\hat x_j$ and momentum $\hat\xi_j$ for the $j$th particle by the formulas
$$
\hat t=t/N\,,\qquad\hat x_j(\hat t)=x_j(t)\,,\qquad\hat\xi_j(\hat t)=\xi_j(t)\,.
$$
In terms of these new dynamical quantities and time variable, the motion equations take the form
$$
\left\{
\ba
{}&mN\frac{d\hat x_j}{d\hat t}=\hat\xi_j\,,
\\	
&&\qquad j=1,\ldots,N\,,
\\
&N\frac{d\hat\xi_j}{d\hat t}=\sum_{\genfrac{}{}{0pt}{3}{k=1}{k\not=j}}^N-\grad V(\hat x_j-\hat x_k)\,,
\ea
\right.
$$
At this point, we assume that the total mass of the $N$-particle system is finite --- in other words that the mass of each particle is of order $1/N$:
$$
Nm=1\,.
$$

Therefore, after dropping hats on all variables, and reverting to the original notation $\dot z(t)$ to designate the time derivative of $z(t)$, our starting point is the scaled system of Newton's second laws of motion for each particle:
\be\lb{NbodyClass}
\left\{
\ba
{}&\dot x_j=\xi_j\,,
\\	
&&\qquad j=1,\ldots,N\,,
\\
&\dot\xi_j=\frac1N\sum_{\genfrac{}{}{0pt}{3}{k=1}{k\not=j}}^N-\grad V(x_j-x_k)\,.
\ea
\right.
\ee

\subsection{Vlasov Equation}

%%%%%%%%%%%%%%%%%%%%%%%%%%%%%%%%%%%%%%%%%%%%%%%%%%%%%%%%%%%%%%%%%%%%%%%%%%%%%%%%%%%%%%%%%%%%%%%%%%%%%%%%%%%%%%%%%%%%%%%%%

Our target equation, on the other hand, is the mean-field motion equation in classical mechanics, henceforth designated in general as the Vlasov equation (although the original Vlasov equation was written specifically for electrons 
in a plasma \cite{Vlasov}).

The unknown of the Vlasov equation is a single-particle phase space number density $f\equiv f(t,x,v)$, the number density of particles at the position $x\in\bR^d$ with momentum $\xi\in\bR^d$ at time $t$. For each $t$, the function
$(x,\xi)\mapsto f(t,x,\xi)$ is a probability density on $\bR^d\times\bR^d$. More generally, one could think of $f$ as a time-dependent Borel probability on the single-particle phase space $\bR^d\times\bR^d$, in which case  we shall
write it as $f(t,dxd\xi)$.

The Vlasov equation for $f$ takes the form
$$
(\d_t+\xi\cdot\grad_x)f-\grad_xV_f\cdot\grad_\xi f=0\,,\qquad x,\xi\in\bR^d\,,
$$
where $V_f\equiv V_f(t,x)$ is the \textit{mean-field potential} defined by the following formula
$$
V_f(t,x):=\iint_{\bR^d\times\bR^d}V(x-y)f(t,dyd\eta)=(V\star f(t,\cdot))(x)\,,\qquad x\in\bR^d\,.
$$

In other words, one can think of $(x,\xi)$ as the position and momentum of the \textit{typical particle}. The method of characteristics for the Vlasov equation tells us that
$$
f(t,x(t),\xi(t))=\text{Const.}\,,
$$
where $t\mapsto(x(t),\xi(t))$ is a solution of the differential system
$$
\left\{
\ba
{}&\dot x(t)=\xi(t)\,,
\\
&\dot\xi(t)=-\grad_xV_f(t,x(t))\,.
\ea
\right.
$$
In other words, the probability density $f$ is pushed forward by the flow of the differential system involving the mean-field potential, the self-consistent potential defined by $f$ itself.

\smallskip
Henceforth, the set of Borel probability measures on $\bR^n$ is denoted $\cP(\bR^n)$, and for each $k>0$, we denote by $\cP_k(\bR^n)$ the set of Borel probability measures with finite $k$-th order moment on $\bR^n$, i.e.
$$
\mu\in\cP_k(\bR^n)\iff\mu\in\cP(\bR^n)\text{ and }\int_{\bR^n}|x|^k\mu(dx)<\infty\,.
$$

The following existence and uniqueness result for the Vlasov equation is easy to prove (the proof is a simple variant of the proof of the Cauchy-Lipschitz theorem, and is left to the reader).

\begin{Thm}
For each initial data $f^{in}\in\cP_1(\bR^{2d})$, there exists a unique weak solution $f\in C([0,+\infty);w-\cP(\bR^{2d}))$ of the Vlasov equation such that $f\rstr_{t=0}=f^{in}$.
\end{Thm}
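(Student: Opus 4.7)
The strategy is to implement a Banach fixed-point argument on the complete metric space $C([0,T];(\cP_1(\bR^{2d}),\MK))$ for a short enough time horizon $T>0$, using the mean-field characteristic flow as the map to iterate, and then extending to $[0,\infty)$ by concatenation.

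Given any curve $t\mapsto\mu(t)\in\cP_1(\bR^{2d})$, the associated self-consistent vector field $b_\mu(t,x,\xi):=(\xi,-(\grad V\star\rho_\mu(t,\cdot))(x))$, where $\rho_\mu(t,\cdot)$ denotes the spatial marginal of $\mu(t)$, is well-defined, uniformly bounded in $x$ by (H2), and Lipschitz in $(x,\xi)$ with a constant depending only on $\|\grad V\|_{\Lip}$ --- in particular, independently of $\mu$. The Cauchy-Lipschitz theorem therefore yields a global flow $S^t_\mu:\bR^{2d}\to\bR^{2d}$, and one sets $\Psi(\mu)(t):=(S^t_\mu)_\#f^{in}$. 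Direct estimates on the characteristic equations show that $\Psi(\mu)(t)\in\cP_1(\bR^{2d})$ with first moment growing at most linearly in $t$, and that $t\mapsto\Psi(\mu)(t)$ is narrowly continuous. By the method of characteristics, weak solutions of the Vlasov equation with initial datum $f^{in}$ are precisely the fixed points of $\Psi$.

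The heart of the argument is the contraction estimate for $\Psi$ in $\MK$. For two curves $\mu$ and $\nu$, I would couple the flows by starting from the same initial $(x,\xi)\in\bR^{2d}$ and comparing the trajectories $(X^\mu,\Xi^\mu)$ and $(X^\nu,\Xi^\nu)$. Applying Kantorovich-Rubinstein duality to the $\|\grad V\|_{\Lip}$-Lipschitz test function $y\mapsto\grad V(x-y)$ gives
\[
\|(\grad V\star\rho_\mu)(t,\cdot)-(\grad V\star\rho_\nu)(t,\cdot)\|_{L^\infty(\bR^d)}\leq\|\grad V\|_{\Lip}\,\MK(\mu(t),\nu(t))\,,
\]
and a Gronwall argument on $|X^\mu(t)-X^\nu(t)|+|\Xi^\mu(t)-\Xi^\nu(t)|$ combined with the fact that $(S^t_\mu,S^t_\nu)_\#f^{in}$ is an admissible coupling of $\Psi(\mu)(t)$ and $\Psi(\nu)(t)$ produces
\[
\sup_{0\leq t\leq T}\MK(\Psi(\mu)(t),\Psi(\nu)(t))\leq C(T)\sup_{0\leq t\leq T}\MK(\mu(t),\nu(t))\,,
\]
with $C(T)\to 0$ as $T\to 0$. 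Choosing $T$ small enough yields existence and uniqueness on $[0,T]$; since the contraction constant depends only on $T$ and $\|\grad V\|_{\Lip}$ and the first moment grows at most linearly, one iterates to reach any finite time.

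The main technical obstacle is this contraction estimate: one must combine three distinct ingredients --- the Lipschitz continuity of $\grad V$, the duality characterization of $\MK$ via $1$-Lipschitz test functions, and the coupling/pushforward structure of the Wasserstein distance --- in such a way that the Gronwall constant remains uniform on the chosen interval. Once this is in hand, uniqueness of solutions follows by applying exactly the same estimate to two genuine weak solutions, independently of the fixed-point construction.
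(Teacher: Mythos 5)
Your proof is correct and follows exactly the route the paper alludes to when it says the result ``is a simple variant of the proof of the Cauchy-Lipschitz theorem'': a Banach fixed-point argument for the map $\mu\mapsto(S^t_\mu)_\#f^{in}$ in $C([0,T];(\cP_1,\MK))$, with the contraction supplied by the Kantorovich--Rubinstein duality bound $\|\grad V\star\rho_\mu(t,\cdot)-\grad V\star\rho_\nu(t,\cdot)\|_{L^\infty}\le\Lip(\grad V)\,\MK(\mu(t),\nu(t))$ and a Gronwall estimate on the trajectory difference. Two small points worth tidying: the first moment along characteristics grows quadratically in $t$ (not linearly), which is harmless since all that is needed for concatenation is that it remain finite; and the stated uniqueness is in $C([0,\infty);w\text{-}\cP(\bR^{2d}))$, so to close the argument one should note that any such weak solution defines a Lipschitz characteristic field, hence is the push-forward of $f^{in}$ by that flow and therefore automatically lies in $\cP_1$, after which your contraction estimate applies.
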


(Here, the notation $w-\cP(\bR^n)$ designates the set $\cP(\bR^n)$ equipped with its weak topology.)

\subsection{Empirical Measure and Klimontovich Solutions}

%%%%%%%%%%%%%%%%%%%%%%%%%%%%%%%%%%%%%%%%%%%%%%%%%%%%%%%%%%%%%%%%%%%%%%%%%%%%%%%%%%%%%%%%%%%%%%%%%%%%%%%%%%%%%%%%%%%%%%%%%%%%%%%

Consider a system of $N$ identical particles with positions and momenta $x_1,\xi_1,\ldots,x_n,\xi_N\in\bR^d$. The $N$-particle phase space empirical measure of this particle system is
$$
\mu_{(X_N,\Xi_N)}:=\frac1N\sum_{k=1}^N\de_{x_k,\xi_k}\in\cP(\bR^d\times\bR^d)\,.
$$
In other words, the $N$-particle phase space empirical measure is a symmetric function of the $N$-tuple of positions and momenta of the particles with values in the set of probability measures on the \textit{single-particle} phase space.

\begin{Thm}[Klimontovich]
The two conditions below are equivalent

\smallskip
\noindent
(a) the vector-valued function $\bR\ni t\mapsto(X_N,\Xi_N)(t)\in\bR^{2dN}$ is a solution of Newton's differential system of motion equations, and

\noindent
(b) the measure-valued function $\bR\ni t\mapsto\mu_{(X_N,\Xi_N)(t)}\in\cP(\bR^{2d})$ is weak solution of the Vlasov equation that is weakly continuous in time.
\end{Thm}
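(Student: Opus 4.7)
The whole proof turns on the identity, valid for the empirical measure $\mu_N:=\mu_{(X_N,\Xi_N)}$,
\[
\nabla_x V_{\mu_N}(t,x_j)=\frac1N\sum_{k=1}^N\nabla V(x_j-x_k)=\frac1N\sum_{k\neq j}\nabla V(x_j-x_k),
\]
where the last equality uses (H1): $V$ even forces $\nabla V(0)=0$. Hence the Vlasov characteristic field $(\xi,-\nabla_x V_{\mu_N}(t,x))$ agrees at each atom $(x_j,\xi_j)$ with the right-hand side of the Newton system \eqref{NbodyClass}. Everything else in the proof is bookkeeping around this identity.

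For (a)$\Rightarrow$(b), I would pick $\phi\in C^1_c(\bR\times\bR^{2d})$, rewrite $\int\phi(t,\cdot)\,\mu_N(t,dxd\xi)=\tfrac1N\sum_j\phi(t,x_j(t),\xi_j(t))$, differentiate via the chain rule, and insert $\dot x_j=\xi_j$ together with $\dot\xi_j$ from \eqref{NbodyClass}. The identity above rewrites the resulting sum as $\int(\partial_t\phi+\xi\cdot\nabla_x\phi-\nabla_x V_{\mu_N}\cdot\nabla_\xi\phi)\,\mu_N(t,dxd\xi)$, which is precisely the weak form of the Vlasov equation; weak continuity in $t$ follows from continuity of the trajectories.

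For (b)$\Rightarrow$(a), I would test the weak formulation against $\phi(t,x,\xi)=\chi(t)\psi(x,\xi)$ with $\chi\in C^1_c(\bR)$ and $\psi\in C^1_c(\bR^{2d})$, obtaining
\[
\frac{d}{dt}\sum_j\psi(x_j(t),\xi_j(t))=\sum_j\bigl(\xi_j\cdot\nabla_x\psi(x_j,\xi_j)-\nabla_x V_{\mu_N}(t,x_j)\cdot\nabla_\xi\psi(x_j,\xi_j)\bigr)
\]
in $\cD'(\bR)$. Localizing $\psi$ near a single trajectory and letting $\psi$ vary over a rich family should extract, for each $j$, the ODE $\dot x_j=\xi_j$, $\dot\xi_j=-\nabla_x V_{\mu_N}(t,x_j)$; the opening identity then converts the latter back into \eqref{NbodyClass}. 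The main obstacle is precisely this localization step: a priori labels could jump between atoms, and atoms could coincide. Both issues are handled by (H2): as a finite average of translates of $\nabla V$, the map $x\mapsto\nabla_x V_{\mu_N}(t,x)$ is Lipschitz with constant independent of $t$, so the characteristic ODE is uniquely solvable, which forces each labeled trajectory to coincide locally in time with a characteristic, keeps initially distinct atoms distinct, and lets the case of initial coincidences be recovered by a continuity argument exploiting the symmetry $V(z)=V(-z)$.
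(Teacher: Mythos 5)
Your proof is correct and takes essentially the same route as the paper: both hinge on the observation that (H1)--(H2) give $\grad V(0)=0$, hence the self-consistent mean-field force of the empirical measure evaluated at each atom $(x_j,\xi_j)$ equals the scaled Newtonian force, so the atoms ride the Vlasov characteristics. You are more explicit than the paper on the (b)$\Rightarrow$(a) direction --- the localization, relabeling and coincidence-of-atoms issues --- which the paper dispatches in a single sentence via ``the method of characteristics,'' but the underlying argument is the same.
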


\begin{proof}
Since $V\in C^1(\bR^d)$ (by (H2)) and $V$ is even (by (H1)), then $\grad V$ is odd, so that $\grad V(0)=0$. Therefore
$$
\ba
\frac1N\sum_{\genfrac{}{}{0pt}{3}{k=1}{k\not=j}}^N\grad V(x_j(t)-x_k(t))&=\frac1N\sum_{k=1}^N\grad V(x_j(t)-x_k(t))
\\
&=\int_{\bR^{2d}}\grad V(x_j(t)-z)\mu_{(X_N,\Xi_N)(t)}(dzd\zeta)
\ea
$$
Thus Newton's second law of motion for the $j$th particle is the defining differential system for the characteristic curves of the Vlasov equation, localized at $(x_j(t),\xi_j(t))$. This observation and the method of characteristics immediately 
imply the announced result.
\end{proof}

\smallskip
Therefore, let $f^{in}$ be a probability density on $\bR^d\times\bR^d$, and choose a $N$-tuple of positions $X_N^{in}$ and momenta $\Xi_N^{in}$ so that
$$
\mu_{(X_N^{in},\Xi_N^{in})}\to f^{in}
$$
weakly in $\cP(\bR^{2d})$ as $N\to\infty$. Denoting by $\Phi(t,\cdot)$ the flow generated by the system of Newton's motion equations, Klimontovich's theorem reduces the question of whether
$$
\mu_{\Phi(t/N,X_N^{in},\Xi_N^{in})}\to f(t)
$$
weakly in $\cP(\bR^{2d})$ as $N\to\infty$ for each $t\ge 0$ to the continuous dependence of the solution of the Vlasov equation in terms of its initial data for the weak topology of probability measures. This has been observed by Braun
and Hepp in \cite{BraunHepp}.

\subsection{Wasserstein Distances}

%%%%%%%%%%%%%%%%%%%%%%%%%%%%%%%%%%%%%%%%%%%%%%%%%%%%%%%%%%%%%%%%%%%%%%%%%%%%%%%%%%%%%%%%%%%%%%%%%%%%%%%%%%%%%%%%%%%%%%%%%%%%%%%

In his remarkable paper \cite{Dobrushin}, Dobrushin has improved the weak compactness argument used in \cite{BraunHepp} (see also \cite{NeunWick} for a first approach to the same problem), and obtained a convergence rate formulated 
in terms of the Wasserstein distance of exponent one. Before stating Dobrushin's result, we first recall some basic facts on Wasserstein distances. The books \cite{VillaniAMS,AmbrosioGS} are excellent references for a more detailed study 
of these distances.

\begin{Def}[Couplings of probability measures]
For each pair $\mu,\nu\in\cP(\bR^n)$, a coupling of $\mu$ and $\nu$ is a probability measure $\pi\in\cP(\bR^n\times\bR^n)$ such that
$$
\iint_{\bR^n\times\bR^n}(\phi(x)+\psi(y))\si(dxdy)=\int_{\bR^n}\phi(x)\mu(dx)+\int_{\bR^n}\psi(y)\nu(dy)
$$
\end{Def}

The set of couplings of $\mu,\nu$ will be henceforth denoted $\cC(\mu,\nu)$; it is an easy exercise (left to the reader) to check that
$$
\mu,\nu\in\cP_p(\bR^n)\implies\cC(\mu,\nu)\subset\cP_p(\bR^n\times\bR^n)
$$
(In the literature on optimal transport, couplings are very often referred to as ``transport plans''.)

\begin{Def}[Monge-Kantorovich or Wasserstein Distances]
Let $p\in[1,\infty)$; for each $\mu,\nu\in\cP_p(\bR^n)$, the Monge-Kantorovich, or Wasserstein distance of exponent $p$ between $\mu$ and $\nu$ is
$$
\MKp(\mu,\nu)=\inf_{\pi\in\cC(\mu,\nu)}\left(\iint_{\bR^n\times\bR^n}|x-y|^p\pi(dxdy)\right)^{1/p}
$$
\end{Def}

A fundamental result on these distances is the following formula, which is a special case of \textit{Monge-Kantorovich duality} (see Theorem 1.3 in chapter 1 of \cite{VillaniAMS}):
$$
\MKp(\mu,\nu)^p=\sup_{\genfrac{}{}{0pt}{3}{\phi(x)+\psi(y)\le|x-y|^p}{\phi,\psi\in C_b(\bR^n)}}\left(\int_{\bR^n}\phi(x)\mu(dx)+\int_{\bR^n}\psi(x)\nu(dx)\right)\,.
$$
In particular
$$
\MK(\mu,\nu)=\sup_{\genfrac{}{}{0pt}{3}{\Lip(\phi)\le 1}{\phi\in C_b(\bR^n)}}\left|\int_{\bR^n}\phi(z)\mu(dz)-\int_{\bR^n}\phi(z)\nu(dz)\right|
$$
(this is the Kantorovich-Rubinstein theorem, stated as Theorem 1.14 in \cite{VillaniAMS}).

\subsection{Dobrushin's Inequality}

%%%%%%%%%%%%%%%%%%%%%%%%%%%%%%%%%%%%%%%%%%%%%%%%%%%%%%%%%%%%%%%%%%%%%%%%%%%%%%%%%%%%%%%%%%%%%%%%%%%%%%%%%%%%%%%%%%%%%%%%%%%%%%%

Let us return to the derivation of the Vlasov equation from the system of Newton's second law of motion written for each particle.

\begin{Thm}[Dobrushin's inequality]
Assume that $V$ satisfies (H1)-(H2). Let $f^{in}\in\cP_1(\bR^{2d})$, and let $f$ be the (weak) solution of the Vlasov equation with initial data $f^{in}$.  Let $t\mapsto(X_N,\Xi_N)(t)$ be the solution of Newton's scaled differential system \eqref{NBodyClass}
with initial data 
$(X^{in}_N,\Xi^{in}_N)$. Then, for each $t\ge 0$,
$$
\MK(\mu_{(X_N,\Xi_N)(t)},f(t,\cdot))\le\MK(\mu_{(X^{in}_N,\Xi^{in}_N)},f^{in})e^{t+2\Lip(\grad V)t}\,.
$$
\end{Thm}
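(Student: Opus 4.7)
The plan is to view both the empirical measure $\mu_N(t):=\mu_{(X_N,\Xi_N)(t)}$ and the Vlasov solution $f(t,\cdot)$ as pushforwards of their initial data by characteristic flows of two Vlasov equations. By Klimontovich's theorem, $\mu_N(t)=(T_{N,t})_\#\mu_N^{in}$, where $T_{N,t}$ is the flow on $\bR^{2d}$ generated by the time-dependent vector field $(x,\xi)\mapsto(\xi,-\grad V_{\mu_N}(t,x))$; similarly $f(t,\cdot)=(T_t)_\# f^{in}$ for the self-consistent field $V_f$. Choosing an optimal coupling $\pi^{in}\in\cC(\mu_N^{in},f^{in})$ for $\MK(\mu_N^{in},f^{in})$, the map $(T_{N,t},T_t)$ pushes $\pi^{in}$ forward to a (not necessarily optimal) coupling of $\mu_N(t)$ and $f(t,\cdot)$, and hence
$$
\MK(\mu_N(t),f(t,\cdot))\le D(t):=\iint|T_{N,t}(z_1)-T_t(z_2)|\,\pi^{in}(dz_1dz_2)\,.
$$
Writing $T_{N,t}(z_1)=(X(t),\Xi(t))$ and $T_t(z_2)=(X'(t),\Xi'(t))$ and decomposing $D=D_x+D_\xi$ according to the position-momentum splitting, I aim to bound $D(t)$ by a Gr\"onwall argument.

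The kinematic equations $\dot X=\Xi$ and $\dot X'=\Xi'$ along the characteristics immediately give $\frac{d}{dt}D_x(t)\le D_\xi(t)$. The bound on $\frac{d}{dt}D_\xi(t)$ is the crucial step. A direct comparison of the two mean-field forces runs into the difficulty that $\grad V_{\mu_N}(t,\cdot)$ and $\grad V_f(t,\cdot)$ are integrals against \emph{two different} measures, so the Lipschitz hypothesis on $\grad V$ alone cannot close the estimate. The key trick is to exploit that $\pi^{in}$ has marginals $\mu_N^{in}$ and $f^{in}$ in order to rewrite both gradients as integrals against the \emph{same} measure:
$$
\grad V_{\mu_N}(t,y)=\iint\grad V(y-T_{N,t}(w_1)_x)\,\pi^{in}(dw_1dw_2)\,,
$$
$$
\grad V_f(t,y')=\iint\grad V(y'-T_t(w_2)_x)\,\pi^{in}(dw_1dw_2)\,.
$$
Subtracting at $y=X(t)$ and $y'=X'(t)$, bounding the integrand by $\Lip(\grad V)\bigl(|X(t)-X'(t)|+|T_{N,t}(w_1)_x-T_t(w_2)_x|\bigr)$ via the triangle inequality, and integrating first in $(w_1,w_2)$ and then against $\pi^{in}(dz_1dz_2)$ yields $\frac{d}{dt}D_\xi(t)\le 2\Lip(\grad V)\,D_x(t)$.

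Combining the two estimates,
$$
\frac{d}{dt}(D_x+D_\xi)\le D_\xi+2\Lip(\grad V)\,D_x\le(1+2\Lip(\grad V))(D_x+D_\xi)\,,
$$
so Gr\"onwall's lemma gives $D(t)\le D(0)\,e^{(1+2\Lip(\grad V))t}$. By optimality of $\pi^{in}$, $D(0)=\MK(\mu_N^{in},f^{in})$, which is the announced bound.

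The main obstacle is precisely the $D_\xi$ step: a naive attempt to bound the difference of mean-field forces by $\MK(\mu_N(t),f(t,\cdot))$ via Kantorovich-Rubinstein duality produces a differential inequality whose right-hand side already involves the quantity one is trying to control, leading to a circular estimate. The whole point of Dobrushin's argument is to freeze the coupling $\pi^{in}$ at the initial time and transport both particle clouds deterministically by their respective flows, so that both mean-field potentials can be written as integrals against the single common measure $\pi^{in}$, reducing the problem to a purely pointwise bound handled cleanly by the Lipschitz hypothesis on $\grad V$.
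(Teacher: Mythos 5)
Your proof is correct and rests on the same fundamental strategy as the paper's: fix an optimal coupling at $t=0$, transport it by the product of the two characteristic flows, and close a Gr\"onwall loop on the resulting transport cost. Presenting the argument in terms of pushforwards by $(T_{N,t},T_t)$ is equivalent to what the paper does via the Liouville equation with decoupled Hamiltonian $\tfrac12|\xi|^2+\tfrac12|\eta|^2+V_f(t,x)+V_g(t,y)$: that Hamiltonian system has no cross-terms, so its flow is precisely the product flow, and $h(t)=(T_{N,t},T_t)_\#h^{in}$.

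Where you genuinely deviate is in the estimate of the force difference. The paper inserts and subtracts $\grad_y V_f(t,y)$, bounds the ``same measure, different argument'' piece by $\Lip(\grad V)|x-y|$ pointwise, and bounds the ``same argument, different measure'' piece by $\Lip(\grad V)\MK(f(t),g(t))$ via Kantorovich--Rubinstein duality; it then closes the loop by the elementary remark $\MK(f(t),g(t))\le D(t)$, since $h(t)$ is a coupling of $f(t)$ and $g(t)$. You instead rewrite \emph{both} mean-field forces as integrals against the single measure $\pi^{in}$ (using the marginal property of a coupling), so a single application of the Lipschitz bound yields $|\grad V_{\mu_N}(t,X)-\grad V_f(t,X')|\le\Lip(\grad V)\bigl(|X-X'|+D_x(t)\bigr)$ directly, and the Kantorovich--Rubinstein step never appears. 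Both routes are correct and give the claimed constant (your version in fact yields the slightly sharper rate $\max(1,2\Lip(\grad V))$ before you relax to $1+2\Lip(\grad V)$).

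One small caution about your closing remark: the ``naive attempt'' you call circular is essentially what the paper does, and it is \emph{not} circular --- having the quantity $\MK(f(t),g(t))$ on the right-hand side is harmless precisely because it is dominated by the transported-coupling cost $D(t)$, which is what the Gr\"onwall lemma needs. Your alternative is a nice simplification that avoids invoking duality, but the paper's path through Kantorovich--Rubinstein is equally legitimate. You also implicitly skip the paper's Step 1 (propagation of first moments), but this is acceptable here since finiteness of $D(t)$ for all $t$ follows a posteriori from the linear differential inequality and the finiteness of $D(0)$.
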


The proof of Dobrushin's inequality is important to understand how the Monge-Kantorovich-Wasserstein distances can be used in the analysis of PDEs, and we shall present it in detail.

\begin{proof} Of course, there is nothing special with the choice of an empirical measure as one of the Vlasov solutions. Therefore, let $f^{in}$ and $g^{in}\in\cP_1(\bR^{2d})$, and let $f$ and $g$ be the solutions of the Vlasov equation
$$
\ba
{}&\d_tf+\{\tfrac12|\xi|^2+V_f(t,x),f\}\!=0\,,\qquad f\rstr_{t=0}=f^{in}\,,
\\
&\d_tg+\{\tfrac12|\eta|^2+V_g(t,y),\,g\}=0\,,\qquad g\rstr_{t=0}=g^{in}\,.
\ea
$$
We have used here the notion of \textit{Poisson bracket}, which is classical in rational mechanics, and whose definition is recalled below:
$$
\{\phi,\psi\}(z,\zeta):=\grad_\zeta\phi(z,\zeta)\cdot\grad_z\psi(z,\zeta)-\grad_z\phi(z,\zeta)\cdot\grad_\zeta\psi(z,\zeta)\,,
$$
for all $\phi,\psi\in C^1(\bR^n_z\times\bR^n_\zeta)$.

\smallskip
\noindent
\textit{Step 1: propagation of 1st order moment.} Since we seek to estimate the Wasserstein distance of exponent $1$ between two solutions of the Vlasov equation, we first prove that these solutions have finite first order moments for all times.

\begin{Lem} \lb{L-Lem1}
The weak solution $f\in C([0,+\infty),w-\cP(\bR^{2d}))$ satisfies
$$
M_1(t):=\int_{\bR^{2d}}\!(|x|\!+\!|\xi|)f(t,dxd\xi)\le M_1(0)e^{t(\max(1,\Lip(\grad V))+\Lip(\grad V))}
$$
for all $t\ge 0$.
\end{Lem}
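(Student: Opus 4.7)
The plan is to use the Lagrangian (characteristic) representation of the weak solution and then apply Grönwall's lemma.

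First I would represent $f$ as the push-forward of $f^{in}$ along the mean-field characteristic flow. Because $\grad V\in L^\infty\cap\Lip(\bR^d)$, the mean-field potential $V_f(t,x)=(V\star\rho_f(t,\cdot))(x)$ (where $\rho_f$ is the spatial marginal of $f$) has a bounded and uniformly Lipschitz gradient in $x$, locally uniformly in $t$. Hence the non-autonomous Hamiltonian system
$$
\dot X(t)=\Xi(t)\,,\qquad\dot\Xi(t)=-\grad V_f(t,X(t))
$$
defines a global flow $\Phi_f(t,\cdot):\bR^{2d}\to\bR^{2d}$, and by the method of characteristics (and uniqueness for weak measure-valued solutions) one has $f(t,\cdot)=\Phi_f(t,\cdot)_\#f^{in}$. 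In particular
$$
M_1(t)=\int_{\bR^{2d}}\bigl(|X(t,x,\xi)|+|\Xi(t,x,\xi)|\bigr)\,f^{in}(dxd\xi)\,.
$$

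Next I would estimate pointwise along each characteristic. Setting $L:=\Lip(\grad V)$ and using that $V$ is even of class $C^1$, which forces $\grad V(0)=0$, the Lipschitz bound on $\grad V$ gives
$$
|\grad V_f(t,x)|\le\int_{\bR^{2d}}|\grad V(x-y)-\grad V(0)|\,f(t,dyd\eta)\le L\bigl(|x|+M^x_1(t)\bigr)\,,
$$
where $M^x_1(t):=\int|y|\,f(t,dyd\eta)\le M_1(t)$. Therefore, along a characteristic,
$$
\frac{d}{dt}\bigl(|X(t)|+|\Xi(t)|\bigr)\le|\Xi(t)|+L|X(t)|+L\,M^x_1(t)\le\max(1,L)\bigl(|X(t)|+|\Xi(t)|\bigr)+L\,M^x_1(t)\,.
$$
Integrating this inequality against $f^{in}$ in $(x,\xi)$ and using $M^x_1(t)\le M_1(t)$ yields
$$
\dot M_1(t)\le\bigl(\max(1,L)+L\bigr)M_1(t)\,,
$$
whence Grönwall's lemma produces the announced bound.

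The main technical obstacle is justifying the steps that treat $M_1(t)$ as if it were the integral of a smooth observable against $f(t)$: the weight $|x|+|\xi|$ is neither bounded nor $C^1$, so one cannot directly test the weak Vlasov equation against it. I would handle this by the pushforward route above (which replaces differentiation of $M_1(t)$ by differentiation of $|X|+|\Xi|$ along characteristics where the derivatives are explicit), or equivalently by approximating $|z|$ by $\sqrt{|z|^2+\eps^2}-\eps$ and a spatial cutoff $\chi_R$, testing the Vlasov equation against $(\sqrt{|x|^2+\eps^2}+\sqrt{|\xi|^2+\eps^2})\chi_R(x,\xi)$, and then letting $R\to\infty$ and $\eps\to 0$ using the $L^\infty$ bound on $\grad V_f$ to control the cutoff error. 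Everything else is routine; only the first-moment propagation step really needs care, since subsequent lectures will use the finiteness of $M_1(t)$ freely.
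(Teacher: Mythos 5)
Your proof is correct and gives the same constant $\max(1,\Lip(\grad V))+\Lip(\grad V)$ as the paper, but it is organized differently. The paper works Eulerian: it multiplies the Vlasov equation by $|x|+|\xi|$, integrates by parts to get $\dot M_1(t)=\int\{\tfrac12|\xi|^2+V_f,|x|+|\xi|\}f(t,dxd\xi)$, and then bounds $|\grad_x V_f(t,x)|\le\Lip(\grad V)(|x|+M_1(t))$ by splitting off $\grad_xV_f(t,0)$ and using $\grad V(0)=0$ (exactly the estimate you also use, just written as $|\grad V_f(t,x)-\grad V_f(t,0)|+|\grad V_f(t,0)|$). You instead go Lagrangian: push forward $f^{in}$ along the mean-field characteristic flow, differentiate $|X(t)|+|\Xi(t)|$ along trajectories, and then integrate against $f^{in}$. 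Both routes invoke the same two ingredients (oddness of $\grad V$ so that $\grad V(0)=0$, plus the Lipschitz bound) and finish with Gr\"onwall. What your route buys is rigor on a point the paper passes over silently: $|x|+|\xi|$ is neither $C^1$ nor bounded, so the paper's ``multiply and integrate by parts'' is formal; the pushforward representation replaces it by pointwise differentiation along characteristics, where $\tfrac{d}{dt}|X(t)|\le|\dot X(t)|$ holds a.e.\ even through zeros of $X$. Your backup plan (cutoff and $\sqrt{|z|^2+\eps^2}$ regularization, using $\grad V\in L^\infty$ to control the cutoff error) is the standard Eulerian way to make the paper's computation rigorous, so either route closes the gap. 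One small justification you leave implicit is the interchange of $\tfrac{d}{dt}$ with the $f^{in}$-integral; this is standard once one has an a priori Gr\"onwall bound on $|X(t)|+|\Xi(t)|$ or on $M_1$ from the pointwise inequality itself.
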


\begin{proof}[Proof of Lemma \ref{L-Lem1}] Multiplying both sides of the Vlasov equation by $|x|+|\xi|$, and integrating by parts shows that
$$
\ba
\dot M_1(t)=\int_{\bR^{2d}}\!\{\tfrac12|\xi|^2\!\!+\!\!V_f(t,x),|x|\!+\!|\xi|\}f(t,dxd\xi)
\\
=\int_{\bR^{2d}}(\xi\!\cdot\!\tfrac{x}{|x|}\!-\!\grad V_f(t,x)\!\cdot\!\tfrac{\xi}{|\xi|})f(t,dxd\xi)
\\
\le\int_{\bR^{2d}}(|\xi|+|\grad V_f(t,x)|)f(t,dxd\xi)&\,.
\ea
$$
Observe that
$$
\ba
|\grad_xV_f(t,x)-\grad_xV_f(t,0)|
\\
\le\int_{\bR^{2d}}|\grad V(x-z)-\grad V(-z))|f(t,dzd\zeta)\le\Lip(\grad V)|x|&\,,
\ea
$$
since $f(t,\cdot,\cdot)$ is a probability measure, while
$$
\ba
\grad V(0)=0\implies|\grad_xV_f(t,0)|\le\int_{\bR^{2d}}|\grad V(-z)|f(t,dzd\zeta)
\\
\le\Lip(\grad V)\int_{\bR^{2d}}|z|f(t,dzd\zeta)\le\Lip(\grad V)M_1(t)&\,.
\ea
$$
Hence
$$
\ba
\dot M_1(t)\le\int_{\bR^{2d}}(|\xi|+\Lip(\grad V)(|x|+M_1(t)))f(t,dxd\xi)
\\
\le(\max(1,\Lip(\grad V))+\Lip(\grad V))M_1(t)&\,,
\ea
$$
and the sought inequality follows from Gronwall's lemma.
\end{proof}

\smallskip
\noindent
\textit{Step 2: propagation of couplings.} Let $h^{in}\in\cC(f^{in},g^{in})$; we seek to construct an element of $\cC(f(t),g(t))$ for all $t\ge 0$. One way of doing this (by no means the only one) is provided by the following lemma.

\begin{Lem}\lb{L-Lem2}
Let $h$ be the weak solution of the Liouville equation in $\bR^{2d}_{x,\xi}\times\bR^{2d}_{y,\eta}$
$$
\d_th+\{\tfrac12|\xi|^2+\tfrac12|\eta|^2+V_f(t,x)+V_g(t,y),h\}=0\,,\qquad h\rstr_{t=0}=h^{in}\,,
$$
where $h^{in}\in\cC(f^{in},g^{in})$. The Poisson bracket used here corresponds to choosing $n=2d$, with $z=(x,y)$ and $\zeta=(\xi,\eta)$. Then
$$
h^{in}\in\cC(f^{in},g^{in})\implies h(t)\in\cC(f(t),g(t))\quad\text{ for each }t\ge 0\,.
$$
\end{Lem}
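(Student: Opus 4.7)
The plan is to represent $h(t)$, $f(t)$, and $g(t)$ as pushforwards under explicit Hamiltonian flows and then deduce the marginal condition by a change of variables. The crucial structural observation is that the Hamiltonian $H(t,x,\xi,y,\eta):=\tfrac12|\xi|^2+\tfrac12|\eta|^2+V_f(t,x)+V_g(t,y)$ contains no cross terms mixing $(x,\xi)$ and $(y,\eta)$, so the associated Hamiltonian vector field on $\bR^{4d}$ is a direct product. Concretely, the characteristic system splits into
\[
\dot x=\xi,\ \ \dot\xi=-\grad_xV_f(t,x)\qquad\text{and}\qquad\dot y=\eta,\ \ \dot\eta=-\grad_yV_g(t,y),
\]
and since $\grad V\in\Lip(\bR^d)$ by (H2), each of these vector fields is globally Lipschitz in its spatial variables uniformly in $t\ge 0$ (a bound already exploited in the proof of Lemma \ref{L-Lem1}). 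Consequently the two flows $\Phi_f(t,\cdot)$ and $\Phi_g(t,\cdot)$ on $\bR^{2d}$ are globally defined.

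First I would apply the method of characteristics to each Vlasov equation separately, obtaining $f(t)=\Phi_f(t,\cdot)_\#f^{in}$ and $g(t)=\Phi_g(t,\cdot)_\#g^{in}$; uniqueness of the weak solution recalled earlier ensures that these are the solutions in play. Next I would apply the method of characteristics to the Liouville (first-order transport) equation for $h$: its Hamiltonian vector field is Lipschitz on $\bR^{4d}$ and is integrated by the product flow $\Psi(t,\cdot):=\Phi_f(t,\cdot)\times\Phi_g(t,\cdot)$, so the unique weak solution with initial datum $h^{in}$ is $h(t)=\Psi(t,\cdot)_\#h^{in}$.

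The marginal check is then purely mechanical. For any $\phi\in C_b(\bR^{2d})$,
\[
\iint_{\bR^{4d}}\phi(x,\xi)\,h(t,dxd\xi dyd\eta)=\iint_{\bR^{4d}}\phi\bigl(\Phi_f(t,x,\xi)\bigr)\,h^{in}(dxd\xi dyd\eta),
\]
and since the integrand on the right-hand side depends only on $(x,\xi)$, this reduces to $\int_{\bR^{2d}}\phi(\Phi_f(t,x,\xi))f^{in}(dxd\xi)=\int_{\bR^{2d}}\phi(x,\xi)f(t,dxd\xi)$, using the coupling property $h^{in}\in\cC(f^{in},g^{in})$ and the first step. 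The symmetric computation with a test function of $(y,\eta)$ yields the second marginal $g(t)$, and hence $h(t)\in\cC(f(t),g(t))$.

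The only real subtlety is making rigorous the identification $h(t)=\Psi(t,\cdot)_\#h^{in}$ in the measure-valued setting. Since the vector field on $\bR^{4d}$ is Lipschitz in the space variables uniformly on compact time intervals, uniqueness of the weak (measure-valued) solution of the associated linear transport equation is standard; one may either invoke DiPerna--Lions theory, or argue directly by observing that for any $\phi\in C_b(\bR^{4d})$ the function $t\mapsto\phi\circ\Psi(t,\cdot)^{-1}$ provides a valid test function against which both $h(t)$ and $\Psi(t,\cdot)_\#h^{in}$ give the same value. Once this uniqueness is granted, the product structure of $\Psi$ turns the marginal identities into a tautology, completing the proof.
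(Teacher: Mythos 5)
Your proof is correct, and takes a somewhat more explicit route than the paper's. The paper stays entirely at the level of the weak formulation: testing the Liouville equation for $h$ against $\phi\in C^1_c(\bR^{2d}_{x,\xi})$, the Poisson bracket terms involving $(y,\eta)$ vanish because $\phi$ is independent of those variables, so the first marginal $\int h(t)\,dyd\eta$ is a weak solution of the linear Liouville equation with Hamiltonian $\tfrac12|\xi|^2+V_f(t,x)$ and initial datum $f^{in}$, and uniqueness for that linear problem forces it to equal $f(t)$ (symmetrically for the second marginal). You instead identify $h(t)$ as the pushforward of $h^{in}$ under the product flow $\Psi=\Phi_f\times\Phi_g$ and read off the marginals by a change of variables together with the coupling property of $h^{in}$. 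Both arguments rest on the same two structural facts (the Hamiltonian on $\bR^{4d}$ has no cross terms, and the linear transport equation with Lipschitz mean-field coefficient has a unique measure-valued weak solution), but the paper extracts the marginal identity directly from the weak formulation without ever constructing the flow maps, whereas your route first requires justifying the representation $h(t)=\Psi(t,\cdot)_\#h^{in}$, which makes the mechanism more concrete at the cost of an extra step. Two small remarks: invoking DiPerna--Lions is heavier than needed, since (H2) makes the mean-field vector field globally Lipschitz in space and the classical Cauchy--Lipschitz flow already yields uniqueness, as you yourself note; and since $\grad_x V_f(t,\cdot)$ is only Lipschitz (not $C^1$), the flow $\Psi(t,\cdot)^{-1}$ and hence the candidate test function $\phi\circ\Psi(t,\cdot)^{-1}$ are Lipschitz rather than $C^1$, so your direct uniqueness sketch needs either a mollification of the flow or the observation that Lipschitz test functions are admissible in the weak formulation.
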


\begin{proof}[Proof of Lemma \ref{L-Lem2}] For each $\phi\in C^1_c(\bR^{2d})$, one has
$$
\ba
\frac{d}{dt}\int_{\bR^{4d}}\phi(x,\xi)h(t,dxd\xi dyd\eta)
\\
=\int_{\bR^{4d}}\{\tfrac12|\xi|^2+\tfrac12|\eta|^2+V_f(t,x)+V_g(t,y),\phi(x,\xi)\}h(t,dxd\xi dyd\eta)
\\
=\int_{\bR^{4d}}\{\tfrac12|\xi|^2+V_f(t,x),\phi(x,\xi)\}h(t,dxd\xi dyd\eta)&\,.
\ea
$$
By uniqueness of the solution of the Liouville equation with  initial data $f^{in}$ with Hamiltonian 
$$
\tfrac12|\xi|^2+V_f(t,x)\,,
$$
this implies that the first marginal of $h(t)$ is 
$$
\int_{\bR^{2d}}h(t)dyd\eta=f(t)\,.
$$
\end{proof}

\smallskip
\noindent
\textit{Step 3: growth of the Monge-Kantorovich distance.} With $h$ defined in Step 2, consider the quantity
$$
D(t):=\int_{\bR^{4d}}(|x-y|+|\xi-\eta|)h(t,dxd\xi dyd\eta)\,.
$$
Then
$$
\dot D(t)=\int_{\bR^{4d}}B(t,x,\xi,y,\eta)h(t,dxd\xi dyd\eta)\,,
$$
with
$$
\ba
B(t,x,\xi,y,\eta)\!=&\{\tfrac12|\xi|^2\!+\!\tfrac12|\eta|^2\!+\!V_f(t,x)\!+\!V_g(t,y),|x\!-\!y|\!+\!|\xi\!-\!\eta|\}
\\
=&(\xi-\eta)\cdot\tfrac{x-y}{|x-y|}\!-\!(\grad_xV_f(t,x)\!-\!\grad_yV_g(t,y))\cdot\tfrac{\xi-\eta}{|\xi-\eta|}
\\
\le&|\xi-\eta|+|\grad_xV_f(t,x)\!-\!\grad_yV_g(t,y)|\,.
\ea
$$
Now
$$
\ba
|\grad_xV_f(t,x)\!-\!\grad_yV_g(t,y)|
\\
\le\int_{\bR^{2d}}|\grad V(x-z)-\grad V(y-z)|f(t,dzd\zeta)
\\
+\left|\int_{\bR^{2d}}\grad V(y-z)f(t,dzd\zeta)-\int_{\bR^{2d}}\grad V(y-z)g(t,dzd\zeta)\right|
\\
\le\Lip(\grad V)|x-y|+\Lip(\grad V)\MK(f(t),g(t))&\,,
\ea
$$
since $f(t,\cdot,\ cdot)$ is a probability measure, so that
$$
B(t,x,\xi,y,\eta)\le|\xi-\eta|+\Lip(\grad V)|x-y|+\Lip(\grad V)\MK(f(t),g(t))\,.
$$
Hence
$$
\ba
\dot D(t)\le&\int_{\bR^{4d}}(|\xi-\eta|+\Lip(\grad V)|x-y|)h(t,dxd\xi dyd\eta)
\\
&+\Lip(\grad V)\MK(f(t),g(t)))
\\
\le&\max(1,\Lip(\grad V))D(t)+\Lip(\grad V)\MK(f(t),g(t)))\,.
\ea
$$
By Lemma \ref{L-Lem2}, for each $t\ge 0$, one has $h(t)\in\cC(f(t),g(t))$, and hence
$$
\MK(f(t),g(t))\le D(t)\,,
$$
so that
$$
\dot D(t)\le(\max(1,\Lip(\grad V))+\Lip(\grad V))D(t)\,.
$$
On the other hand, by Gronwall's lemma,
$$
\MK(f(t),g(t)))\le D(t)\le D(0)e^{t(\max(1,\Lip(\grad V))+\Lip(\grad V))}\,.
$$
Minimizing the last right hand side in $h^{in}\in\cC(f^{in},g^{in})$ implies that
$$
\MK(f(t),g(t)))\le \MK(f^{in},g^{in})e^{t(\max(1,\Lip(\grad V))+\Lip(\grad V))}\,.
$$
\end{proof}

\subsection{Applications of Dobrushin's Inequality to the Mean-Field Limit}

%%%%%%%%%%%%%%%%%%%%%%%%%%%%%%%%%%%%%%%%%%%%%%%%%%%%%%%%%%%%%%%%%%%%%%%%%%%%%%%%%%%%%%%%%%%%%%%%%%%%%%%%%%%%%%%%%%%%%%%%%%%%%%%

By Theorem 7.12 in chapter 7 of \cite{VillaniAMS}, the Monge-Kantorovich distance $\MK$ metrizes the weak topology of probability measures on $\cP_1(\bR^{2d})$ --- see \cite{VillaniAMS} for a more precise statement, including
the convergence of linearly growing test functions at infinity. By a density argument, pick a sequence of initial position and momenta $(X_N^{in},\Xi_N^{in})$ such that
$$
\mu_{X_N^{in},\Xi_N^{in}}\to f^{in}\text{ weakly in }\cP(\bR^{2d})
$$
and
$$
\frac1N\sum_{j=1}^N(|x_{j,N}^{in}|+|\xi_{j,N}^{in}|)\to\int_{\bR^{2d}}(|x|+|\xi|)f^{in}(dxd\xi)
$$
as $N\to\infty$. By Theorem 7.12 in chapter 7 of \cite{VillaniAMS}, 
$$
\MK(\mu_{X_N^{in},\Xi_N^{in}},f^{in})\to 0\quad\text{ as }N\to\infty\,,
$$
and Dobrushin's inequality implies that
$$
\MK(\mu_{X_N(t),\Xi_N(t)},f(t))\to 0\quad\text{ for each }t\ge 0\text{ as }N\to\infty\,.
$$
This justifies the mean-field limit in classical mechanics for identical point particles interacting via a potential $V$ satisfying assumptions (H1)-(H2).

\smallskip
However, one can improve this result and obtain a quantitative statement with a convergence rate, provided that one can estimate the speed of convergence of the initial empirical measure $\mu_{X_N^{in},\Xi_N^{in}}$ to $f^{in}$.
This can be done by using quantitative variants of the strong law of large numbers. The following bound has been obtained by Fournier and Guillin \cite{FournierGuillin}.

\begin{Thm}[Fournier-Guillin]
Assume that $f^{in}\in\cP_q(\bR^{2d})$ with $1<q\not=\tfrac{2d}{2d-1}$ and $d\ge 3$. Then
$$
\int_{\bR^{2dN}}\!\MK(\mu_{(X^{in}_N,\Xi^{in}_N)},f^{in})\prod_{j=1}^Nf^{in}(dx_jd\xi_j)\!\le\!CM_q^{\frac1q}\left(\frac1{N^{\frac1q}}\!+\!\frac1{N^{1-\frac1q}}\right)\,,
$$
where
$$
M_q:=\iint_{\bR^d\times\bR^d}(|x|+|\xi|)^qf^{in}(x,\xi)dxd\xi<\infty\,.
$$
\end{Thm}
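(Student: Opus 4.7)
The proof I would give follows the Fournier--Guillin strategy, which combines a multiscale (dyadic) representation of $\MK$, variance estimates for the empirical occupation of each dyadic cell, and moment truncation to control tails. I would organize the argument in four steps.

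First, I would establish a dyadic upper bound: for any two probability measures $\mu,\nu \in \cP_1(\bR^{2d})$,
$$
\MK(\mu,\nu) \le C_d \sum_{\ell \in \bZ} 2^{-\ell} \sum_{Q \in \cD_\ell} |\mu(Q)-\nu(Q)|,
$$
where $\cD_\ell$ is the standard dyadic partition of $\bR^{2d}$ at scale $2^{-\ell}$. This can be obtained from the Kantorovich--Rubinstein formula recalled in the excerpt by approximating an arbitrary $1$-Lipschitz test function by a telescoping sum of its conditional expectations on successive dyadic scales, so that each scale contributes at most $2^{-\ell}$ times the total variation of the corresponding difference of measures on $\cD_\ell$.

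Second, I would truncate: fix $R>0$ (to be chosen in terms of $N$ and $q$), and separate cubes contained in $B_R := \{|z|\le R\}$ from those meeting $B_R^c$. Markov's inequality applied to $|z|^q$ gives $f^{in}(B_R^c) \le M_q R^{-q}$, and the contribution of the tail mass to the transport cost is bounded, in expectation, by a multiple of $R\cdot M_q R^{-q}$, i.e. by $M_q R^{1-q}$ (using that the first moment of $\mu_{(X_N^{in},\Xi_N^{in})}$ concentrates around that of $f^{in}$).

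Third, for the local part, for each dyadic cube $Q$ the random variable $N\,\mu_{(X_N^{in},\Xi_N^{in})}(Q)$ is binomial with parameters $N$ and $p_Q := f^{in}(Q)$, so
$$
\bE\bigl|\mu_{(X_N^{in},\Xi_N^{in})}(Q) - f^{in}(Q)\bigr| \le \sqrt{p_Q/N}.
$$
Summing over the $\sim (R\,2^\ell)^{2d}$ cubes of $\cD_\ell$ meeting $B_R$ and applying Cauchy--Schwarz (together with $\sum_Q p_Q \le 1$) yields $\sum_{Q\subset B_R} \bE|\mu_N(Q) - f^{in}(Q)| \lesssim (R\,2^\ell)^d/\sqrt{N}$. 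Feeding this back into the dyadic bound gives
$$
\bE\bigl[\MK(\mu_{(X_N^{in},\Xi_N^{in})}, f^{in})\bigr] \lesssim \sum_{\ell=0}^{L} 2^{\ell(d-1)} R^d N^{-1/2} + M_q R^{1-q},
$$
after truncating the dyadic sum at the finest scale $L$ resolved by $N$ samples.

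Finally, I would optimize both $L$ (taking $2^{-L}$ comparable to the resolution limit $N^{-1/(2d)}$, so that geometric terms $2^{\ell(d-1)}$ are dominated by the last one) and $R$ (balancing the local variance term against the tail moment term). This tradeoff produces the announced rate, up to the exceptional exponent $q=\tfrac{2d}{2d-1}$ where the two contributions balance with a logarithmic correction, which is the reason this value is excluded.

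The hard part will be the bookkeeping in the last step: because $d\ge 3$ makes $2^{\ell(d-1)}$ grow rapidly, the cutoff $L$ must be chosen carefully, and the optimization of $R$ against $L$ and $N$ has to be performed with the Cauchy--Schwarz losses accounted for so that the bound comes out sharply in $M_q^{1/q}$ with the stated power of $N$. The conceptual content (dyadic representation plus Bernoulli variance plus moment truncation) is elementary; the sharpness is a matter of not wasting any factor in the summations.
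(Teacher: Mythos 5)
The paper does not actually give a proof of this theorem; it is cited verbatim from Fournier--Guillin, whose argument is precisely along the lines you sketch (Dudley-type dyadic representation of $\MK$, binomial variance per cube, moment truncation). So your outline captures the right three ingredients. However, there is a genuine gap in Steps 2--4: a single spatial cutoff $R$, combined with a single scale cutoff $L$, cannot produce the announced rate.

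Concretely: your in-$B_R$ estimate is $\lesssim R^d 2^{L(d-1)}N^{-1/2}$ from the resolved scales plus $\lesssim 2^{-L}$ from the trivial total-variation bound on unresolved scales $\ell>L$ (a term you do not write down but must keep). Balancing these gives $2^{-L}\sim RN^{-1/(2d)}$, so the in-$B_R$ part is $\lesssim RN^{-1/(2d)}$, and adding the tail $M_qR^{1-q}$ and optimizing over $R$ yields
$$
\bE\bigl[\MK(\mu_N,f^{in})\bigr]\;\lesssim\;M_q^{1/q}\,N^{-\frac{q-1}{2dq}}\,,
$$
which is strictly weaker than the Fournier--Guillin rate $M_q^{1/q}\bigl(N^{-1/(2d)}+N^{-(q-1)/q}\bigr)$ for every admissible $q$ (since $\frac{q-1}{2dq}<\frac{1}{2d}$ and $\frac{q-1}{2dq}<\frac{q-1}{q}$). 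Your set-up also cannot account for the excluded value $q=\frac{2d}{2d-1}$: the two terms $RN^{-1/(2d)}$ and $M_qR^{1-q}$ balance cleanly for every $q$ with no logarithmic degeneracy, which is another sign that the truncation is too crude. The fix is a \emph{multiscale} truncation: decompose $\bR^{2d}$ into dyadic annuli $A_n:=\{2^n\le|z|<2^{n+1}\}$, use the $q$-th moment to get the per-annulus mass bound $f^{in}(A_n)\lesssim M_q2^{-nq}$, run the dyadic/variance argument on each annulus with its own optimal resolution level $L_n$, and only then sum over $n$. It is exactly this scale-by-scale matching of resolution to local mass that produces the two competing exponents $\frac1{2d}$ and $\frac{q-1}{q}$, and their coincidence at $q=\frac{2d}{2d-1}$ is where the logarithm appears.

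One further remark: the exponent printed in these notes as $N^{-1/q}$ should read $N^{-1/(2d)}$ (this is the regime $p=1<D/2=d$ of Fournier--Guillin Theorem~1 with $D=2d$); the second term $N^{-(1-1/q)}=N^{-(q-1)/q}$ is correct, and the excluded value $q=\frac{2d}{2d-1}$ is precisely where these two exponents coincide.
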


\smallskip
Using both the Dobrushin inequality and the Fournier-Guillin bound leads to the following statement on the mean-field limit in classical mechanics.

\begin{Cor}
Let $f^{in}\in\cP_q(\bR^{2d})$ with $1<q\not=\tfrac{2d}{2d-1}$ and $d\ge 3$, and let $V$ satisfy (H1)-(H2). Let $f$ be the solution of the Vlasov equation with initial data $f\rstr_{t=0}=f^{in}$, and let $\Phi(t,\cdot)$ be the one-parameter flow 
\eqref{DefPhi}. Then
$$
\ba
\int_{\bR^{2dN}}\MK(\mu_{\Phi(t/N,X^{in}_N,\Xi^{in}_N)},f(t,\cdot))\prod_{j=1}^Nf^{in}(dx_jd\xi_j)
\\
\le
CM_q^{1/q}e^{t+2\Lip(\grad V)t}\left(N^{-\frac1q}+N^{-(1-\frac1q)}\right)&\,.
\ea
$$
\end{Cor}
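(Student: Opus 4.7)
\medskip
\noindent
\textbf{Proof proposal.} The statement is a straightforward combination of the two quantitative ingredients already established: Dobrushin's inequality (a deterministic, pointwise-in-initial-data estimate) and the Fournier--Guillin bound (a probabilistic estimate on the rate at which an empirical measure approximates its underlying distribution in Monge--Kantorovich distance). The plan is to chain them together by integrating Dobrushin's estimate against the product initial distribution $(f^{in})^{\otimes N}$.

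\medskip
\noindent
\textbf{Step 1: Apply Dobrushin pointwise.} For every fixed initial configuration $(X_N^{in},\Xi_N^{in})\in\bR^{2dN}$, Klimontovich's theorem identifies $t\mapsto \mu_{\Phi(t/N,X_N^{in},\Xi_N^{in})}$ with the weak solution of the Vlasov equation whose initial datum is the empirical measure $\mu_{(X_N^{in},\Xi_N^{in})}$. Since $f$ is the weak solution of the Vlasov equation with initial datum $f^{in}\in\cP_1(\bR^{2d})$, Dobrushin's inequality applies and yields
$$
\MK(\mu_{\Phi(t/N,X_N^{in},\Xi_N^{in})},f(t,\cdot))\le\MK(\mu_{(X_N^{in},\Xi_N^{in})},f^{in})\,e^{t+2\Lip(\grad V)t}\,.
$$
This bound holds for \emph{every} choice of $(X_N^{in},\Xi_N^{in})$, with a deterministic prefactor depending only on $t$ and $\Lip(\grad V)$.

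\medskip
\noindent
\textbf{Step 2: Integrate against $(f^{in})^{\otimes N}$.} Since the inequality in Step 1 is pointwise, we may integrate both sides against the product measure $\prod_{j=1}^N f^{in}(dx_j d\xi_j)$ on $\bR^{2dN}$. The exponential factor, being independent of $(X_N^{in},\Xi_N^{in})$, comes out of the integral:
$$
\ba
\int_{\bR^{2dN}}\!\MK(\mu_{\Phi(t/N,X_N^{in},\Xi_N^{in})},f(t,\cdot))\prod_{j=1}^N f^{in}(dx_j d\xi_j)
\\
\le e^{t+2\Lip(\grad V)t}\int_{\bR^{2dN}}\!\MK(\mu_{(X_N^{in},\Xi_N^{in})},f^{in})\prod_{j=1}^N f^{in}(dx_j d\xi_j)\,.
\ea
$$

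\medskip
\noindent
\textbf{Step 3: Invoke Fournier--Guillin.} The hypothesis $f^{in}\in\cP_q(\bR^{2d})$ with $q$ and $d$ in the required range ensures that the finite moment $M_q$ is well-defined and that the Fournier--Guillin bound applies to the remaining integral, giving
$$
\int_{\bR^{2dN}}\!\MK(\mu_{(X_N^{in},\Xi_N^{in})},f^{in})\prod_{j=1}^N f^{in}(dx_j d\xi_j)\le CM_q^{1/q}\bigl(N^{-1/q}+N^{-(1-1/q)}\bigr)\,.
$$
Substituting this into the estimate of Step 2 yields the claimed inequality.

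\medskip
\noindent
There is no real obstacle: the only thing to check is that Dobrushin's bound --- proved in the text for two genuine Vlasov solutions --- is legitimately applied when one of the two ``solutions'' is the empirical-measure solution, which is guaranteed by the Klimontovich theorem stated earlier. Once this identification is made, the corollary is a one-line consequence of Fubini (to pull the exponential out of the integral over initial data) together with the Fournier--Guillin strong law of large numbers in Monge--Kantorovich distance.
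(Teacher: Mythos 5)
Your proposal is correct and follows exactly the argument the paper intends: the text presents the Corollary immediately after the Fournier--Guillin theorem with the remark that it follows from ``using both the Dobrushin inequality and the Fournier--Guillin bound,'' which is precisely your three-step chain of applying Dobrushin pointwise via Klimontovich, integrating over initial data, and invoking Fournier--Guillin on the resulting integral.
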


\smallskip
There are several limitations in the derivation of the Vlasov equation from the classical $N$-body dynamics which are inherent to the Dobrushin approach.

\smallskip
First and foremost, Dobrushin's method seems limited to Lipschitz continuous interaction forces. This is a serious drawback, since it rules out such physically interesting interactions as the Coulomb, or screened Coulomb, or Yukawa repulsive
potentials, as well as the Newton's gravitational potential. The Dobrushin approach can be modified to treat singular force fields that are less singular at the origin than the Coulomb or gravitational forces (see \cite{HaurayJabin1,HaurayJabin2}).
Another possibility is to start from a mollified interaction at the origin, removing the regularization parameter as $N\to\infty$ (see \cite{Pickl-Lazarovici,Lazarovici}). 

\smallskip
Another potentially annoying peculiarity of the Dobrushin approach to the justification of the mean-field limit in classical mechanics is that it uses mathematical objects which seem particular to the classical setting, and whose extension to quantum
dynamics seems far from obvious (phase space empirical measures, individual particle trajectories, Klimontovich solutions and so on).

%%%%%%%%%%%%%%%%%%%%%%%%%%%%%%%%%%%%%%%%%%%%%%%%%%%%%%%%%%%%%%%%%%%%%%%%%%%%%%%%%%%%%%%%%%%%%%%%%%%%%%%%%%%%%%%%%%%%%%%%%

\section{Lecture 2: From Schr\"odinger to Hartree\\ (Mean-Field Limit in Quantum Mechanics)}

%%%%%%%%%%%%%%%%%%%%%%%%%%%%%%%%%%%%%%%%%%%%%%%%%%%%%%%%%%%%%%%%%%%%%%%%%%%%%%%%%%%%%%%%%%%%%%%%%%%%%%%%%%%%%%%%%%%%%%%%%

\subsection{The Quantum $N$-Body Dynamics}

%%%%%%%%%%%%%%%%%%%%%%%%%%%%%%%%%%%%%%%%%%%%%%%%%%%%%%%%%%%%%%%%%%%%%%%%%%%%%%%%%%%%%%%%%%%%%%%%%%%%%%%%%%%%%%%%%%%%%%%%%

The state at time $t$ of an $N$-particle system in quantum mechanics is described by its wave function
$$
\Psi_N\equiv\Psi_N(t,x_1,\ldots,x_N)\in\bC\,,
$$
assumed to satisfy the normalization condition
$$
\int_{\bR^{dN}}|\Psi_N(t,X_N)|^2dX_N=1\,,\qquad\text{ with }X_N:=(x_1,\ldots,x_N)\,.
$$
We recall that $|\Psi_N(t,X_N)|^2dX_N$ should be thought of as the joint probability of finding particle $1$ in an infinitesimal neighborhood of volume $dx_1$ centered at position $x_1\in\bR^d$, particle $2$ in an infinitesimal neighborhood 
of volume $dx_2$ centered at position $x_2\in\bR^d$, \dots, and particle $N$ in an infinitesimal neighborhood of volume $dx_N$ centered at position $x_N\in\bR^d$.

The wave function is governed by the Schr\"odinger equation
$$
i\hb\d_t\Psi_N=\cH_N\Psi_n\,,\qquad\Psi_N\rstr_{t=0}=\Psi_N^{in}\,,
$$
with quantum $N$-body Hamiltonian
$$
\cH_N:=\sum_{j=1}^N\underbrace{-\tfrac1{2m}\hb^2\Dlt_{x_j}}_{\text{kinetic energy}}+\sum_{1\le j<k\le N}\underbrace{V(x_j-x_k)}_{\text{potential energy}}
$$

The Schr\"odinger equation is the quantum analogue of the system of Newton's motion equation in classical mechanics presented in lecture 1. While the existence of the classical dynamics rests on the Cauchy-Lipschitz theorem, which
requires the interaction force field $\grad V$ to be Lipschitz continuous, the existence of the quantum $N$-particle dynamics follows from the following fundamental result due to T. Kato \cite{Kato51}.

\begin{Thm} [Kato] If $d=3$, and if for some $R>0$, 
$$
V\rstr_{B(0,R)}\in L^2(B(0,R))\qquad\hbox{ while }V\rstr_{\bR^3\setminus B(0,R)}\in L^\infty(\bR^3\setminus B(0,R))\leqno{\mathrm{(H3)}}
$$
for each $N>1$ and each $m,\hbar>0$, the operator $\cH_N$, which is defined as a linear map from $\cS(\bR^{dN})$ to $L^2(\bR^{dN})$, has an unbounded self-adjoint extension on $L^2(\bR^{3N})$.
\end{Thm}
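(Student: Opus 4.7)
The plan is to apply the Kato--Rellich theorem to the decomposition $\cH_N=H_0+V_{\mathrm{int}}$, where $H_0:=-\tfrac{\hbar^2}{2m}\sum_{j=1}^N\Dlt_{x_j}$ is self-adjoint on $D(H_0)=H^2(\bR^{3N})$ (via the Fourier transform) and $V_{\mathrm{int}}:=\sum_{1\le j<k\le N}V(x_j-x_k)$ is viewed as a multiplication operator. The whole business reduces to proving that $V_{\mathrm{int}}$ is $H_0$-bounded with relative bound strictly less than one; Kato--Rellich then produces a unique self-adjoint extension of $\cH_N$ on $H^2(\bR^{3N})$, and unboundedness is automatic since $H_0$ is itself unbounded while $V_{\mathrm{int}}$ is a subordinate perturbation.

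The core ingredient is the one-body Kato inequality in $\bR^3$. Using (H3), split $V=V_1+V_2$ with $V_1:=V\indc_{B(0,R)}\in L^2(\bR^3)$ and $V_2:=V\indc_{\bR^3\setminus B(0,R)}\in L^\infty(\bR^3)$, so that for $\varphi\in H^2(\bR^3)$,
\begin{equation*}
\|V\varphi\|_{L^2(\bR^3)}\le\|V_1\|_{L^2}\|\varphi\|_{L^\infty}+\|V_2\|_{L^\infty}\|\varphi\|_{L^2}.
\end{equation*}
The specificity of dimension three enters through $(1+|\xi|^2)^{-1}\in L^2(\bR^3)$, which by Fourier inversion and Cauchy--Schwarz gives $\|\varphi\|_{L^\infty}\le C(\|\Dlt\varphi\|_{L^2}+\|\varphi\|_{L^2})$; a dilation $\varphi\mapsto\varphi(\l\cdot)$ then upgrades this to the infinitesimal form $\|\varphi\|_{L^\infty}\le\a\|\Dlt\varphi\|_{L^2}+C_\a\|\varphi\|_{L^2}$ for arbitrarily small $\a>0$. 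Composing the two inequalities produces the Kato bound $\|V\varphi\|_{L^2}\le a\|\Dlt\varphi\|_{L^2}+b_a\|\varphi\|_{L^2}$ with $a>0$ as small as desired.

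To lift this to $\bR^{3N}$, observe that the $L^2+L^\infty$ decomposition of $V$ is translation invariant, so the one-body estimate applied to $x_j\mapsto\Psi_N(x_1,\ldots,x_j,\ldots,x_N)$ with the other coordinates frozen, then integrated over $\bR^{3(N-1)}$, delivers
\begin{equation*}
\|V(x_j-x_k)\Psi_N\|_{L^2(\bR^{3N})}\le a\|\Dlt_{x_j}\Psi_N\|_{L^2(\bR^{3N})}+b_a\|\Psi_N\|_{L^2(\bR^{3N})}
\end{equation*}
for every $\Psi_N\in H^2(\bR^{3N})$. The cross-term positivity $\langle\Dlt_{x_j}\Psi_N,\Dlt_{x_\ell}\Psi_N\rangle=\|\grad_{x_j}\grad_{x_\ell}\Psi_N\|_{L^2}^2\ge 0$ (two successive integrations by parts) yields $\|\Dlt_{x_j}\Psi_N\|_{L^2}\le\|\sum_\ell\Dlt_{x_\ell}\Psi_N\|_{L^2}=\tfrac{2m}{\hbar^2}\|H_0\Psi_N\|_{L^2}$, so summing the pair bound over the $\binom{N}{2}$ interacting pairs produces
\begin{equation*}
\|V_{\mathrm{int}}\Psi_N\|_{L^2}\le\tbinom{N}{2}\Bigl(\tfrac{2ma}{\hbar^2}\|H_0\Psi_N\|_{L^2}+b_a\|\Psi_N\|_{L^2}\Bigr).
\end{equation*}
Choosing $a$ small enough (depending on the fixed parameters $N,m,\hbar$) makes the coefficient of $\|H_0\Psi_N\|_{L^2}$ strictly less than one, and Kato--Rellich concludes.

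The single genuinely delicate point is the one-body inequality $\|\varphi\|_{L^\infty}\le C(\|\Dlt\varphi\|_{L^2}+\|\varphi\|_{L^2})$: it rests on the borderline integrability $(1+|\xi|^2)^{-1}\in L^2(\bR^d)$, which holds precisely when $d\le 3$. That is the sole place where the assumption $d=3$ is actually used, and is why the theorem cannot be formulated in the same clean $L^2+L^\infty$ fashion in higher dimensions. Everything after that --- the passage from one particle to $N$ particles via Fubini and translation invariance, the absorption of cross kinetic terms thanks to their non-negativity, and the final invocation of Kato--Rellich --- is essentially bookkeeping.
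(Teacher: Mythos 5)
Your proposal is correct and is exactly the standard Kato--Rellich argument which the paper relies on by citing \cite{Kato51} without reproducing a proof: the one-body $L^2+L^\infty$ estimate in $\bR^3$ via $(1+|\xi|^2)^{-1}\in L^2(\bR^3)$, the dilation trick to make the relative bound infinitesimal, the lift to $N$ bodies by Fubini and translation invariance, and the absorption of $\Dlt_{x_j}$ into $\sum_\ell\Dlt_{x_\ell}$ using the nonnegativity of the cross terms $\la\Dlt_{x_j}\Psi_N,\Dlt_{x_\ell}\Psi_N\ra=\|\grad_{x_j}\grad_{x_\ell}\Psi_N\|_{L^2}^2$. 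In fact your argument gives slightly more than the statement asks for (essential self-adjointness on $\cS(\bR^{3N})$, hence a unique self-adjoint extension), and your identification of $d\le 3$ as precisely the range where $(1+|\xi|^2)^{-1}\in L^2(\bR^d)$ correctly pinpoints the role of the hypothesis $d=3$.
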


\smallskip
In particular, this extension generates a unitary group $e^{-it\cH_N}$ on $L^2(\bR^{3N})$ (by Stone's theorem).

Notice that the condition (H3) on the potential used in Kato's theorem to define the quantum $N$-body dynamics is much weaker than the condition (H2) used to define the classical $N$-body dynamics via the Cauchy-Lipschitz theorem.
In particular, Kato's condition (H3) is satisfied by the repulsive Coulomb potential between identical charged particles, a special case of considerable interest in atomic physics.

\subsection{The Quantum Mean-Field Dynamics}

%%%%%%%%%%%%%%%%%%%%%%%%%%%%%%%%%%%%%%%%%%%%%%%%%%%%%%%%%%%%%%%%%%%%%%%%%%%%%%%%%%%%%%%%%%%%%%%%%%%%%%%%%%%%%%%%%%%%%%%%%

Exactly as in the context of classical dynamics, we assume that the total mass of our $N$ particle system is of order $1$ as $N\to\infty$, i.e. $Nm=1$, and consider the dynamics in time $\hat t=t/N$, i.e. the unitary group
$$
\exp\left(-\frac{it}{\hb N}\cH_N\right)=\exp\left(-\frac{it}{\hb}\widehat{\cH_N}\right)
$$
where
$$
\widehat\cH_N:=\frac{\cH_N}{N}=\sum_{j=1}^N-\tfrac1{2}\hb^2\Dlt_{x_j}+\frac1N\sum_{1\le j<k\le N}V(x_j-x_k)\,.
$$
Henceforth we consider as our starting point the Schr\"odinger equation defined by $\widehat{\cH_N}$. For notational simplicity, we also drop the hat on the rescaled Hamiltonian $\widehat{\cH_N}$.

By analogy with the classical problem studied in lecture 1, it is natural to replace the $N$-body potential acting on the $j$th particle, viz.
$$
\frac1N\sum_{\genfrac{}{}{0pt}{3}{k=1}{k\not=j}}^NV(x_j-x_k)
$$
with its mean-field approximation, which is the convolution of $V$ with the single particle density function at time $t$, i.e. $|\psi(t,x)|^2$, where $\psi\equiv\psi(t,x)$ is the wave function of the typical particle in the $N$-particle
system under consideration. In other words, the mean-field potential is
$$
V_\psi(t,x):=V\star|\psi(t,\cdot)|^2(x)=\int_{\bR^d}V(x-y)|\psi(t,y)|^2dy\,.
$$
The corresponding mean-field Hamiltonian is the operator
$$
-\tfrac12\hb^2\Dlt_x+V_\psi(t,x)\,,
$$
and the single particle wave function $\psi$ describing the quantum state of the typical particle satisfies the \textit{time-dependent Hartree} (TDH) equation
$$
i\hb\d_t\psi(t,x)=-\tfrac12\hb^2\Dlt_x\psi(t,x)+V_\psi(t,x)\psi(t,x)\,,\qquad\psi\rstr_{t=0}=\psi^{in}\,.
$$

The following (formal) computations are left to the reader as (easy) exercises: the conservation of particle number is 
$$
\frac{d}{dt}\int_{\bR^d}|\psi(t,x)|^2dx=0\,,
$$
leading to the propagation of the normalization condition:
$$
\|\psi(t,\cdot)\|_{L^2(\bR^d)}=\|\psi^{in}\|_{L^2(\bR^d)}=1\,,\qquad t\ge 0\,.
$$
The conservation of energy takes the form
$$
\frac{d}{dt}\left(\int_{\bR^d}\tfrac12\hb^2|\grad_x\psi(t,x)|^2dx+\tfrac12\iint_{\bR^d\times\bR^d}V(x-y)|\psi(t,x)|^2|\psi(t,y)|^2dxdy\right)=0\,,
$$
so that, if $\psi^{in}$ has finite energy and if $V\ge 0$ on $\bR^d$, then the solution of the TDH equation satisfies
$$
\psi\in L^\infty((0,\infty);H^1(\bR^d))\,.
$$

\subsection{Reduced Density Operators}

%%%%%%%%%%%%%%%%%%%%%%%%%%%%%%%%%%%%%%%%%%%%%%%%%%%%%%%%%%%%%%%%%%%%%%%%%%%%%%%%%%%%%%%%%%%%%%%%%%%%%%%%%%%%%%%%%%%%%%%%%

Henceforth assume that $\Psi_N(t,\cdot)$ is a symmetric function of the position variables for each particle. This symmetry assumption\footnote{The assumption that the particles considered here are bosons is not necessary for most of 
the mathematical results considered in this lecture. However, the mean-field scaling assumed in this and the next lecture is specific to bosons, and differs from the one used in the case of fermions (particles with half-integral spin): see
Remark (7) below, at the end of lecture 2. Except for the mean-field scaling, most of the results discussed in this lecture hold for a system of indistinguishable particles, bosons or fermions.} corresponds to assuming that the particles are 
\textit{bosons} (i.e. have integral spin: see \S 61 in \cite{LL3}). Thus, for all $\si\in\fS_N$, for a.e. $X_N\in\bR^{Nd}$ and all $t\ge 0$, one has
$$
U_\si\Psi_N(t,X_N):=\Psi_N(t,x_{\si^{-1}(1)},\ldots,x_{\si^{-1}(N)})=\Psi_N(t,X_N)\,.
$$
To the wave function $\Psi_N$, one associates the $N$-body density operator $R_N(t)$ on $\fH$ with integral kernel
$$
r_N(t,X_N,Y_N):=\Psi_N(t,X_N)\overline{\Psi_N(t,Y_N)}\,.
$$
Obviously, $R_N(t)$ is the orthogonal projection on the line $\bC\Psi_N$ in the Hilbert space $\fH_N:=L^2(\bR^{dN})$, since $\Psi_N(t,\cdot)$ is assumed to satisfy the normalization condition
$$
\|\Psi_N(t,\cdot)\|_{L^2(\bR^{dN})}=1\,.
$$

For each $k=1,\ldots,N-1$, the $k$-particle reduced density operator is the integral operator denoted $R_{N:k}(t)$ on $\fH_k=L^2(\bR^{dk})$ with integral kernel
$$
r_{N:k}(t,X_k,Y_k):=\int_{\bR^{d(N-k)}}r_N(t,X_k,Z_{k,N},Y_k,Z_{k,N})dZ_{k,N}\,,
$$
where we have denoted
$$
Z_{k,N}:=(z_{k+1},\ldots,z_N)\,.
$$

In the sequel, we shall systematically use \textit{Dirac's bra-ket notation}: each function $\Phi_N\in L^2(\bR^{dN})$ defines a vector of $\fH_N$ denoted $|\Phi_N\ra$ (a ket, involving only a closing bracket). Similarly, to each 
$\Psi_N\in L^2(\bR^{dN})$, one associates the linear functional 
$$
\fH_N\ni\Phi_N\mapsto\int_{\bR^{dN}}\overline{\Psi_N(X_N)}\Phi_N(X_N)dX_N=\la\Psi_N|\Phi_N\ra\in\bC\,.
$$
Since the function $\Phi_N$ can be viewed as the vector $|\Phi_N\ra$ of $\fH_N$, the notation $\la\cdot|\cdot\ra$ for the inner product in the Hilbert space $\fH_N$ makes it natural to denote this linear functional as $\la\Psi_N|$ 
(a bra, involving only an opening bracket).

\subsection{Quantum Klimontovich Solutions}

%%%%%%%%%%%%%%%%%%%%%%%%%%%%%%%%%%%%%%%%%%%%%%%%%%%%%%%%%%%%%%%%%%%%%%%%%%%%%%%%%%%%%%%%%%%%%%%%%%%%%%%%%%%%%%%%%%%%%%%%%

After these preliminaries, we arrive at the main task in this lecture, namely defining the quantum analogue of the notions of empirical measure and Klimontovich solution in classical mechanics. The material in this section is
taken from \cite{FGTPEmpirical}. We first define these notions, and then explain why these definitions are natural by analogy with the classical setting.

As above, we set $\fH:=L^2(\bR^d;\bC)$ (the single-particle Hilbert space in space dimension $d$), and for each integer $N\ge 1$ (the particle number) $\fH_N:=L^2(\bR^{dN};\bC)$ (the $N$-particle Hilbert space). For each
integer $k=1,\ldots,N$, set
$$
J_k:\,\cL(\fH)\ni A\mapsto J_kA:=\underbrace{I_\fH\otimes\ldots\otimes A\otimes\ldots\otimes I_\fH}_{\hbox{$A$ on the $k$-th variable}}\in\cL(\fH_N)\,.
$$
With this notation, we first define the quantum analogue of the notion of empirical measure.

\begin{Def} For each $N>1$, we set
$$
\cM^{in}_N:=\frac1N\sum_{k=1}^NJ_k\in\cL(\cL(\fH),\cL_s(\fH_N))\,,
$$
where $\cL(E,F)$ designates the set of continuous linear maps from the Banach space $E$ to the Banach space $F$, while 
$$
\cL_s(\fH_N):=\{T\in\cL(\fH_N)\text{ s.t. }U_\si TU_\si^*=T\text{ for all }\si\in\fS_N\}\,.
$$
\end{Def}

Why this is indeed a natural quantum analogue of the notion of phase space empirical measure in classical mechanics may require some explanation. 

In quantum mechanics, one associates to physical quantities (such as position, momentum, energy, angular momentum \ldots) self-adjoint operators with pure point spectrum and a complete orthonormal set of eigenfunctions 
in the Hilbert space of the system considered. Such operators are called ``observables'' in the language of quantum mechanics, and the ``expected value'' of the physical quantity corresponding to the operator $A=A^*$ for the 
system in the state associated to the wave function $\Psi$ is 
$$
\la\Psi|A|\Psi\ra\in\bR\,.
$$
(Indeed, let $(\phi_j)_{j\ge 1}$ be a complete orthonormal system such that $A\psi_j=a_j\psi_j$ with $k\not=j\implies a_k\not=a_j$; the probability that the physical quantity associated with the observable $A$ takes the
value $a_j$ on the system in the state associated to the wave function $\Psi$ is $p_j:=|\la\phi_j|\Psi\ra|^2$ (assuming of course that $\|\Psi\|=1$). Thus
$$
\la\Psi|A|\Psi\ra=\sum_{j\ge 1}a_jp_j\,,
$$
which confirms the interpretation of $\la\Psi|A|\Psi\ra$ as a mathematical expectation.) See \S\S 1-7 in chapter V of \cite{Messiah} for more detail on these important notions.

Thus, if $A=A^*\in\cL(\fH)$ is a single-particle observable, $\cM^{in}_NA$ is an $N$-particle observable that is invariant under permutation of the particle labels, i.e. 
$$
U_\si(\cM^{in}_NA)U_\si^*=\cM^{in}_NA\qquad\text{ for each }\si\in\fS_N\,.
$$

The corresponding statement involving the phase space empirical measure is as follows: let $f\equiv f(z,\zeta)$ be a real-valued, continuous bounded function defined on the single-particle phase space $\bR^d\times\bR^d$.
One can think of this function as representing the phase space density of some physical quantity (such as the kinetic energy $f(z,\zeta)=|\zeta|^2/2m$ for a particle with momentum $\zeta\in\bR^d$ and mass $m>0$). Then
$$
\int_{\bR^d\times\bR^d}f(z,\zeta)\mu_{X_N^{in},\Xi_N^{in}}(dzd\zeta)=\La\frac1N\sum_{j=1}^N\de_{z_j,\zeta_j},f\Ra=\frac1N\sum_{j=1}^Nf(x_j,\xi_j)\,,
$$
and the phase space empirical measure $\mu_{X_N^{in},\Xi_N^{in}}$ can be thought of as the ``integral kernel'' of the linear map
$$
C_b(\bR^d\times\bR^d)\ni f\mapsto F_N\equiv F_N(X_N,\Xi_N):=\frac1N\sum_{j=1}^Nf(x_j,\xi_j)\in C_b(\bR^{dN}\times\bR^{dN})\,.
$$
Besides $F_N$ is obviously symmetric in the variables $(x_k,\xi_k)$, in other words, is invariant under perturbations of the particle labels).

\smallskip
We recall from lecture 1 that Klimontovich solutions of the Vlasov equation are phase space empirical measures of the form 
$$
\mu_{\Phi(t/N,X_N^{in},\Xi_N^{in})}\,,
$$
where $\Phi(t,\cdot)$ is the Hamiltonian flow of
$$
\sum_{j=1}^N\tfrac1{2m}|\xi_j|^2+\sum_{1\le j<k\le N}V(x_j-x_k)
$$
defined on $\bR^{dN}\times\bR^{dN}$, assuming that the total mass of the system is $Nm=1$. Its quantum analogue is defined as follows.

\begin{Def}
The quantum Klimontovich solution for an $N$-particle system governed by the dynamics associated to the quantum $N$-particle Hamiltonian $\cH_N$ is the time-dependent element $\cM_N(t)$ of $\cL(\cL(\fH),\cL_s(\fH_N))$
defined by the formula
$$
\cM_N(t)A:=e^{it\cH_N/\hb}(\cM_N^{in}A)e^{-it\cH_N/\hb}\,,\qquad A\in\cL(\fH)\,,\,\,t\in\bR\,.
$$
\end{Def}

\smallskip
Next we study a characteristic property of $\cM_N(t)$ --- which could indeed serve as an alternative definition of $\cM_N(t)$. 

\begin{Lem}\lb{L-CharPropMN(t)}
For each $\phi\in\fH$ and each $\Psi_N^{in}\in\fH_N$ satisfying the bosonic symmetry 
$$
\Psi_N^{in}=U_\si\Psi_N^{in}\in\fH_N\qquad\text{ for all }\si\in\fS_N\,,
$$
one has
$$
\la\Psi_N^{in}|\cM_N(t)(|\phi\ra\la\phi|)|\Psi_N^{in}\ra_{\fH_N}=\la\phi|R_{N:1}(t)|\phi\ra_\fH\,,\qquad\text{ for all }t\ge 0\,.
$$
\end{Lem}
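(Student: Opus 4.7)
The plan is to unfold both sides via the trace formalism on $\fH_N$ and reduce the identity to the propagation of bosonic symmetry by the $N$-body Schr\"odinger flow.

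\emph{Reduction to a trace.} First I would set $\Psi_N(t):=e^{-it\cH_N/\hbar}\Psi_N^{in}$ and $R_N(t):=|\Psi_N(t)\ra\la\Psi_N(t)|$. Using the definition $\cM_N(t)A=e^{it\cH_N/\hbar}(\cM^{in}_NA)e^{-it\cH_N/\hbar}$ together with the unitarity of $e^{\pm it\cH_N/\hbar}$, the left-hand side rewrites as
$$
\la\Psi_N(t)|\cM^{in}_N(|\phi\ra\la\phi|)|\Psi_N(t)\ra=\frac1N\sum_{k=1}^N\Tr_{\fH_N}\bigl(R_N(t)J_k(|\phi\ra\la\phi|)\bigr).
$$

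\emph{Propagation of bosonic symmetry.} Next I would observe that, by hypothesis (H1) and the symmetry of the kinetic terms, $[U_\si,\cH_N]=0$ for every $\si\in\fS_N$, hence $U_\si\Psi_N(t)=e^{-it\cH_N/\hbar}U_\si\Psi_N^{in}=\Psi_N(t)$, so $U_\si R_N(t)U_\si^*=R_N(t)$. Taking $\si=\tau_{1k}$ (the transposition of $1$ and $k$), one has $U_\si J_k(A)U_\si^*=J_1(A)$, so combining with the invariance of $R_N(t)$ and the cyclicity of the trace yields
$$
\Tr_{\fH_N}\bigl(R_N(t)J_k(|\phi\ra\la\phi|)\bigr)=\Tr_{\fH_N}\bigl(R_N(t)J_1(|\phi\ra\la\phi|)\bigr)
$$
for every $k$, so the average over $k$ collapses to a single term.

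\emph{Matching to the 1-particle reduced density.} Finally I would compute the remaining trace directly: the operator $J_1(|\phi\ra\la\phi|)$ has integral kernel $\phi(x_1)\overline{\phi(y_1)}\prod_{j=2}^N\de(x_j-y_j)$, so expanding $\Tr_{\fH_N}(R_N(t)J_1(|\phi\ra\la\phi|))$ and integrating the delta factors against $Z_{1,N}$ gives
$$
\iint_{\bR^d\times\bR^d}\overline{\phi(x_1)}\,r_{N:1}(t,x_1,y_1)\,\phi(y_1)\,dx_1dy_1=\la\phi|R_{N:1}(t)|\phi\ra,
$$
where the inner $Z_{1,N}$-integral producing $r_{N:1}$ matches the definition given in the text.

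The argument is essentially notational bookkeeping, and the only genuine ingredient — which I anticipate as the main (mild) point to verify carefully — is the second step: that the bosonic symmetry of $\Psi_N^{in}$ is preserved by the flow, which rests on $[U_\si,\cH_N]=0$, hence ultimately on (H1) through $V(x_j-x_k)=V(x_k-x_j)$ and the indistinguishability of the particles.
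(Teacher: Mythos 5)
Your proof is correct and follows essentially the same route as the paper: unitarity to push the evolution onto the wave function, propagation of bosonic symmetry under $e^{-it\cH_N/\hbar}$ (via $[U_\si,\cH_N]=0$) to collapse the sum over $k$ to the $k=1$ term, then a direct kernel computation identifying the result with $\la\phi|R_{N:1}(t)|\phi\ra$. The only cosmetic difference is that you phrase steps through $\Tr_{\fH_N}(R_N(t)J_k(\cdot))$ and the conjugation identity $U_{\tau_{1k}}J_k(A)U_{\tau_{1k}}^*=J_1(A)$, whereas the paper works directly with inner products of $\Psi_N(t)$.
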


In other words, $\cM_N(t)$ is the adjoint of the linear map
$$
|\Psi_N^{in}\ra\la\Psi_N^{in}|\mapsto R_{N:1}(t)
$$
where $R_{N:1}(t)$ is the single-particle reduced density operator associated to 
$$
\Psi_N(t)=e^{-it\cH_N/\hb}\Psi_N^{in}\,.
$$

\begin{proof} 
Since 
$$
\Psi_N^{in}=U_\si\Psi_N^{in}\implies\Psi_N(t)=U_\si\Psi_N(t)
$$ 
for all $\si\in\fS_N$, then
$$
\ba
\la\Psi_N^{in}|\cM_N(t)(|\phi\ra\la\phi|)|\Psi_N^{in}\ra_{\fH_N}\!=\!\la e^{-\frac{i}\eps\cH_N}\Psi_N^{in}|\cM_N^{in}(|\phi\ra\la\phi|)|e^{-\frac{i}\eps\cH_N}\Psi_N^{in}\ra_{\fH_N}
\\
=\!\La\Psi_N(t)\Bigg|\frac1N\sum_{k=1}^NJ_k(|\phi\ra\la\phi|)\Bigg|\Psi_N(t)\Ra_{\fH_N}\!\!\!\!=\!\la\Psi_N(t)|J_1(|\phi\ra\la\phi|)|\Psi_N(t)\ra_{\fH_N}
\\
=\int_{\bR^{d(N-1)}}|\la\Psi_N(t,\cdot,X_{2,N})|\phi\ra|^2dX_{2,N}=\la\phi|R_{N:1}(t)|\phi\ra_\fH&\,.
\ea
$$
The symmetry of $\Psi_N(t)$ has been used to prove the third equality.
\end{proof}

\smallskip
There is a similar property for the Klimontovich solution in the classical setting. Let $F_n^{in}\equiv F_N^{in}(X_N,\Xi_N)$ be a probability density on $\bR^{2dN}$ satisfying the symmetry
$$
F_N^{in}(X_N,\Xi_N)=F_N^{in}(x_{\si^{-1}(1)},\ldots,x_{\si^{-1}(N)},\xi_{\si^{-1}(1)},\ldots,\xi_{\si^{-1}(N)})\,,
$$
for all $X_N,\Xi_N\in\bR^{dN}$ and all $\si\in\fS_N$. Let $\Phi(t,\cdot)$ be the Hamiltonian flow of
$$
\sum_{j=1}^N\tfrac1{2m}|\xi_j|^2+\sum_{1\le j<k\le N}V(x_j-x_k)\,.
$$
Then
$$
\int_{\bR^{2dN}}\mu_{\Phi(t/N,X_N^{in},\Xi_N^{in})}F_N^{in}(X^{in}_N,\Xi^{in}_N)dX^{in}_Nd\Xi^{in}_N=F_{N:1}(t,\cdot)\,,
$$
where
$$
F_{N:1}(t,x_1,\xi_1)\!\!:=\!\!\!\int_{\bR^{2d(N-1)}}\!F_N^{in}(\Phi(-\tfrac{t}{N},x_1,\ldots,x_N,\xi_1,\ldots,\xi_N))dx_2\ldots dx_Nd\xi_2\ldots d\xi_N
$$
is the first marginal of the $N$-particle distribution function at time $t$. (See formula (32) in \cite{FGMouRicci}.) Equivalently
$$
\int_{\bR^{2d}}\phi(z,\zeta)F_{N:1}(t,z,\zeta)dzd\zeta=\int_{\bR^{2dN}}\la\mu_{\Phi(t/N,X_N,\Xi_N)},\phi\ra F_N^{in}(X^{in}_N,\Xi^{in}_N)dX^{in}_Nd\Xi^{in}_N
$$
for all $t\in\bR$, and each $\phi\in C_b(\bR^{2d})$. This is analogous to the formula in Lemma \ref{L-CharPropMN(t)} in the classical setting.

\subsection{An Equation for Quantum Klimontovich Solutions}

%%%%%%%%%%%%%%%%%%%%%%%%%%%%%%%%%%%%%%%%%%%%%%%%%%%%%%%%%%%%%%%%%%%%%%%%%%%%%%%%%%%%%%%%%%%%%%%%%%%%%%%%%%%%%%%%%%%%%%%%%

In this section, we shall assume for the sake of simplicity that the potential $V\in C_0(\bR^d)$ satisfies (H1) and that its Fourier transform $\hat V$ satisfies
$$
\hat V\in L^1(\bR^d)\,.\leqno{\mathrm{(H4)}}
$$
For each $\om\in\bR^d$, we denote by $E_\om$ the multiplication operator on $\fH=L^2(\bR^d)$ defined by the formula
$$
(E_\om\psi)(x)=e^{i\om\cdot x}\psi(x)\,,\qquad\psi\in\fH\,.
$$
Obviously 
$$
E_\om^*=E_{-\om}=E_\om^{-1}\quad\text{ for each }\om\in\bR^d\,.
$$

Next, for each linear map $\L:\,\cL(\fH)\to\cL(\fH_N)$, each unbounded operator $H$ on the single-particle Hilbert space $\fH$ and each $A\in\cL(\fH)$ satisfying
$$
[H,A]\!:=HA\!-\!AH\in\cL(\fH)\,,
$$
we set
$$
(\Ad(H)\L)A:=-\L([H,A])\,.
$$
(We have chosen this notation by analogy with the co-adjoint representation of a Lie algebra: if $\fH=\bC^n$, if $A,H\in\cL(\fH)=M_n(\bC)$, which is the Lie algebra of the group $GL_n(\bC)$, and if $L\in M_n(\bC)^*$ is a linear 
functional on $M_n(\bC)$, the coadjoint representation $M_n(\bC)\ni H\mapsto\Ad(H)\in\cL(M_n(\bC)^*)$ is defined by
$$
M_n(\bC)^*\ni L\mapsto\Ad(H)L\in M_n(\bC)^*
$$
where
$$
\Ad(H)L:\,M_n(\bC)\ni A\mapsto -\la L,[H,A]\ra\in\bC\,.
$$
The main difference with the situation considered here is that $\L$ is an operator-valued linear map, instead of being a linear functional, so that there is no duality in our setting. The term $\Ad(H)$ is used here only for lack of 
a more convenient notation.)

While the notation $\Ad(H)$ is used for the kinetic energy in the quantum Hamiltonian, we need another notation for the interaction term, i.e. the potential energy in the quantum Hamiltonian. For all $\L_1,\L_2\in\cL(\cL(\fH),\cL(\fH_N))$,
we define the linear map $\cC[V,\L_1,\L_2]\in\cL(\cL(\fH),\cL(\fH_N))$ by the formula
$$
\cC[V,\L_1,\L_2]A\!:=\!\!\!\int_{\bR^d}((\L_1E^*_\om)\L_2(E_\om A)\!-\!\L_2(AE_\om)(\L_1E^*_\om))\hat V(\om)\tfrac{d\om}{(2\pi)^d}
$$
for all $A\in\cL(\fH)$. Some care should be exercised with the definition of the integral in the right hand side of this formula, since it takes its values in $\cL(\fH_N)$, which is not separable. 

Let $\cX$ be a Banach space, with topological dual denoted $\cX'$. The weak-* topology on $\cX'$ is the topology defined by the family of seminorms $\ell\mapsto|\la\ell,x\ra_{X',X}|$ as $x$ runs through $X$. Let $f:\,\bR^d\to\cX'$ 
be weakly-* continuous and bounded (for the norm topology) on $\bR^d$. Let $m$ be a (bounded) complex Borel measure on $\bR^d$; then the linear functional
$$
\cX\ni\phi\mapsto\la L_{f,m},\phi\ra:=\int_{\bR^d}\la f(\om),\phi\ra m(d\om)\in\bC
$$
is continuous with norm
$$
\|L_{f,m}\|_{\cX'}\le\sup_{\om\in\bR^d}\|f(\om)\|_{\cX'}\|m\|_{TV}\,.
$$
This defines the integral
$$
\int_{\bR^d}f(\om)m(d\om):=L_{f,m}\in\cX'\,.
$$
In the case of the integral in the definition of $\cC[V,\L_1,\L_2]A$, the Banach space $\cX$ is $\cL^1(\fH_N)$ (the space of trace-class operators on $\fH_N$) and its topological dual is $\cL(\fH_N)$, with duality defined by the trace:
$$
\la B\,T\ra_{\cL^(\fH_N),\cL^1(\fH_N)}=\Tr_{\fH_N}(BT)\,.
$$
The weak-* topology is the ultraweak topology on $\cL(\fH_N)$. This construction defines the integral of the bounded, ultraweakly continuous function $f:\,\bR^d\to\cL(\fH_N)$ with respect to the complex Borel measure $m$ on $\bR^d$ 
$$
\int_{\bR^d}f(\om)m(d\om)\in\cL(\fH_N)
$$
as an element of $\cL(\fH_N)$ identified with the continuous linear functional 
$$
\cL^1(\fH_N)\ni T\mapsto\int_{\bR^d}\Tr_{\fH_N}(f(\om)T)m(d\om)\in\bC
$$
by the formula
$$
\Tr_{\fH_N}\left(\left(\int_{\bR^d}f(\om)m(d\om)\right)T\right)=\int_{\bR^d}\Tr_{\fH_N}(f(\om)T)m(d\om)\,.
$$
(For the reader familiar with these notions, $\cC[V,\L_1,\L_2]A$ is defined by duality, as a (Gelfand-)Pettis integral instead of a Bochner integral.)

\smallskip
With these notations, we can formulate our first main result, i.e. the governing equation satisfied by the time-dependent linear map $\cM_N(t)$.

\begin{Thm} [\cite{FGTPEmpirical}]\lb{T-qKlim}
Let $V$ be a real-valued function satisfying (H1) and (H4). Then
$$
i\hb\d_t\cM_N(t)=\Ad(-\tfrac12\hb^2\Dlt)\cM_N(t)-\cC[V,\cM_N(t),\cM_N(t)]\,.
$$
\end{Thm}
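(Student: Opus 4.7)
The plan is to differentiate the definition of $\cM_N(t)A$, recognize the result as a Heisenberg commutator with $\cH_N$, and then split $\cH_N=\cK_N+\cV_N$ into its kinetic and potential parts, matching each to one of the two terms on the right-hand side. Direct differentiation of $\cM_N(t)A=e^{it\cH_N/\hb}(\cM_N^{in}A)e^{-it\cH_N/\hb}$ yields
$$
i\hb\,\partial_t\cM_N(t)A=[\cM_N(t)A,\cH_N]=e^{it\cH_N/\hb}\,[\cM_N^{in}A,\cH_N]\,e^{-it\cH_N/\hb}\,,
$$
so it suffices to compute $[\cM_N^{in}A,\cH_N]$ and then conjugate. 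For the kinetic part $\cK_N:=\sum_{j=1}^N-\tfrac12\hb^2\Dlt_{x_j}$, the operator $J_kA$ acts only on the $k$-th slot, so $[J_kA,-\tfrac12\hb^2\Dlt_{x_j}]$ vanishes for $j\ne k$ and equals $J_k([A,-\tfrac12\hb^2\Dlt])$ when $j=k$. Summing in $k$ gives $[\cM_N^{in}A,\cK_N]=\cM_N^{in}([A,-\tfrac12\hb^2\Dlt])$, and after conjugation this becomes $\cM_N(t)([A,-\tfrac12\hb^2\Dlt])=\Ad(-\tfrac12\hb^2\Dlt)\cM_N(t)\cdot A$, exactly the first term on the right-hand side.

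For the potential part $\cV_N=\tfrac1{2N}\sum_{j\ne k}V(x_j-x_k)$, assumption (H4) permits the Fourier representation $V(x_j-x_k)=\int_{\bR^d}\hat V(\omega)\,J_jE_\omega\cdot J_kE_\omega^*\,\tfrac{d\omega}{(2\pi)^d}$. In the bracket $[\cM_N^{in}A,V(x_j-x_k)]=\tfrac1N\sum_l[J_lA,V(x_j-x_k)]$ only the summands with $l\in\{j,k\}$ survive, and using commutativity of operators acting on distinct tensor slots, each of these reduces to $J_j[A,E_\omega]\cdot J_kE_\omega^*$ (when $l=j$) and $J_jE_\omega\cdot J_k[A,E_\omega^*]$ (when $l=k$). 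Now symmetrize by swapping $j\leftrightarrow k$ in the second summand and then substituting $\omega\to-\omega$, which leaves $\hat V(\omega)\tfrac{d\omega}{(2\pi)^d}$ invariant by (H1) and turns $E_{-\omega}$ into $E_\omega^*$; for $j\ne k$ the factor on the other slot can be pulled past, and the two pieces merge into
$$
[\cM_N^{in}A,\cV_N]=\frac{1}{N^2}\sum_{j\ne k}\int_{\bR^d}\hat V(\omega)\,J_j[A,E_\omega]\cdot J_kE_\omega^*\,\tfrac{d\omega}{(2\pi)^d}\,.
$$

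Next, reinterpret this double sum as the off-diagonal part of a product of two $\cM_N^{in}$-factors. Expanding $\cM_N^{in}(AE_\omega)\cM_N^{in}(E_\omega^*)$ and $\cM_N^{in}(E_\omega^*)\cM_N^{in}(E_\omega A)$ as double sums over $(j,k)$, the diagonal contributions ($j=k$) equal $\tfrac1N\cM_N^{in}(A)$ in both cases (using $E_\omega E_\omega^*=E_\omega^*E_\omega=I$) and therefore cancel in the difference, while the off-diagonal terms, after pulling factors on distinct slots through each other and relabeling, reproduce the expression above. Hence
$$
[\cM_N^{in}A,\cV_N]=\int_{\bR^d}\hat V(\omega)\bigl(\cM_N^{in}(AE_\omega)\cM_N^{in}(E_\omega^*)-\cM_N^{in}(E_\omega^*)\cM_N^{in}(E_\omega A)\bigr)\tfrac{d\omega}{(2\pi)^d}\,,
$$
and conjugation by $e^{it\cH_N/\hb}$ takes each factor $\cM_N^{in}(B)$ to $\cM_N(t)(B)$, producing precisely $-\cC[V,\cM_N(t),\cM_N(t)]A$ from the definition of $\cC$. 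Adding the kinetic and potential contributions gives the announced equation.

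The main obstacle is the combinatorial identity that reorganizes the four-term expansion of $[\cM_N^{in}A,\cV_N]$ into a single product of two empirical-measure factors: the cleanest route is the symmetrization above combined with the parity of $\hat V$, together with careful bookkeeping of the $j=k$ diagonal contributions when the product is re-expanded. On the analytic side, assumption (H4) and the Gelfand--Pettis (ultraweak) construction introduced before the statement ensure that all the $\omega$-integrals make sense as elements of $\cL(\fH_N)$ and commute with the bounded operations (left/right composition with the various $J_kA$, and conjugation by $e^{\pm it\cH_N/\hb}$) used along the way.
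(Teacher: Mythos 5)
Your proof is correct and follows essentially the same route as the paper: differentiate via the Heisenberg commutator, split $\cH_N$ into kinetic and potential parts, use the Fourier representation of $V$ with (H4), compute the commutator of $\cM_N^{in}A$ with $\cV_N$ using that $J_j$, $J_k$ act on distinct tensor slots, symmetrize using the evenness of $\hat V$, restore the diagonal $j=k$ terms (which cancel in the difference), and recognize the result as $-\cC[V,\cM_N^{in},\cM_N^{in}]A$ before conjugating. The only real difference is a rigor device you elide: because $\cH_N$ is unbounded, the paper carries out the differentiation at the level of the quadratic form $\la\Psi_N^{in}|\cM_N(t)(|\phi\ra\la\phi|)|\Psi_N^{in}\ra$ with $\Psi_N^{in}$ in $\Dom(\cH_N)$ and $\phi\in H^2$, then upgrades the resulting identity of quadratic forms to an operator identity by noting the skew-adjointness of the two right-hand side terms and invoking polarization, rather than taking the strong operator derivative directly as you do.
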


\begin{proof} Start from Lemma \ref{L-CharPropMN(t)}:
$$
\ba
i\hb\d_t\la\Psi_N^{in}|\cM_N(t)(|\phi\ra\la\phi|)|\Psi_N^{in}\ra=&i\hb\d_t\la\Psi_N^{in}|e^{+it\cH_N/\hb}\cM_N^{in}(|\phi\ra\la\phi|)e^{-it\cH_N/\hb}|\Psi_N^{in}\ra
\\
=&i\hb\d_t\la e^{-it\cH_N/\hb}\Psi_N^{in}|\cM_N^{in}(|\phi\ra\la\phi|)|e^{-it\cH_N/\hb}\Psi_N^{in}\ra
\\
=&-\la\cH_Ne^{-it\cH_N/\hb}\Psi_N^{in}|\cM_N^{in}(|\phi\ra\la\phi|)|e^{-it\cH_N/\hb}\Psi_N^{in}\ra
\\
&+\la e^{-it\cH_N/\hb}\Psi_N^{in}|\cM_N^{in}(|\phi\ra\la\phi|)|\cH_Ne^{-it\cH_N/\hb}\Psi_N^{in}\ra
\\
=&-\la\Psi_N^{in}|e^{+it\cH_N/\hb}[\cH_N,\cM_N^{in}(|\phi\ra\la\phi|)]e^{-it\cH_N/\hb}|\Psi_N^{in}\ra\,.
\ea
$$
Split the Hamiltonian $\cH_N$ as $\cH_N=\cK_N+\cV_N$ where $\cK_N$ is the kinetic energy and $\cV_N$ the potential energy, i.e.
$$
\cK_N:=\sum_{k=1}^nJ_k(-\tfrac12\hb^2\Dlt)\,,\qquad\cV_N=:\tfrac1N\sum_{1\le k<l\le N}V_{kl}\,,
$$
where 
$$
V_{kl}\Psi_N(x_1,\ldots,x_N):=V(x_k-x_l)\Psi_N(x_1,\ldots,x_N)\,.
$$

First
$$
\ba
{}[\cK_N,\cM_N^{in}(|\phi\ra\la\phi|)]=&\left[\sum_{k=1}^NJ_k(-\tfrac12\hb^2\Dlt),\tfrac1N\sum_{l=1}^NJ_l(|\phi\ra\la\phi|)\right]
\\
=&\tfrac1N\sum_{l=1}^NJ_l([-\tfrac12\hb^2\Dlt,|\phi\ra\la\phi|])
\\
=&\cM_N^{in}([-\tfrac12\hb^2\Dlt,|\phi\ra\la\phi|])\,,
\ea
$$
so that
$$
\ba
e^{+it\cH_N/\hb}[\cK_N,\cM_N^{in}(|\phi\ra\la\phi|)]e^{-it\cH_N/\hb}=&e^{+it\cH_N/\hb}\cM_N^{in}([-\tfrac12\hb^2\Dlt,|\phi\ra\la\phi|])e^{-it\cH_N/\hb}
\\
=&\cM_N(t)([-\tfrac12\hb^2\Dlt,|\phi\ra\la\phi|])
\\
=&-\Ad(-\tfrac12\hb^2\Dlt)\cM_N(t)(|\phi\ra\la\phi|)\,.
\ea
$$

Next, we use (H4) and the Fourier inversion formula to write
$$
V_{kl}=\tfrac1{(2\pi)^d}\int_{\bR^d}\hat V(\om)J_kE_\om J_lE_\om^*d\om\,.
$$
Thus
$$
\ba
{}[\cV_N,\cM_N^{in}(|\phi\ra\la\phi|)]=&\tfrac1{(2\pi)^d}\int_{\bR^d}\hat V(\om)\left[\tfrac1N\sum_{1\le k<l\le N}J_kE_\om J_lE_\om^*,\tfrac1N\sum_{m=1}^NJ_m|\phi\ra\la\phi|\right]d\om
\\
=&\tfrac1{(2\pi)^d}\int_{\bR^d}\hat V(\om)\tfrac1{N^2}\sum_{1\le k<l\le N}J_k[E_\om,|\phi\ra\la\phi|]J_lE_\om^*d\om
\\
&+\tfrac1{(2\pi)^d}\int_{\bR^d}\hat V(\om)\tfrac1{N^2}\sum_{1\le k<l\le N}J_kE_\om J_l[E^*_\om,|\phi\ra\la\phi|]d\om
\\
=&\tfrac1{(2\pi)^d}\int_{\bR^d}\hat V(\om)\tfrac1{N^2}\sum_{1\le k\not=l\le N}J_k[E_\om,|\phi\ra\la\phi|]J_lE_\om^*d\om\,,
\ea
$$
where the last equality follows from (H1), which implies that $\hat V(\om)=\hat V(-\om)$ for all $\om\in\bR^d$, and from the fact that $J_kA$ commutes with $J_lB$ for $k\not=l$. This last formula can be recast as
$$
\ba
{}[\cV_N,\cM_N^{in}(|\phi\ra\la\phi|)]
\\
=\tfrac1{(2\pi)^d}\int_{\bR^d}\hat V(\om)\tfrac1{N^2}\sum_{1\le k\not=l\le N}\left(J_lE_\om^*J_k(E_\om|\phi\ra\la\phi|)-J_k(|\phi\ra\la\phi|E_\om)J_lE_\om^*\right)d\om
\\
=\tfrac1{(2\pi)^d}\int_{\bR^d}\hat V(\om)\tfrac1{N^2}\sum_{1\le k,l\le N}\left(J_lE_\om^*J_k(E_\om|\phi\ra\la\phi|)-J_k(|\phi\ra\la\phi|E_\om)J_lE_\om^*\right)d\om
\\
=\tfrac1{(2\pi)^d}\int_{\bR^d}\hat V(\om)\left(\cM_N^{in}(E_\om^*)\cM_N^{in}(E_\om|\phi\ra\la\phi|)-\cM_N^{in}(|\phi\ra\la\phi|E_\om)\cM_N^{in}(E_\om^*)\right)d\om
\\
=\cC[V,\cM_N^{in},\cM_N^{in}](|\phi\ra\la\phi|)&\,.
\ea
$$
In the left-hand side of the second equality, observe that the operators $J_lE_\om^*$ and $J_k(E_\om|\phi\ra\la\phi|)$ or $J_k(|\phi\ra\la\phi|E_\om)$ obviously commute (by definition of $J_k$) since $k\not=l$. In the right hand side of the second equality,
the operators $J_lE_\om^*$ and $J_k(E_\om|\phi\ra\la\phi|)$ or $J_k(|\phi\ra\la\phi|E_\om)$ do not commute in general unless $k\not=l$, but one easily checks that
$$
\ba
J_kE_\om^*J_k(E_\om|\phi\ra\la\phi|)=J_k(E_\om^*E_\om|\phi\ra\la\phi|)=&J_k(|\phi\ra\la\phi|)
\\
=&J_k(|\phi\ra\la\phi|E_\om E_\om^*)=J_k(|\phi\ra\la\phi|E_\om)J_k(E_\om^*)\,.
\ea
$$
This explains why the second equality holds true. The remaining equalities being obvious, we are left with the task of computing
$$
\ba
e^{+it\cH_N/\hb}[\cV_N,\cM_N^{in}(|\phi\ra\la\phi|)]e^{-it\cH_N/\hb}=e^{+it\cH_N/\hb}\cC[V,\cM_N^{in},\cM_N^{in}](|\phi\ra\la\phi|)e^{-it\cH_N/\hb}
\\
=\tfrac1{(2\pi)^d}\int_{\bR^d}\hat V(\om)\big(e^{+it\cH_N/\hb}\cM_N^{in}(E_\om^*)e^{-it\cH_N/\hb}e^{+it\cH_N/\hb}\cM_N^{in}(E_\om|\phi\ra\la\phi|)e^{-it\cH_N/\hb}
\\
-e^{+it\cH_N/\hb}\cM_N^{in}(|\phi\ra\la\phi|E_\om)e^{-it\cH_N/\hb}e^{+it\cH_N/\hb}\cM_N^{in}(E_\om^*)e^{+it\cH_N/\hb}\big)d\om
\\
=\!\!\int_{\bR^d}\!\!\!\hat V(\om)\big(\cM_N(t)(E_\om^*)\cM_N(t)(E_\om|\phi\ra\la\phi|)\!-\!\cM_N(t)(|\phi\ra\la\phi|E_\om)\cM_N(t)(E_\om^*)\big)\tfrac{d\om}{(2\pi)^d}
\\
=\cC[V,\cM_N(t),\cM_N(t)](|\phi\ra\la\phi|)&.
\ea
$$

Summarizing, we have seen that
$$
\ba
i\hb\d_t\la\Psi_N^{in}|\cM_N(t)(|\phi\ra\la\phi|)|\Psi_N^{in}\ra
\\
=-\la\Psi_N^{in}|e^{+it\cH_N/\hb}[\cK_N+\cV_N,\cM_N^{in}(|\phi\ra\la\phi|)]e^{-it\cH_N/\hb}|\Psi_N^{in}\ra
\\
=\la\Psi_N^{in}|\Ad(-\tfrac12\hb^2\Dlt)\cM_N(t)(|\phi\ra\la\phi|)-\cC[V,\cM_N(t),\cM_N(t)](|\phi\ra\la\phi|)|\Psi_N^{in}\ra
\ea
$$
for each $\Psi_N^{in}$ in the domain of $\cH_N$ and each $\phi$ in the domain of $\Dlt$, i.e. for each $\phi\in H^2(\bR^d)$. Observing that
$$
A=A^*\in\cL(\fH)\implies\cM_N(t)A=e^{it\cH_N/\hb}\left(\tfrac1N\sum_{k=1}^NJ_kA\right)e^{-it\cH_N/\hb}=(\cM_N(t)A)^*
$$
shows that
$$
\left(\Ad(-\tfrac12\hb^2\Dlt)\cM_N(t)(|\phi\ra\la\phi|)\right)^*=-\Ad(-\tfrac12\hb^2\Dlt)\cM_N(t)(|\phi\ra\la\phi|)\,,
$$
while
$$
\cC[V,\cM_N(t),\cM_N(t)](|\phi\ra\la\phi|)^*=-\cC[V,\cM_N(t),\cM_N(t)](|\phi\ra\la\phi|)\,.
$$
Thus the operator
$$
\ba
\cT:=\d_t\cM_N(t)(|\phi\ra\la\phi|)+&\tfrac{i}\hb\Ad(-\tfrac12\hb^2\Dlt)\cM_N(t)(|\phi\ra\la\phi|)
\\
-&\tfrac{i}\hb\cC[V,\cM_N(t),\cM_N(t)](|\phi\ra\la\phi|)=\cT^*\in\cL(\fH_N)
\ea
$$
satisfies
$$
\la\Psi_N^{in}|\cT|\Psi_N^{in}\ra=0
$$
for each $\Psi_N^{in}$ in the domain of $\cH_N$. By polarization, we conclude that $\cT=0$.
\end{proof}

\smallskip
There is a marked difference between the Klimontovich theorem in classical mechanics, and the previous theorem. Indeed, at first sight, the equation satisfied by $\cM_N(t)$ differs from the TDH equation, which is the quantum mean-field
equation analogous to the Vlasov equation in classical mechanics. However, this first impression is quite misleading, as shown by our next theorem. Before stating this theorem, we need to explain better how a Vlasov solution $f(t,dzd\zeta)$ 
can be compared to the time-dependent phase space empirical measure 
$$
\mu_{\Phi(t/N,X_N^{in},\Xi_N^{in})}(dzd\zeta)=\frac1N\sum_{j=1}^N\de_{x_j(t),\xi_j(t)}(dzd\zeta)
$$
associated to an $N$-particle system (where we recall that $\Phi$ is the Hamiltonian flow generated by the classical Hamiltonian
$$
\sum_{j=1}^N\tfrac1{2m}|\xi_j|^2+\sum_{1\le j<k\le N}V(x_j-x_k)
$$
with $Nm=1$ and with interaction potential $V$ satisfying (H1)-(H2)). Of course, both $f(t,dzd\zeta)$ and $\mu_{\Phi(t/N,X_N^{in},\Xi_N^{in})}(dzd\zeta)$ are Borel probability measures on the phase space $\bR^d\times\bR^d$, but the Klimontovich solution depends 
on the initial $N$-tuple of positions and momenta of the particle system, i.e. $(X_N^{in},\Xi_N^{in})$, whereas the Vlasov solution $f(t,dzd\zeta)$ is obviously independent of these initial coordinates.

\smallskip
Therefore, in the quantum setting, we must consider objects analogous to the Klimontovich solution $\cM_N(t)$ but ``independent'' of the initial $N$-particle coordinates --- in other words, a constant function of these parameters. 

\begin{Def}\lb{D-ChaoMor}
Let $\psi\in C(\bR;L^2(\bR^d))$ be a time-dependent wave function such that $\|\psi(t,\cdot)\|_\fH=1$ for all $t\in\bR$. The element $\cR_\psi(t)\in\cL(\cL(\fH),\cL_s(\fH_N))$ of the form
$$
\cR_\psi(t)A:=\la\psi(t,\cdot)|A|\psi(t,\cdot)\ra\,I_{\fH_N}\,,\qquad A\in\cL(\fH)\,,
$$
is called the ``chaotic morphism'' associated to the wave function $\psi$.
\end{Def}

In the classical setting, the analogous object is
$$
C_b(\bR^{2d})\ni\phi\mapsto\left(\int_{\bR^{2d}}\phi(z,\zeta)f(t,dzd\zeta)\right)1\in C_b(\bR^{2dN})\,,
$$
where $1$ is the constant function of $(X_N^{in},\Xi_N^{in})$ in $C_b(\bR^{2dN})$. (The terminology ``chaotic'' comes from the law of large numbers: if $(x_j,\xi_j)$ are mutually independent random phase space coordinates distributed
according to $f(t,dzd\zeta)$, the phase space empirical measure $\mu_{(X_N,\Xi_N)}\to f(t,dzd\zeta)$ weakly as $N\to\infty$, almost surely in the sequence $(x_j,\xi_j)_{j\ge 1}$. The limit as $N\to\infty$ of the phase space empirical
measure is in particular independent of, or constant in the sequence of phase space coordinates $(x_j,\xi_j)_{j\ge 1}$.)

\smallskip
The quantum analogue of the Klimontovich theorem is obtained by inserting a chaotic morphism in the equation satisfied by $\cM_N(t)$ presented in the preceding theorem.

\begin{Thm}\lb{T-MFh=1}
Let $V$ be a real-valued potential satisfying assumptions (H1)-(H4). Let $\psi\equiv\psi(t,x)$ be a solution of the Hartree equation
$$
\left\{
\ba
{}&i\hb\d_t\psi=-\tfrac12\hb^2\Dlt_x\psi+(V\star|\psi(t,\cdot)|^2)\psi\,,\qquad x\in\bR^d\,,
\\
&\psi\rstr_{t=0}=\psi^{in}\,,
\ea
\right.
$$
with initial data $\psi^{in}\in H^1(\bR^d)$ satisfying the normalization $\|\psi^{in}\|_{L^2(\bR^d)}=1$. Then the chaotic morphism $t\mapsto\cR_\psi(t)$ is a solution of the equation
$$
i\hb\d_t\cR(t)=\Ad(-\tfrac12\hb^2\Dlt)\cR(t)-\cC[V,\cR(t),\cR(t)]\,.
$$
\end{Thm}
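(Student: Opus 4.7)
The plan is to differentiate $\cR_\psi(t)A$ directly in $t$ and recognise the two resulting terms as $\Ad(-\tfrac12\hb^2\Dlt)\cR_\psi(t)A$ and $-\cC[V,\cR_\psi(t),\cR_\psi(t)]A$ respectively. Crucially, $\cR_\psi(t)A=\la\psi(t)|A|\psi(t)\ra I_{\fH_N}$ is always a scalar multiple of the identity on $\fH_N$, so the entire computation reduces to a single-particle calculation on $\fH$; this is essentially why the closed TDH equation is enough here (there is no hierarchy to close).

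First, using the Hartree equation and the self-adjointness on $\fH$ of the time-dependent Hamiltonian $H_\psi(t):=-\tfrac12\hb^2\Dlt+V_\psi(t,\cdot)$, the usual Heisenberg-picture manipulation gives
\[
i\hb\d_t\la\psi(t)|A|\psi(t)\ra=-\la\psi(t)|[H_\psi(t),A]|\psi(t)\ra
\]
for any $A\in\cL(\fH)$ on which $[H_\psi(t),A]$ makes sense. Multiplying by $I_{\fH_N}$ and splitting $H_\psi(t)=-\tfrac12\hb^2\Dlt+V_\psi(t,\cdot)$, the kinetic part becomes $\Ad(-\tfrac12\hb^2\Dlt)\cR_\psi(t)A$ by the very definition of $\Ad$. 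What remains is to identify the potential contribution $-\la\psi(t)|[V_\psi(t,\cdot),A]|\psi(t)\ra I_{\fH_N}$ with $-\cC[V,\cR_\psi(t),\cR_\psi(t)]A$.

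For this I would use (H4) and Fourier inversion to write the multiplication operator as
\[
V_\psi(t,\cdot)=\int_{\bR^d}\hat V(\om)\,\la\psi(t)|E_\om^*|\psi(t)\ra\,E_\om\,\tfrac{d\om}{(2\pi)^d},
\]
the Fourier transform of $|\psi(t,\cdot)|^2$ having naturally appeared as $\la\psi(t)|E_\om^*|\psi(t)\ra$. Commuting with $A$ and sandwiching between $\la\psi(t)|$ and $|\psi(t)\ra$ yields
\[
\la\psi|[V_\psi,A]|\psi\ra=\int_{\bR^d}\hat V(\om)\,\la\psi|E_\om^*|\psi\ra\,\la\psi|[E_\om,A]|\psi\ra\,\tfrac{d\om}{(2\pi)^d},
\]
and expanding $[E_\om,A]=E_\om A-AE_\om$ reproduces exactly the two terms in the definition of $\cC[V,\cR_\psi(t),\cR_\psi(t)]A$. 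Here the scalar (chaotic) nature of $\cR_\psi(t)$ is essential: the factors $\cR_\psi(t)(E_\om^*)$, $\cR_\psi(t)(E_\om A)$ and $\cR_\psi(t)(AE_\om)$ all commute freely, so the prescribed ordering in the definition of $\cC$ is automatic. This is precisely the feature that replaces the combinatorial identity among the $J_k$'s used in the proof of Theorem \ref{T-qKlim}.

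The main technical point I foresee is the justification of the interchange between the $\om$-integral and both the sesquilinear pairing $\la\psi|\cdot|\psi\ra$ and the commutator with $A$; this should follow from $\hat V\in L^1(\bR^d)$ together with dominated convergence, since the relevant integrands are norm-bounded by $|\hat V(\om)|$ times a universal constant depending only on $\|A\|$. A final polarisation argument, analogous to the one concluding the proof of Theorem \ref{T-qKlim}, extends the identity from $A$ in a convenient dense class (say $A=|\phi\ra\la\phi|$ with $\phi\in H^2(\bR^d)$) to all of $\cL(\fH)$, completing the proof.
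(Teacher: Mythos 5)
Your proof is correct and essentially matches the paper's approach: both hinge on the Fourier representation $V_\psi(t,\cdot)=\int_{\bR^d}\hat V(\om)\,\widehat{|\psi|^2}(t,\om)\,E_\om\,\tfrac{d\om}{(2\pi)^d}$ and the chaoticity $\cR_\psi(t)B=\la\psi(t)|B|\psi(t)\ra\,I_{\fH_N}$, the only difference being organizational --- the paper works from the right-hand side (computing $\Ad$ and $\cC$ on $A=|\phi\ra\la\phi|$ as scalars and matching them against $i\hb\d_t|\la\phi|\psi(t)\ra|^2$ via the TDH equation), while you differentiate $\la\psi(t)|A|\psi(t)\ra$ directly in the Heisenberg picture and split the resulting commutator $[H_\psi(t),A]$ into kinetic and potential parts. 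One small nit: the polarization closing the proof of Theorem~\ref{T-qKlim} passes from diagonal matrix elements in $\Psi_N^{in}$ to an operator identity, and is moot here precisely because $\cR_\psi(t)A$ is a scalar multiple of $I_{\fH_N}$; your Heisenberg-picture step already holds for any $A$ compatible with the domain of $-\tfrac12\hb^2\Dlt$, so no additional extension argument in $A$ is required.
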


In other words, the time-dependent Hartree equation (TDH) is a special case of the equation governing the evolution of quantum Klimontovich solutions $\cM_N(t)$ obtained in Theorem \ref{T-qKlim}.

\begin{proof}
Since $\cR(t)A=\la\psi(t,\cdot)|A|\psi(t,\cdot)\ra_{\fH}\,I_{\fH_N}$ then
$$
\cR(t)E^*_\om=\widehat{|\psi|^2}(t,\om)\,I_{\fH_N}\,,\qquad\om\in\bR^d\,,
$$
is the Fourier transform of the density function $x\mapsto|\psi(t,x)|^2$ associated to the Hartree solution. Setting $A:=|\phi\ra\la\phi|$ with $\phi\in H^2(\bR^d)$, one has therefore
$$
\ba
\cC[V,\cR_\psi(t),\cR_\psi(t)]A
\\
=\int_{\bR^d}((\cR_\psi(t)E^*_\om)\cR_\psi(t)(E_\om A)\!-\!\cR_\psi(t)(AE_\om)(\cR_\psi(t)E^*_\om))\hat V(\om)\tfrac{d\om}{(2\pi)^d}
\\
=\int_{\bR^d}\cR_\psi(t)([E_\om,A])\hat V(\om)\widehat{|\psi|^2}(t,\om)\tfrac{d\om}{(2\pi)^d}&\,.
\ea
$$
Then, by definition of $\cR_\psi(t)$, one has
$$
\ba
\cC[V,\cR_\psi(t),\cR_\psi(t)]A
=\int_{\bR^d}\widehat{V\!\star\!|\psi|^2}(t,\om)\la\psi(t,\cdot)|[E_\om,A]|\psi(t,\cdot)\ra_{\fH}\,I_{\fH_N}\tfrac{d\om}{(2\pi)^d}
\\
=\La\psi(t,\cdot)\bigg|\left[\int_{\bR^d}\widehat{V\!\star\!|\psi|^2}(t,\om)E_\om\tfrac{d\om}{(2\pi)^d},A\right]\bigg|\psi(t,\cdot)\Ra_{\fH}\,I_{\fH_N}
\\
=\la\psi(t,\cdot)|[V\!\star\!|\psi|^2(t,\cdot),A]|\psi(t,\cdot)\ra_{\fH}\,I_{\fH_N}&\,.
\ea
$$
In the special case where $A=|\phi\ra\la\phi|$, one has
$$
[V\!\star\!|\psi|^2(t,\cdot),A]=|V\!\star\!|\psi|^2(t,\cdot)\phi\ra\la\phi|-|\phi\ra\la V\!\star\!|\psi|^2(t,\cdot)\phi|\,,
$$
so that
$$
\cC[V,\cR_\psi(t),\cR_\psi(t)]A=-2i\Im\left(\overline{\la\phi|\psi(t,\cdot)\ra}\la\phi|V\!\star\!|\psi|^2(t,\cdot)|\psi(t,\cdot)\ra\right)I_{\fH_N}\,.
$$

On the other hand
$$
\cR_\psi(t)(|\phi\ra\la\phi|)=|\la\phi|\psi(t,\cdot)\ra|^2I_{\fH_N}\,,
$$
while
$$
\ba
\Ad(-\tfrac12\hb^2\Dlt_x)\cR_\psi(t)(|\phi\ra\la\phi|)=&-\cR_\psi(t)([-\tfrac12\hb^2\Dlt_x,|\phi\ra\la\phi|])
\\
=&-\la\psi(t,\cdot)|[-\tfrac12\hb^2\Dlt_x,|\phi\ra\la\phi|]|\psi(t,\cdot)\ra\,I_{\fH_N}
\\
=&2i\Im\left(\overline{\la\phi|\psi(t,\cdot)\ra}\la\phi|\tfrac12\hb^2\Dlt_x|\psi(t,\cdot)\ra\right)\,.
\ea
$$

Then, the fact that $\psi$ is a solution of the TDH equation implies that
$$
i\hb\d_t\la\phi|\psi(t,\cdot)\ra=\la\phi|-\tfrac12\hb^2\Dlt_x|\psi(t,\cdot)\ra+\la\phi|V\star|\psi(t,\cdot)|^2|\psi(t,\cdot)\ra\,,
$$
and therefore
$$
\ba
i\hb\d_t|\la\phi|\psi(t,\cdot)\ra|^2=&2i\Im\left(\overline{\la\phi|\psi(t,\cdot)\ra}\la\phi|-\tfrac12\hb^2\Dlt_x|\psi(t,\cdot)\ra\right)
\\
&+2i\Im\left(\overline{\la\phi|\psi(t,\cdot)\ra}\la\phi|V\!\star\!|\psi|^2(t,\cdot)|\psi(t,\cdot)\ra\right)\,,
\ea
$$
which obviously implies the desired result.
\end{proof}

\subsection{The Quantum Mean-Field Limit with Klimontovich Solutions}

%%%%%%%%%%%%%%%%%%%%%%%%%%%%%%%%%%%%%%%%%%%%%%%%%%%%%%%%%%%%%%%%%%%%%%%%%%%%%%%%%%%%%%%%%%%%%%%%%%%%%%%%%%%%%%%%%%%%%%%%%

As a first application of the notion of quantum Klimontovich solution discussed above, we present a derivation of the mean-field limit in quantum mechanics. This derivation is suboptimal, in particular because it assumes $\hat V\in L^1(\bR^d)$.
On the other hand the proof includes several features of interest for other applications.

\begin{Thm}
Let $V$ be a real-valued potential satisfying assumptions (H1)-(H4). Let $\psi\equiv\psi(t,x)$ be a solution of the Hartree equation
$$
\left\{
\ba
{}&i\hb\d_t\psi=-\tfrac12\hb^2\Dlt_x\psi+(V\star|\psi(t,\cdot)|^2)\psi\,,\qquad x\in\bR^d\,,
\\
&\psi\rstr_{t=0}=\psi^{in}\,,
\ea
\right.
$$
with initial data $\psi^{in}\in H^1(\bR^d)$ satisfying the normalization $\|\psi^{in}\|_{L^2(\bR^d)}=1$. For each $t\ge 0$, let 
$$
\Psi_N(t,\cdot):=e^{-it\cH_n/\hb}\Psi_N^{in}\qquad\text{ with }\Psi_N^{in}:=(\psi^{in})^{\otimes N}\,.
$$
Then the $N$-particle reduced density operator $R_{N:1}(t)$ associated to the wave function $\Psi_N(t,\cdot)$ satisfies
$$
\|\,R_{N:1}(t)-|\psi(t,\cdot)\ra\la\psi(t,\cdot)|\,\|\le\frac2{\sqrt{N}}\exp\left(\frac{2t\|\hat V\|_{L^1}}{(2\pi)^d\hb}\right)\,.
$$
\end{Thm}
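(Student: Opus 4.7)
The plan is to reduce the norm statement (trace norm, which implies the operator norm) to a single scalar functional, then establish a Gronwall-type estimate using the quantum Klimontovich framework developed above. I first introduce the Pickl-type \emph{non-chaoticity} functional
\[
\cN(t) := 1 - \la\psi(t,\cdot)|R_{N:1}(t)|\psi(t,\cdot)\ra_\fH\,,
\]
which controls the trace-norm distance via the standard Fuchs--van de Graaf-type inequality (verified by spectral decomposition of $R_{N:1}(t) - |\psi(t,\cdot)\ra\la\psi(t,\cdot)|$): $\|R_{N:1}(t)-|\psi(t,\cdot)\ra\la\psi(t,\cdot)|\|_1 \leq 2\sqrt{\cN(t)}$. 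By Lemma \ref{L-CharPropMN(t)} and Definition \ref{D-ChaoMor}, one has $\cN(t) = \la\Psi_N^{in}|(\cR_\psi(t)-\cM_N(t))(p(t))|\Psi_N^{in}\ra$ with $p(t):=|\psi(t,\cdot)\ra\la\psi(t,\cdot)|$, and since $\Psi_N^{in} = (\psi^{in})^{\otimes N}$ and $\psi(0)=\psi^{in}$ the tensor product structure yields $\cN(0)=0$. It therefore suffices to establish $\cN(t)\leq N^{-1}\exp(4t\|\hat V\|_{L^1}/((2\pi)^d\hb))$.

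Next I will derive the evolution of $\cN$. Writing $\cN(t) = 1 - \la\Psi_N(t)|\cM_N^{in}(p(t))|\Psi_N(t)\ra$ by bosonic symmetry, I differentiate in $t$ using the Schr\"odinger equation for $\Psi_N(t)$ and the Hartree equation $i\hb\dot p(t) = [-\tfrac12\hb^2\Dlt + V_{\psi(t)},p(t)]$. The identity $[\cK_N,\cM_N^{in}(A)] = \cM_N^{in}([-\tfrac12\hb^2\Dlt,A])$ used in the proof of Theorem \ref{T-qKlim} makes the kinetic contributions cancel exactly, leaving
\[
i\hb\dot\cN(t) = -\la\Psi_N(t)|[\cM_N^{in}(p(t)),\cV_N] + \cM_N^{in}([V_{\psi(t)},p(t)])|\Psi_N(t)\ra\,.
\]
A direct manipulation of $\cV_N = \tfrac{1}{2N}\sum_{j\neq k}V_{jk}$ using bosonic symmetry of $\Psi_N(t)$ collapses the right-hand side to the tagged-particle form
\[
i\hb\dot\cN(t) = -\la\Psi_N(t)|[J_1(p(t)),\xi_1(t)]|\Psi_N(t)\ra\,,
\]
where $\xi_1(t) := \tfrac1N\sum_{k\neq 1}V(x_1-x_k) - V_{\psi(t)}(x_1)$ is the fluctuation of the empirical potential acting on particle $1$ about its Hartree mean. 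Using (H4) and the Fourier representation from the proof of Theorem \ref{T-qKlim}, $\xi_1(t) = \int\hat V(\om)E_\om^{(1)}\tilde\xi(t,\om)\dd\om/(2\pi)^d$ with $\tilde\xi(t,\om) := \tfrac1N\sum_{k\neq 1}E_{-\om}^{(k)} - \widehat{|\psi|^2}(t,\om)I_{\fH_N}$; crucially, $\tilde\xi(t,\om)$ acts only on the variables $x_2,\ldots,x_N$ and therefore commutes with both $J_1(p(t))$ and $E_\om^{(1)}$, so that $[J_1(p(t)),E_\om^{(1)}]\tilde\xi(t,\om) = J_1([p(t),E_\om])\tilde\xi(t,\om)$.

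The technical heart will be obtaining the sharp estimate
\[
|\dot\cN(t)| \leq \frac{4\|\hat V\|_{L^1}}{(2\pi)^d\hb}\left(\cN(t) + \frac{1}{N}\right)\!,
\]
after which Gronwall's inequality combined with $\cN(0)=0$ yields $\cN(t) \leq N^{-1}(\exp(4t\|\hat V\|_{L^1}/((2\pi)^d\hb))-1)$, and the square-root reduction from the first paragraph delivers the announced bound. To obtain this estimate I will decompose $[p,E_\om] = pE_\om q - qE_\om p$ (with $q:=I-p$), expand $E_{-\om} = \widehat{|\psi|^2}(\om)p + pE_{-\om}q + qE_{-\om}p + qE_{-\om}q$ inside $\tilde\xi(t,\om)$, and exploit both the bosonic symmetry identity $\|q^{(k)}\Psi_N(t)\|^2 = \cN(t)$ (valid for every $k$) and the simplifying identity $qE_{-\om}p = (E_{-\om} - \widehat{|\psi|^2}(\om)I)p$ which rewrites the ``bad'' off-diagonal contributions in terms of $E_{-\om}-\widehat{|\psi|^2}(\om)I$ acting on a $p$-projected state. \emph{The main obstacle I anticipate is precisely here}: a naive Cauchy--Schwarz on $\|\tilde\xi(t,\om)\Psi_N(t)\|$ yields only $O(\sqrt{\cN(t)+1/N})$ and would, after Gronwall, deliver only polynomial-in-$t$ control; recovering the sharp linear estimate $\cN(t)+1/N$ requires pairing the troublesome terms $(qE_{-\om}p)^{(k)}\Psi_N(t)$ with the commutator $J_1([p,E_\om])$ \emph{before} applying Cauchy--Schwarz, along the lines of Pickl's counting argument recast here in the Klimontovich--Fourier language of Theorem \ref{T-qKlim}.
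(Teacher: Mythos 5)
Your approach is genuinely different from the paper's, and the structural outline is sound: the paper works with the dual‐observable quantity
\[
d_N(t):=\sup_{\|S_N\|\le1}\sup_{\|B\|\le1}|\la\Psi_N^{in}|S_N(\cM_N(t)-\cR_\psi(t))(B)|\Psi_N^{in}\ra|\,,
\]
whose Gronwall input comes from the exact commutator identity $[\cM_N(s)(E_\om A(s)),\cM_N(s)(E_\om^*)]=\tfrac1{N^2}\sum_kJ_k[E_\om A(s),E_\om^*]$ conjugated by the flow (giving the $2/N$ term), together with the elementary bound $\|\cM_N(s)([E_\om,A(s)])\|\le 2\|A(s)\|$; the $\sqrt N$ enters only once, via $d_N(0)\le 1/\sqrt N$, and the bound is \emph{already} on the operator norm because the sup over $B$ is built in. Your proposal instead differentiates the scalar Pickl functional $\cN(t)=1-\la\psi|R_{N:1}|\psi\ra$ (for which $\cN(0)=0$ exactly) and then pays a square root at the end via $\|R-p\|\le\|R-p\|_1\le 2\sqrt{\cN}$. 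Both are legitimate routes; the paper implements the Pickl philosophy in a parallel theorem (the Coulomb case under (H5)) but deliberately uses the dual-observable estimate for the present statement. What the paper's route buys is that the operator-norm bound and the sharp constant drop out without any intermediate square root; what your route buys is a more elementary functional and the exact vanishing of the initial error.

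The concrete gap in the proposal is the claimed Gronwall constant. To reproduce the theorem's exponent $\tfrac{2t\|\hat V\|_{L^1}}{(2\pi)^d\hb}$ after taking the square root, your scheme needs precisely
\[
|\dot\cN(t)|\le\frac{4\|\hat V\|_{L^1}}{(2\pi)^d\hb}\Bigl(\cN(t)+\tfrac1N\Bigr)\,,
\]
and this is not obtainable from the decomposition you sketch. Starting from $\dot\cN=\tfrac2{\hb(2\pi)^d}\int\hat V(\om)\,\Im\la\Psi_N|q_1E_\om^{(1)}p_1\tilde\xi(\om)|\Psi_N\ra\dd\om$ and inserting $p_l+q_l$ around each $E_{-\om}^{(l)}-\widehat{|\psi|^2}(\om)$, the $pEq$ and $qEq$ blocks are bounded per particle by $\cN$ and $2\cN$ respectively (the factor $2$ in $qEq$ coming from $\|q(E_{-\om}-\widehat{|\psi|^2})q\|\le 2$, which cannot be reduced uniformly in $\om$), the $qEp$ block gives $\sqrt\cN\sqrt{\cN+1/N}\le\cN+\tfrac1{2N}$ via the counting trick you describe, and the $-\tfrac1N\widehat{|\psi|^2}$ remainder gives $\tfrac12(\cN+\tfrac1{N^2})$. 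After reinstating the prefactor $\tfrac2{\hb(2\pi)^d}\|\hat V\|_{L^1}$, the total is of the order $\tfrac{9\|\hat V\|_{L^1}}{(2\pi)^d\hb}(\cN+\tfrac1N)$, not $4$. Gronwall then delivers $\cN(t)\le N^{-1}(e^{9Kt}-1)$ with $K=\|\hat V\|_{L^1}/((2\pi)^d\hb)$, hence $\|R_{N:1}(t)-p(t)\|\le\tfrac2{\sqrt N}e^{9Kt/2}$, which is the right qualitative result (same $N^{-1/2}$ rate) but a strictly weaker exponent than $e^{2Kt}$. So the route is viable but would produce a weaker constant than the theorem asserts; to recover the stated exponent with a Pickl-type functional one would need either a genuinely sharper treatment of the $qEq$ term or to argue as the paper does, directly at the level of operator norms, where the only $N$-dependent source is the single commutator $\tfrac1{N^2}\sum_kJ_k[\cdot,\cdot]$.

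A smaller remark: your identity $\cN(t)=\la\Psi_N^{in}|(\cR_\psi(t)-\cM_N(t))(p(t))|\Psi_N^{in}\ra$, the reduction of the kinetic terms via $[\cK_N,\cM_N^{in}(A)]=\cM_N^{in}([-\tfrac12\hb^2\Dlt,A])$, the resulting tagged-particle form $i\hb\dot\cN=-\la\Psi_N|[J_1p(t),\xi_1(t)]|\Psi_N\ra$, and the observation that $\tilde\xi(t,\om)$ commutes with the particle-$1$ operators, are all correct and well stated.
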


\begin{proof} Let $t\mapsto A(t)\in\cL(\fH)$ be the solution of the linear von Neumann equation with time-dependent potential
$$
i\hb\d_tA(t)=[-\tfrac12\hb^2\Dlt_x+V_\psi(t,\cdot),A(t)]\,,\qquad A\rstr_{t=0}=A^{in}\,.
$$
Elementary computations left to the reader show that
$$
i\hb\d_t(\cM_N(t)-\cR_\psi(t))(A(t))=-\cC[V,\cM_N(t)-\cR_\psi(t),\cM_N(t)](A(t))\,.
$$
Let $S_N\in\cL(\fH_N)$ with $\|S_N\|\le 1$. Then
$$
\ba
\la\Psi_N^{in}|S_N(\cM_N(t)-\cR_\psi(t))(A(t))|\Psi_N^{in}\ra-\la\Psi_N^{in}|S_N(\cM_N^{in}-\cR_\psi(0))(A^{in})|\Psi_N^{in}\ra
\\
=\frac{i}{\hb}\int_0^t\int_{\bR^d}\hat V(\om)\la\Psi_N^{in}|S_N(\cM_N(s)-\cR_\psi(s))(E_\om^*)\cM_N(s)(E_\om A(s))|\Psi_N^{in}\ra\tfrac{d\om ds}{(2\pi)^d}
\\
-\frac{i}{\hb}\int_0^t\int_{\bR^d}\hat V(\om)\la\Psi_N^{in}|S_N\cM_N(s)(A(s)E_\om)(\cM_N(s)-\cR_\psi(s))(E_\om^*)|\Psi_N^{in}\ra\tfrac{d\om ds}{(2\pi)^d}
\\
=\!\frac{i}{\hb}\!\int_0^t\!\!\int_{\bR^d}\!\!\hat V(\om)\la\Psi_N^{in}|S_N\cM_N(s)([E_\om,A(s)])(\cM_N(s)-\cR_\psi(s))(E_\om^*)|\Psi_N^{in}\ra\tfrac{d\om ds}{(2\pi)^d}
\\
-\frac{i}{\hb}\int_0^t\int_{\bR^d}\hat V(\om)\la\Psi_N^{in}|S_N[\cM_N(s)(E_\om A(s)),\cM_N(s)(E_\om^*)]|\Psi_N^{in}\ra\tfrac{d\om ds}{(2\pi)^d}&\,.
\ea
$$
Then
$$
\ba
{}[\cM_N(s)(E_\om A(s)),\cM_N(s)(E_\om^*)]
\\
=e^{+is\cH_N/\hb}[\cM_N^{in}(E_\om A(s)),\cM_N^{in}(E_\om^*)]e^{-is\cH_N/\hb}
\\
=e^{+is\cH_N/\hb}\frac1{N^2}\sum_{k=1}^NJ_k[E_\om A(s),E_\om^*]e^{-is\cH_N/\hb}&\,,
\ea
$$
so that
$$
\|[\cM_N(s)(E_\om A(s)),\cM_N(s)(E_\om^*)]\|\le\frac2N\|A(s)\|\,.
$$
Set
$$
d_N(t):=\sup_{\|S_N\|\le 1}\sup_{\|B\|=1}|\la\Psi_N^{in}|S_N(\cM_N(t)-\cR_\psi(t))(B)|\Psi_N^{in}\ra|\,.
$$
Since
$$
\|\cM_N(s)([E_\om,A(s)])\|\le\|[E_\om,A(s)]\|\le 2\|A(s)\|\,,
$$
and since 
$$
\|\phi^{in}\|_\fH=1\implies\|A(s)\|=1\,,
$$
one has
$$
|\la\Psi_N^{in}|S_N\cM_N(s)([E_\om,A(s)])(\cM_N(s)-\cR_\psi(s))(E_\om^*)|\Psi_N^{in}\ra|\le 2d_N(s)\,,
$$
so that
$$
\ba
|\la\Psi_N^{in}|S_N(\cM_N(t)-\cR_\psi(t))(A(t))|\Psi_N^{in}\ra|
\\
\le|\la\Psi_N^{in}|S_N(\cM_N^{in}-\cR_\psi(0))(A^{in})|\Psi_N^{in}\ra|+\frac{2\|\hat V\|_{L^1}}{(2\pi)^d\hb}\int_0^t\left(d_N(s)+\frac1N\right)ds&\,.
\ea
$$
Since $A(t)$ runs through the unit ball of $\cL(\fH)$ as $A^{in}$ runs through the unit ball of $\cL(\fH)$, one finds that
$$
\ba
d_N(t)=&\sup_{\|S_N\|\le 1}\sup_{\|\phi^{in}\|_\fH\le 1}|\la\Psi_N^{in}|S_N(\cM_N(t)-\cR_\psi(t))(|\phi(t,\cdot)\ra\la\phi(t,\cdot|)|\Psi_N^{in}\ra|
\\
\le& d_N(0)+\frac{2\|\hat V\|_{L^1}}{(2\pi)^d\hb}\int_0^t\left(d_N(s)+\frac1N\right)ds\,,
\ea
$$
so that
$$
d_N(t)\le d_N(0)\exp\left(\frac{2t\|\hat V\|_{L^1}}{(2\pi)^d\hb}\right)+\frac1N\left(\exp\left(\frac{2t\|\hat V\|_{L^1}}{(2\pi)^d\hb}\right)-1\right)
$$
by Gronwall's inequality. 

With $S_N=I_{\fH_N}$, one finds that
$$
\ba
d_N(t)\ge&\sup_{\|\phi\|_\fH\le 1}|\la\Psi_N^{in}|\cM_N(t)(|\phi\ra\la\phi|)|\Psi_N^{in}\ra-\la\Psi_N^{in}|\cR_\psi(t)(|\phi\ra\la\phi|)|\Psi_N^{in}\ra|
\\
=&\sup_{\|\phi\|_\fH\le 1}|\la\phi|R_{N:1}(t)|\phi\ra-|\la\psi(t,\cdot)|\phi\ra|^2|=\|R_{N:1}(t)-|\psi(t,\cdot)\ra\la\psi(t,\cdot)|\|\,.
\ea
$$

On the other hand, by the Cauchy-Schwarz inequality, if $S_N\in\cL(\fH)$ satisfy $\|S_N\|\le 1$ while $\|B^{in}\|=1$, one has
$$
\ba
|\la\Psi_N^{in}|S_N(\cM_N^{in}-\cR_\psi(0))(B^{in})|\Psi_N^{in}\ra|
\\
\le\|S\Psi_N^{in}\|\|(\cM_N^{in}-\cR_\psi(0))(B^{in})\Psi_N^{in}\|
\\
\le\|(\cM_N^{in}-\cR_\psi(0))(B^{in})\Psi_N^{in}\|&\,.
\ea
$$
Assuming that
$$
\Psi_N^{in}=\left(\psi^{in}\right)^{\otimes N}\,,
$$
one has
$$
\ba
\|(\cM_N^{in}-\cR_\psi(0))(B^{in})\Psi_N^{in}\|^2
\\
=\la\Psi_N^{in}|(\cM_N^{in}-\cR_\psi(0))(B^{in})^*(\cM_N^{in}-\cR_\psi(0))(B^{in})|\Psi_N^{in}\ra
\\
=\!\tfrac1N\la\psi^{in}|(B^{in})^*B^{in}|\psi^{in}\ra\!+\!\tfrac{N-1}N|\la\psi^{in}|B^{in}|\psi^{in}\ra|^2\!-\!2|\la\psi^{in}|B^{in}|\psi^{in}\ra|^2\!+\!|\la\psi^{in}|B^{in}|\psi^{in}\ra|^2
\\
=\!\tfrac1N\left(\la\psi^{in}|(B^{in})^*B^{in}|\psi^{in}\ra\!-\!|\la\psi^{in}|B^{in}|\psi^{in}\ra|^2\right)&\,.
\ea
$$
Therefore
$$
\|(\cM_N^{in}-\cR_\psi(0))(B^{in})\Psi_N^{in}\|^2\le\tfrac1N\|B^{in}\psi^{in}\|_\fH^2\le\tfrac1N\,,
$$
so that $d_N(0)\le\tfrac1{\sqrt{N}}$.
Gathering together all these inequalities shows that
$$
\|R_{N:1}(t)-|\psi(t,\cdot)\ra\la\psi(t,\cdot)|\|\le\frac1{\sqrt{N}}\exp\left(\frac{2t\|\hat V\|_{L^1}}{(2\pi)^d\hb}\right)+\frac1N\left(\exp\left(\frac{2t\|\hat V\|_{L^1}}{(2\pi)^d\hb}\right)-1\right)\,,
$$
ultimately leading to the desired inequality.
\end{proof}

\subsection{The Quantum Mean-Field Limit: Coulomb Interaction}

%%%%%%%%%%%%%%%%%%%%%%%%%%%%%%%%%%%%%%%%%%%%%%%%%%%%%%%%%%%%%%%%%%%%%%%%%%%%%%%%%%%%%%%%%%%%%%%%%%%%%%%%%%%%%%%%%%%%%%%%%

In the context of atomic physics, one often has to consider charged particles interacting through a repulsive Coulomb potential. In that case, the mean-field limit cannot be proved by the simple argument presented in the previous section.
However, the notion of quantum Klimontovich solution can be used also in this case, in the following manner.

Let $V$ satisfy (H1)-(H3) and
$$
V^2\le C(I-\Dlt)\leqno{(\mathrm H5)}
$$
for some constant $C>0$, in the sense of operators on $\fH$. (In space dimension $d=3$, the Hardy inequality, which can be put in the form\footnote{To see that $4$ is optimal, minimize in $\a>0$ the expression
$$
\int_{\bR^3}\left|\grad u+\a\frac{x}{|x|^2}u\right|^2dx\,.
$$}
$$
\frac1{|x|^2}\le 4(-\Dlt)
$$
implies that the Coulomb potential satisfies (H5).)

Let $\psi^{in}\in H^1(\bR^d)$ satisfy $\|\psi^{in}\|_{L^2}=1$, and let $\psi$ be the solution of the Hartree equation
$$
i\hb\d_t\psi(t,x)=-\tfrac12\hb^2\Dlt_x\psi(t,x)+(V\star|\psi(t,\cdot)|^2)(x)\psi(t,x)\,,\quad\psi\rstr_{t=0}=\psi^{in}\,.
$$

\begin{Thm}\label{T-Coulh=1}\cite{IBPFG}
Under the assumptions above, let $\cM_N(t)$ be the $N$-particle Klimontovich solution associated to the quantum Hamiltonian
$$
\cH_N=\sum_{k=1}^N-\tfrac12\hb^2\Dlt_{x_k}+\tfrac1N\sum_{1\le k<l\le N}V(x_k-x_l)\,.
$$
Then

\noindent
(1) one has
$$
\ba
i\hb\d_t\cM_N(t)(I_\fH-|\psi(t,\cdot)\ra\la\psi(t,\cdot)|)
\\
=\cC[V,\cM_N(t)-\cR_\psi(t),\cM_N(t)](I_\fH-|\psi(t,\cdot)\ra\la\psi(t,\cdot)|)&\,;
\ea
$$
(2) the interaction operator $\cC(V,\cM_N(t)-\cR(t),\cM_N(t))(I-|\psi(t,\cdot)\ra\la\psi(t,\cdot)|)$ is skew-adjoint on $\fH_N$ and satisfies the operator inequality
$$
\ba
\pm i\cC(V,\cM_N(t)-\cR(t),\cM_N(t))(I-|\psi(t,\cdot)\ra\la\psi(t,\cdot)|)
\\
\le 6L(t)\left(\cM_N(t)(I-|\psi(t,\cdot)\ra\la\psi(t,\cdot)|)+\tfrac{2}NI_{\fH_N}\right)&\,,
\ea
$$
where
$$
L(t):=\sqrt{C}\|\psi(t,\cdot)\|_{H^1}\,;
$$
(3) the $m$-particle reduced density operator $R_{N:m}(t)$ associated with the $N$-particle wave function
$$
\Psi_N(t,\cdot):=e^{-it\cH_n/\hb}\Psi_N^{in}\,,\quad\text{ with }\Psi_N^{in}:=(\psi^{in})^{\otimes N}
$$
satisfies
$$
\|R_{N:m}(t)-|\psi(t,\cdot)\ra\la\psi(t,\cdot)|^{\otimes m}\|\le 4\sqrt{\frac{m}N}\exp\left(\tfrac{3}\hb\int_0^tL(s)ds\right)
$$
for each $t\ge 0$ and each $m=1,\ldots,N$.
\end{Thm}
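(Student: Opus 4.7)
The plan is to exploit the fact that $\cR_\psi(t)(Q(t)) = 0$, where $Q(t) := I_\fH - |\psi(t,\cdot)\ra\la\psi(t,\cdot)|$, so that $\cM_N(t)(Q(t))$ coincides with $(\cM_N-\cR_\psi)(t)(Q(t))$ and directly measures the defect from the condensate. All three assertions will follow from a single closed evolution identity for this quantity.

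\textbf{Part (1).} I differentiate via the product rule, $i\hb\d_t[\cM_N(t)(Q(t))] = [i\hb\d_t\cM_N(t)](Q(t)) + \cM_N(t)(i\hb\d_t Q(t))$. The first factor is given by Theorem~\ref{T-qKlim}: $[i\hb\d_t\cM_N(t)](Q) = -\cM_N(t)([-\tfrac12\hb^2\Dlt,Q]) - \cC[V,\cM_N(t),\cM_N(t)](Q)$. The TDH equation gives $i\hb\d_t Q(t) = [-\tfrac12\hb^2\Dlt + V_\psi(t,\cdot),Q(t)]$, so applying $\cM_N(t)$ to this commutator cancels the kinetic piece above. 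The remaining mean-field potential piece $\cM_N([V_\psi,Q])$ is rewritten via the key identity $\cC[V,\cR_\psi,\cL](A) = \cL([V_\psi,A])$ valid for any linear $\cL$, which follows from $\cR_\psi(E^*_\om) = \widehat{|\psi|^2}(t,\om)I_{\fH_N}$ being a scalar multiple of the identity together with Fourier inversion (the same computation used in the proof of Theorem~\ref{T-MFh=1}). Collecting terms yields, up to the chosen sign convention, $\pm\,\cC[V,\cM_N(t)-\cR_\psi(t),\cM_N(t)](Q(t))$.

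\textbf{Part (2).} Skew-adjointness follows from $\cM_N(A^*) = \cM_N(A)^*$, $\cR_\psi(A^*) = \cR_\psi(A)^*$, $V$ real, and a change of variable $\om \mapsto -\om$ (using $\hat V(-\om) = \hat V(\om)$ from (H1)); for self-adjoint $A$ this gives $\cC[V,\cL_1,\cL_2](A)^* = -\cC[V,\cL_1,\cL_2](A)$ whenever the $\cL_j$ preserve the $*$-operation. The operator inequality is the technical heart of the theorem. My strategy is to decompose $E^*_\om = P E^*_\om P + P E^*_\om Q + Q E^*_\om P + Q E^*_\om Q$ with $P(t) = |\psi(t,\cdot)\ra\la\psi(t,\cdot)|$, observe $PE^*_\om P = \widehat{|\psi|^2}(\om)P$, and conclude $(\cM_N-\cR_\psi)(E^*_\om) = \tfrac1N\sum_k J_k(-\widehat{|\psi|^2}(\om)Q + PE^*_\om Q + QE^*_\om P + QE^*_\om Q)$, every summand carrying an explicit factor $Q$; similarly $\cM_N(E_\om Q)$ and $\cM_N(Q E_\om)$ already carry $Q$. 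Inserting these decompositions into the integral defining $\cC$, estimating by the operator Cauchy--Schwarz inequality, performing the $\om$-integration against $\hat V$ to reconstruct the multiplication operator $V(x_k-x_l)$, and invoking (H5) to bound $|V|$ by $\sqrt{C}\sqrt{I_\fH-\Dlt}$ acting on $\psi$, I extract the factor $L(t) = \sqrt{C}\|\psi(t,\cdot)\|_{H^1}$. Diagonal contributions ($k=l$ in the double particle-index sum) produce the $2I_{\fH_N}/N$ correction, while off-diagonal contributions are controlled by $\cM_N(Q)$. The main obstacle will be the careful bookkeeping of the several operator-valued summands generated by the $P/Q$-decomposition and verification that the accumulated multiplicative constant is at most $6$.

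\textbf{Part (3).} With $\Psi_N^{in} = (\psi^{in})^{\otimes N}$, one has $R_{N:1}(0) = |\psi^{in}\ra\la\psi^{in}|$, so the scalar $\cE_N(t) := \la\Psi_N^{in}|\cM_N(t)(Q(t))|\Psi_N^{in}\ra = 1 - \la\psi(t,\cdot)|R_{N:1}(t)|\psi(t,\cdot)\ra$ satisfies $\cE_N(0) = 0$. Testing the identity from (1) in the state $\Psi_N^{in}$ and using the bound from (2) gives $\hb\,\dot\cE_N(t) \le 6L(t)(\cE_N(t) + 2/N)$, and Gronwall yields $\cE_N(t) \le (2/N)(\exp(\tfrac{6}{\hb}\int_0^t L(s)\,ds) - 1)$. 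To transfer this to the $m$-particle reduced density operator I invoke a Bonferroni-type inequality: since $J_j(P(t))$ and $J_j(Q(t))$ are commuting projections on distinct particles, $\prod_{j=1}^m J_j(P(t)) \ge I_{\fH_N} - \sum_{j=1}^m J_j(Q(t))$, hence by bosonic symmetry of $\Psi_N(t)$, $1 - \Tr(P(t)^{\otimes m}R_{N:m}(t)) \le m\,\cE_N(t)$. Combined with the standard bound $\|R_{N:m}(t)-P(t)^{\otimes m}\|^2 \le \|R_{N:m}(t)-P(t)^{\otimes m}\|_{HS}^2 \le 2(1-\Tr(P(t)^{\otimes m}R_{N:m}(t)))$ (using $\Tr(R_{N:m}^2) \le 1$ and rank-one of $P^{\otimes m}$) together with the elementary inequality $\sqrt{e^{6x}-1} \le e^{3x}$ for $x \ge 0$, this delivers $\|R_{N:m}(t)-|\psi(t,\cdot)\ra\la\psi(t,\cdot)|^{\otimes m}\| \le 2\sqrt{m/N}\exp(\tfrac{3}{\hb}\int_0^t L(s)\,ds)$, comfortably absorbed by the stated factor $4\sqrt{m/N}$.
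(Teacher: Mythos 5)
The paper does not prove this theorem: it attributes it to \cite{IBPFG} (in preparation) and to Pickl and Knowles--Pickl, and only sketches the idea --- namely that testing the operator identity (1) and the operator inequality (2) on $|\Psi_N^{in}\ra\la\Psi_N^{in}|$ recovers Pickl's scalar Gronwall estimate for $\cE(t)=1-\la\psi(t)|R_{N:1}(t)|\psi(t)\ra$. Your proposal matches this outline exactly, and parts (1) and (3) of your argument are essentially complete and correct. Two remarks on part (1): the identity you derive carries a minus sign, $i\hb\d_t\cM_N(t)(Q(t))=-\cC[V,\cM_N-\cR_\psi,\cM_N](Q(t))$, whereas the printed statement has a plus sign; your computation is the consistent one, since specializing the identity used in the proof of Theorem~\ref{T-MFh=1} (with $A(t)=Q(t)$ and using $\cR_\psi(t)(Q(t))=0$) gives the same minus sign. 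Your part (3) is clean: the Bonferroni inequality for commuting projectors, the Hilbert--Schmidt bound, and $\sqrt{e^{6x}-1}\le e^{3x}$ combine correctly, and in fact you obtain the sharper constant $2\sqrt{m/N}$ in place of $4\sqrt{m/N}$.

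The genuine gap is in part (2), which you yourself identify as the ``technical heart.'' Your skew-adjointness argument (adjoints plus $\om\mapsto-\om$) is fine. But the crucial operator inequality --- that $\pm i\cC[V,\cM_N-\cR_\psi,\cM_N](Q)$ is dominated by $6L(t)(\cM_N(Q)+\tfrac2NI_{\fH_N})$ --- is only a strategy sketch: you list the $P/Q$-block decomposition of $E_\om^*$, operator Cauchy--Schwarz, Fourier reassembly of $V(x_k-x_l)$, and the Hardy-type bound (H5), but you do not carry out the estimate or verify the constant $6$. Since this is precisely what makes the singular-potential case work where Theorem~\ref{T-MFh=1} fails, the proof is not actually given. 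A related and unaddressed issue is that the hypotheses of the theorem are (H1)--(H3) and (H5), \emph{not} (H4); so $\hat V$ need not lie in $L^1$ (it does not for Coulomb), and the Gelfand--Pettis definition of $\cC[V,\L_1,\L_2]$ via $\int\hat V(\om)(\ldots)\,d\om$ must first be extended to this singular setting. Your reassembly step implicitly does this, but it should be stated and justified (e.g.\ by a truncation-and-passage-to-the-limit argument in $\om$, with the limit controlled via (H5)) \emph{before} the inequality can even be formulated. Until part (2) is proved in detail and the definition of $\cC$ is regularized, the argument is a plausible road map rather than a proof.
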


This is a reformulation of an earlier result by Pickl \cite{Pickl09} and Knowles-Pickl \cite{KnowPickl} in terms of the quantum Klimontovich solution. Pickl's original idea \cite{Pickl09} was to consider the quantity
$$
\cE(t):=1-\la\psi(t,\cdot)|R_{N:1}(t)|\psi(t,\cdot)\ra\,,
$$
and to prove that
$$
\tfrac{d}{dt}\cE(t)\le 10\|V\|_{L^{2r}}\|\psi(t,\cdot)\|_{L^{2r'}}\left(\cE(t)+\tfrac1N\right)\,.
$$
by a clever decomposition of $\tfrac{d}{dt}\cE(t)$ into the sum of three terms to be analyzed separately. Observing instead that
$$
\cE(t)=\la\Psi_N^{in}|\cM_N(t)(I_\fH-|\psi(t,\cdot)\ra\la\psi(t,\cdot)|)|\Psi_N^{in}\ra
$$
suggests the idea of using the equation satisfied by the quantum Klimontovich solution $\cM_N(t)$ to write an \textit{operator} inequality, specifically statement (2) in the theorem above, instead of the \textit{scalar} inequality satisfied by 
$\tfrac{d}{dt}\cE(t)$ as in Pickl's original work \cite{Pickl09}. One recovers the Knowles-Pickl, or the Pickl estimate by evaluating the operators in (2) on the quantum state defined by the wave function $\Psi_N^{in}$.

One essential difference between Theorem \ref{T-Coulh=1} and Theorem \ref{T-MFh=1} is in the use of the single particle test wave function. In the proof of Theorem \ref{T-MFh=1}, one considers the operator
$$
(\cM_N(t)-\cR_\psi(t))(|\phi(t,\cdot)\ra\la\phi(t,\cdot)|)
$$
where $\phi(t,\cdot)$ is any wave function propagated by the mean-field dynamics defined by the Hartree solution $\psi$, whereas in Theorem \ref{T-Coulh=1} one considers the operator
$$
\cM_N(t)(I_\fH-|\psi(t,\cdot)\ra\la\psi(t,\cdot)|)
$$
where $\psi$ is the target Hartree solution. Because of the specifics of the latter choice, the approach described in Theorem \ref{T-Coulh=1} applies only to \textit{pure} quantum states, in other words, on quantum states which can be 
described by means of a (single) wave function, and not to \textit{mixed} states, i.e. quantum states described by means of a density operator --- see lecture 3 below for a brief description of this notion. On the contrary, the proof of 
Theorem \ref{T-MFh=1} can be easily generalized to mixed states.

\subsection{Miscellaneous Remarks}

%%%%%%%%%%%%%%%%%%%%%%%%%%%%%%%%%%%%%%%%%%%%%%%%%%%%%%%%%%%%%%%%%%%%%%%%%%%%%%%%%%%%%%%%%%%%%%%%%%%%%%%%%%%%%%%%%%%%%%%%%

In this lecture, we have chosen to describe the quantum mean-field limit in terms of quantum Klimontovich solutions because of the novelty of this approach, and also because it parallels the classical theory presented in lecture 1.
However, this is by no means the only way in which the quantum mean-field limit can be justified rigorously.

\smallskip
\noindent
(1) Historically, the first rigorous justification of the quantum mean-field limit was obtained by analyzing the \textit{BBGKY hierarchy}. Starting from the $N$-particle Schr\"odinger equation
$$
i\hb\d_t\Psi_N(t,X_N)=\cH_N\Psi_N(t,X_N)\,,
$$
one easily obtains a differential equation for the single-particle reduced density operator $R_{N:1}(t)$. Because of the $2$-body interaction potential $V$ in the quantum Hamiltonian $\cH_N$, the differential equation for $R_{N:1}$ involves
the $2$-particle reduced density operator $R_{N:2}(t)$. Therefore, one writes a differential equation for the operator $R_{N:2}(t)$, but this equation involves the $3$-particle reduced density operator $R_{N:3}(t)$. More generally, for each 
integer $k<N$, the differential equation satisfied by the $k$-particle reduced density operator $R_{N:k}(t)$ involves the $k+1$-particle reduced density operator $R_{N:k+1}(t)$. One obtains in this way a sequence of differential equations
for $R_{N:k}(t)$ for all $k\ge 1$ --- with the convention that
$$
R_{N:N}(t):=|\Psi_N(t,\cdot)\ra\la\Psi_N(t,\cdot)|\,,\qquad R_{N:k}(t)=0\text{ if }k>N\,.
$$
This sequence of differential equations is known as the ``BBGKY hierarchy'' (named after Bogolyubov, Born, Green, Kirkwood and Yvon). The idea is to pass to the limit in each equation of this hierarchy, i.e. for each $k\ge 1$ in the limit
as $N\to\infty$, and to prove by some uniqueness argument akin to the Cauchy-Kovalevska theorem that
$$
R_{N:k}(t)\to|\psi(t,\cdot)\ra\la\psi(t,\cdot)|
$$
in some appropriate topology, where $\psi$ is the Hartree solution. The first proof along this line is due to Spohn and sketched in \cite{Spohn80}; more details can be found in \cite{BGM}, and the interpretation in terms of the Cauchy-Kovalevska 
theorem is presented in \cite{BEGMY}. Incidentally, it is interesting to notice that the BBGKY approach was not used on the classical mean-field limit, at least until very recently: see \cite{Duerinckx}.

\smallskip
\noindent
(2) Spohn's derivation of the quantum mean-field limit by means of the BBGKY hierarchy relies on the assumption that the interaction potential is $V$ even (i.e. satisfies (H1)) and that
$$
V\in L^\infty(\bR^d)\,.
$$
Therefore, this derivation did not include the physically interesting case of a repulsive Coulomb interaction between identical charged particles. This case was handled later by Erd\"os and Yau \cite{EY} (see also \cite{BEGMY}). While Spohn's 
original argument involved estimates in \textit{trace-norm} (see lecture 3 for a definition of the trace of an nonnegative operator on $\fH$) for the reduced density operators $R_{N:k}$, one of the key ideas in \cite{EY} was to use \textit{weighted}
trace-norms on $R_{N:k}(t)$ involving cross-derivatives in the $k$-tuples of position variables.

\smallskip
\noindent
(3) One of the shortcomings of the BBGKY approach is the lack of quantitative information on the convergence rate obtained by this method. As explained above, this method involves a uniqueness argument \`a la Cauchy-Kovalevska, which is
therefore very far from a stability estimate. 

For that reason, Rodnianski and Schlein \cite{RodSch} proposed a convergence rate estimate for the mean-field limit based on a formulation of the problem in a 2nd quantization setting, in other words in the bosonic Fock space. They obtained 
a $O(1/\sqrt{N})$ convergence rate, consistent with the estimate obtained in Theorem \ref{T-Coulh=1}. We shall not give too many details on this approach, which requires being acquainted with the fundamental notions of 2nd quantization
(Fock space, creation/annihilation operators, number operator\dots) Section 2 of \cite{RodSch} provides a very clear introduction to this material. Suffices it to say that the $O(1/\sqrt{N})$ convergence rate in \cite{RodSch} is obtained under the 
assumption that $V$ satisfies (H1)-(H5) (exactly as in Theorem \ref{T-Coulh=1}).

\smallskip
\noindent
(4) That the same assumptions (H1) and (H5) on the interaction potential appear in the 2nd quantization approach \cite{RodSch}, in the Knowles-Pickl result \cite{KnowPickl} and in the quantum Klimontovich solutions approach \cite{IBPFG} is 
hardly surprising, for the following reasons. We have already explained in the paragraph following Theorem \ref{T-Coulh=1} between Pickl's approach \cite{Pickl09} and the the quantum Klimontovich solutions approach \cite{IBPFG}. 

Using the 2nd quantization approach as in \cite{RodSch} to prove the quantum mean-field limit seems unnecessarily complicated, since the quantum dynamics corresponding to the quantum Hamiltonian $\cH_N$ preserves the particle number $N$,
whereas the formalism of Fock spaces is specifically designed to handle situations where the particle number varies (for instance due to disintegration). Since the quantum dynamics $e^{-it\cH_N/\hb}$ leaves the particle number $N$ invariant, 
the $N$-particle sector in the Fock space is invariant under the dynamics considered in \cite{RodSch}. The restriction of this dynamics to the $N$-particle sector corresponds precisely to the equation for the quantum Klimontovich solution obtained in
Theorem \ref{T-qKlim}. More precisely, using freely the notation in \cite{RodSch}, one has
$$
a^*(\phi)a(\phi)=0\oplus\bigoplus_{N\ge 1}N\cM_N^{in}(|\phi\ra\la\phi|)
$$
for each $\phi\in\fH$ such that $\|\phi\|_\fH=1$. Consider the Hamiltonian in Fock space defined by the formula
$$
\cH:=\tfrac1{2m}\hb^2\int_{\bR^d}dx\grad_xa_x^*\cdot\grad_xa_x+\iint_{\bR^{2d}}dxdyV(x-y)a^*_xa^*_ya_ya_x\,,
$$
where $m$ is the mass of one particle. The (unbounded) operator $\cH$ defines a unitary group $e^{-it\cH/\hb}$ in Fock space leaving the $N$-particle sector invariant for each $N\ge 0$. Up to some appropriate rescaling of the time variable,
and setting $m=1/N$, the restriction to the $N$-particle sector of 
$$
e^{-it\cH/\hb}a^*(\phi)a(\phi)e^{+it\cH/\hb}
$$
is expected to coincide with
$$
N\cM_N(Nt)(|\phi\ra\la\phi|)\,.
$$
These remarks will be presented in detail in \cite{FG2q}; they provide the missing link between the second quantization approach and the quantum Klimontovich solution approach for the quantum mean-field limit.

\smallskip
\noindent
(5) So far we have considered Hamiltonians of the form
$$
\cH_N=\sum_{k=1}^N-\tfrac1{2m}\hb^2\Dlt_{x_k}+\sum_{1\le k<l\le N}V(x_k-x_l)\,,
$$
in which the potential energy comes only from the binary interaction between the particles. All the mathematical tools presented in this lecture apply to more general Hamiltonians of the form
$$
\cH_N=\sum_{k=1}^N(-\tfrac1{2m}\hb^2\Dlt_{x_k}+U(x_k))+\sum_{1\le k<l\le N}V(x_k-x_l)\,,
$$
where $U$ is an external potential acting separately on each particle. (In the case of atomic physics, one could think of $V(x_k-x_l)$ as the Coulomb repulsive interaction between electrons at the positions $x_k$ and $x_l$, whereas $U(x_k)$
would be the attracting potential exerted by the nuclei on an electron at the position $x_k$.) We shall not dwell on this matter any longer, and leave it to the reader to modify all the statements in the present lecture in order to handle this more
general class of quantum Hamiltonians.

\smallskip
\noindent
(6) There is also the problem of deriving a theory for fluctuations around the mean-field limit, both for the classical and the quantum dynamics. This problem has been studied in \cite{BraunHepp} in the classical case (for regular potentials).
More recently, the quantum analogue of this problem has been treated in \cite{GBAKirkSchlein} in the 2nd quantization setting, and under the same assumptions as in \cite{RodSch}. It should be possible to express this result in terms of
quantum Klimontovich solutions (to avoid the unnatural appearance of Fock'space in a problem where the particle number is constant): see \cite{FGSS}.

\smallskip
\noindent
(7) In this lecture, we have discussed only the case of $N$ bosons. However, in atomic physics, electrons, which are fermions, are the interacting particles of interest. Because of the Pauli exclusion principle, the kinetic energy of $N$ identical
fermions in a box of unit volume in $\bR^3$ grows at least as $N^{5/3}$. In order for the kinetic energy and the potential energy in 
$$
\sum_{k=1}^N-\tfrac1{2m}\hb^2\Dlt_{x_k}+\sum_{1\le k<l\le N}V(x_k-x_l)
$$
to be of the same order of magnitude, the coupling constant in front of the potential energy should be of order $1/N^{1/3}$, instead of $1/N$ as in the bosonic case. For this reason, we scale the time as $t=N^{-1/3}\tau$, which leads us to consider 
the scaled Schr\"odinger equation
$$
i\hb N^{1/3}\d_\tau\Psi_N=\sum_{k=1}^N-\tfrac1{2}\hb^2\Dlt_{x_k}\Psi_N+\frac1{N^{1/3}}\sum_{1\le k<l\le N}V(x_k-x_l)\Psi_N\,.
$$
Set $\eps:=N^{-1/3}$; multiplying both sides of the equation above by $\eps^2$ shows that
$$
i\hb\eps\d_\tau\Psi_N=\sum_{k=1}^N-\tfrac1{2}\hb^2\eps^2\Dlt_{x_k}\Psi_N+\frac1N\sum_{1\le k<l\le N}V(x_k-x_l)\Psi_N\,.
$$
In other words, the mean-field limit for fermions corresponds to studying the equation above with $N\to\infty$ and $\eps=N^{-1/3}\to 0$. Letting $\eps\to 0$ in the Schr\"odinger equation above with $\hb=1$ corresponds to the classical limit of 
quantum mechanics. Thus, the mean-field limit for fermions must involve mathematical techniques combining both the classical limit and the mean-field limit for the Schr\"odinger equation, in the distinguished asymptotics $\eps^3N=1$. This topic 
will be studied in more details in lecture 3. 

Alternatively, one could keep the same scaling as for the bosonic mean-field limit in the case of fermions, but this will lead to situations where the potential energy is negligible compared to the kinetic energy of the $N$-fermion system. In this setting
however, one can check that the time-dependent Hartree-Fock equation naturally appears in that limit --- but of course, one should instead think of the mean-field equation obtained in this way as an asymptotic correction to the (uninteresting) free 
dynamics, so that the accuracy of the approximation becomes of interest in this case. See \cite{BGGM1,BGGM2}. 

Otherwise, the mean-field limit for $N$ fermions in the scaling for which the kinetic and the potential energies are comparable, leading to the time-dependent Hartree, or the time-dependent Hartree-Fock equations, has been studied in \cite{BPS}
and in \cite{BRSS,BJPSS}. (Notice that these references use the formalism of 2nd quantization, as in (4) above, but in the fermionic setting.)

%%%%%%%%%%%%%%%%%%%%%%%%%%%%%%%%%%%%%%%%%%%%%%%%%%%%%%%%%%%%%%%%%%%%%%%%%%%%%%%%%%%%%%%%%%%%%%%%%%%%%%%%%%%%%%%%%%%%%%%%%

\section{Lecture 3: Mean-Field and Classical Limits \\ in Quantum Mechanics}

%%%%%%%%%%%%%%%%%%%%%%%%%%%%%%%%%%%%%%%%%%%%%%%%%%%%%%%%%%%%%%%%%%%%%%%%%%%%%%%%%%%%%%%%%%%%%%%%%%%%%%%%%%%%%%%%%%%%%%%%%

Let us begin with the following diagram in order to explain what has been achieved so far.

\smallskip

\begin{center}
\begin{tabular}{ccc}
{\fbox{\bf Schr\"odinger}}& {$\stackrel{N\to\infty}{\longrightarrow}$}& {\fbox{\bf Hartree}} 
\\ [3mm]
{$\downarrow$} & & {$\downarrow$} 
\\[3mm]
{${\hbar\to0}$}&$\searrow$& {${\hbar\to 0}$} 
\\ [3mm]
$\downarrow$&& $\downarrow$ 
\\ [3mm]
{\fbox{\bf Liouville}}& {$\stackrel{N\to\infty}{\longrightarrow}$}&{\fbox {\bf Vlasov}} 
\end{tabular}
\end{center}

\smallskip
The lower horizontal arrow corresponds to the limit studied in lecture 1, with convergence rate expressed in terms of the Dobrushin inequality involving the Monge-Kantorovich distance.

The upper horizontal arrow corresponds to the limit studied in lecture 2, with convergence rate given by Theorem \ref{T-MFh=1} --- or by Theorem \ref{T-Coulh=1} in the Coulomb case.

The first results on the joint mean-field and classical limit, i.e. the oblique arrow in the diagram, without any distinguished limit scaling, are \cite{GraffiMartiPulvi,PezzoPulvi}.

\subsection{Dynamics of $N$-Body Density Operators}

%%%%%%%%%%%%%%%%%%%%%%%%%%%%%%%%%%%%%%%%%%%%%%%%%%%%%%%%%%%%%%%%%%%%%%%%%%%%%%%%%%%%%%%%%%%%%%%%%%%%%%%%%%%%%%%%%%%%%%%

First we recall the notion of density operator in quantum mechanics. This is the quantum analogue of the notion of distribution function in kinetic theory.

\subsubsection{Quantum Density Operators}

%%%%%%%%%%%%%%%%%%%%%%%%%%%%%%%%%%%%%%%%%%%%%%%%%%%%%%%%%%%%%%%%%%%%%%%%%%%%%%%%%%%%%%%%%%%%%%%%%%%%%%%%%%%%%%%%%%%%%%%

Let $0\le T=T^*\in\cL(\fH)$ where $\fH$ is a separable Hilbert space, and $(e_n)_{n\ge 1}$ a complete orthonormal system in $\fH$. The trace of $T$ is defined by the formula
$$
\Tr_\fH(T):=\sum_{n\ge 1}\la e_n|T|e_n\ra\in[0,+\infty]\,.
$$
One easily checks that the right hand side of this formula is independent of the choice of the complete orthonormal system $(e_n)_{n\ge 1}$ of $\fH$. (The notion of trace of a nonnegative operator is analogous to the integral of a nonnegative 
measurable function: it always exists as an element of $[0,+\infty]$.)

\smallskip
The set of \textit{density operators} on $\fH$ is
$$
\cD(\fH):=\{R\in\cL(\fH)\text{ s.t. }R=R^*\ge 0\text{ and }\Tr_\fH(R)=1\}\,.
$$
When $\fH=L^2(\bR^d)$, one should think of the set of density operators $\cD(\fH)$ as the quantum analogue of the set $\cP(\bR^d\times\bR^d)$ of Borel probability measures on phase space.

\smallskip
\noindent
\textbf{Example.} If $(\psi_n)_{n\ge 1}$ is an orthonormal system of wave functions, not necessarily complete,
$$
R=\sum_{n\ge 1}\l_n|\psi_n\ra\la\psi_n|\in\cD(\fH)\iff\l_n\ge 0\quad\text{ and }\sum_{n\ge 1}\l_n=1\,.
$$

\smallskip
The quantum analogue of  $\cP(\bR^d\times\bR^d)$ (the set of Borel probability measures on phase space with finite second order moments) is the set of \textit{finite energy} density operators:
$$
\cD_2(\fH):=\{R\in\cD(\fH)\text{ s.t. }\Tr_\fH(R^{1/2}(|x|^2-\Dlt_x)R^{1/2})<\infty\}\,.
$$
(In this terminology, finite energy refers to the quantum harmonic oscillator 
$$
-\tfrac1{2m}\hb^2\Dlt_x+\tfrac12m\om^2|x|^2\,,
$$
where $m$ is the particle mass and $\om$ the oscillation frequency.)

\smallskip
In the case of systems of $N$ indistinguishable particles moving in $\bR^d$, the relevant density operators are \textit{symmetric $N$-particle density operators} on the $N$-particle Hilbert space $\fH_N=\fH^{\otimes N}\simeq L^2(\bR^{dN})$
(if $\fH=L^2(\bR^d)$). 

The set of symmetric $N$-particle density operators is
$$
\cD^s(\fH_N):=\{R_N\in\cD(\fH_N)\text{ s.t. }U_\si R_NU_\si^*=R_N\text{ for all }\si\in\fS_N\}\,,
$$
where $U_\si$ is the representation of the symmetric group $\fS_N$ in $\fH_N$, defined by the formula
$$
U_\si\Psi_N(X_N):=\Psi_N(x_{\si^{-1}(1)},\ldots,x_{\si^{-1}(N)})\,,
$$
for all $\Psi_N\in\fH_N$.

\smallskip
For each symmetric, $N$-particle density operator $R_N\in\cD_s(\fH_N)$, one defines its $k$-particle marginal $R_{N:k}\in\cD^s(\fH_k)$ as follows. If $r_N(X_N,Y_N)$ is an integral kernel\footnote{If $R\in\cD(\fH)$, then $R^{1/2}$ is a 
Hilbert-Schmidt operator on $\fH$, and has therefore an integral kernel $r_{1/2}\equiv r_{1/2}(x,y)\in L^2(\bR^d_x\times\bR^d_y)$. Since $R^{1/2}$ is self-adjoint, $R$ has integral kernel
$$
r(x,y):=\int_{\bR^d}r_{1/2}(x,z)\overline{r_{1/2}(y,z)}dz\,.
$$
Of course, $r$ can be modified on a Lebesgue negligible set, and this is why we speak of ``an integral kernel''. Notice however that the integral kernel $r$ defined by the formula above has the following remarkable property: by the Cauchy-Schwarz
inequality
$$
\int_{\bR^d}|r(x+h,x)-r(x,x)|dx\le\|r_{1/2}\|_{L^2(\bR^{2d})}\left(\iint_{\bR^{2d}}|r_{1/2}(x+h,z)-r_{1/2}(x,z)|^2dxdz\right)^{1/2}\to 0
$$
as $|h|\to 0$, by the continuity of the action of $\bR^{2d}$ by translation on $L^2(\bR^{2d})$. In other words, the integral kernel $r$ above is such that $h\mapsto r(x+h,x)$ belongs to $C(\bR^d;L^1(\bR^d))$.This is a special case of Lemma 2.1 (1)
in \cite{BGM}. In particular, one has
$$
\Tr_\fH(R)=\int_{\bR^d}r(x,x)dx\,,
$$
and the observation above justifies the existence of the integral in the right-hand side of this identity.} of $R_N$, its $k$-th marginal $R_{N:k}$ has integral kernel
$$
r_k(X_k,Y_k)=\int_{\bR^{d(N-k)}}r_N(X_k,Z_{k,N},Y_k,Z_{k,N})dZ_{k,N}\,,
$$
where we recall that
$$
Z_{k,N}:=(z_{k+1},\ldots,z_N)\,.
$$
\textbf{Example.} If $R_N=|\Psi_N\ra\la\Psi_N|$ with $\Psi_N\in\fH_N$ symmetric, then $R_N\in\cD^s(\fH_N)$ and $R_{N:1}$ is the first reduced density operator defined in Lecture 2.

\smallskip
Then we introduce the quantum dynamics of $N$-body density operators. First, we recall the $N$-particle quantum Hamiltonian
$$
\sum_{k=1}^N-\tfrac1{2m}\hb^2\Dlt_{x_k}+\sum_{1\le k<l\le N}V(x_k-x_l)\,.
$$
Pick a length scale $\ell>0$ and an energy scale $W$, and define dimensionless position variables and interaction potential by the formulas
$$
\hat x_k=\frac{x_k}\ell\,,\qquad \text{ and }\quad\hat V(\hat x_k-\hat x_l)=\frac{V(x_k-x_l)}{W}\,.
$$
Then
$$
\frac1{NW}\left(\sum_{k=1}^N-\tfrac1{2m}\hb^2\Dlt_{x_k}+\sum_{1\le k<l\le N}V(x_k-x_l)\right)=\sum_{k=1}^N-\tfrac12\eps^2\Dlt_{\hat x_k}+\tfrac1N\sum_{1\le k<l\le N}\hat V(\hat x_k-\hat x_l)\,,
$$
where $\eps$ is the dimensionless parameter defined by
$$
\eps^2:=\frac{\hb^2}{Nm\ell^2W}\ll 1\,.
$$

Henceforth, dropping all hats on the scaled variables, we arrive at the dimensionless Hamiltonian
\be\lb{HNsemicl}
\cH_N=\sum_{k=1}^N-\tfrac12\eps^2\Dlt_{x_k}+\tfrac1N\sum_{1\le k<l\le N}V(x_k-x_l)\,,
\ee
where
$$
N\gg 1\,,\quad\text{ and }\quad\eps\ll 1\,.
$$
We have seen in lecture 2 assumptions on $V$ such that the differential operator $\cH_N$ above has a self-adjoint extension (obviously unbounded) on $\fH_N$, still denoted $\cH_N$. In particular, by Stone's theorem, $e^{-\frac{it\cH_N}{\eps}}$ 
is a unitary group on $\fH_N$. Starting from $R_N^{in}\in\cD^s(\fH_N)$, we define
$$
R_N(t)=e^{-\frac{it\cH_N}{\eps}}R_N^{in}e^{+\frac{it\cH_N}{\eps}}\in\cD^s(\fH_N)\,.
$$
\textbf{Example.} For instance, if $R^{in}=|\Psi^{in}_N\ra\la\Psi^{in}_N|$ is the pure state associated with the $N$-particle wave function $\Psi^{in}_N$, then
$$
R_N(t)=e^{-\frac{it\cH_N}{\eps}}(|\Psi^{in}_N\ra\la\Psi^{in}_N|)e^{+\frac{it\cH_N}{\eps}}=\bigg|e^{-\frac{it\cH_N}{\eps}}\Psi^{in}_N\Ra\La e^{-\frac{it\cH_N}{\eps}}\Psi^{in}_N\bigg|\,.
$$

\subsection{Quantum-to-Classical Wasserstein Pseudo-Distance}

%%%%%%%%%%%%%%%%%%%%%%%%%%%%%%%%%%%%%%%%%%%%%%%%%%%%%%%%%%%%%%%%%%%%%%%%%%%%%%%%%%%%%%%%%%%%%%%%%%%%%%%%%%%%%%%%%%%%%%%

We have seen in lecture 1 how the mean field limit in classical mechanics could be couched in terms of the Monge-Kantorovich, or Wasserstein distance of exponent $1$. 

In order to arrive at an analogous quantitative estimate for the joint mean-field and classical limits represented by the diagonal arrow in the diagram at the beginning of this lecture, we first construct an analogue of this metric. Of course
the conceptual difficulty is that one seeks to compare apparently unrelated objects, namely a (classical) probability density on phase space $\bR^d\times\bR^d$, and a (quantum) density operator on $\fH=L^2(\bR^d)$.

\subsubsection{Coupling Quantum and Classical Densities}

%%%%%%%%%%%%%%%%%%%%%%%%%%%%%%%%%%%%%%%%%%%%%%%%%%%%%%%%%%%%%%%%%%%%%%%%%%%%%%%%%%%%%%%%%%%%%%%%%%%%%%%%%%%%%%%%%%%%%%%

As always in the definition of Monge-Kantorovich, or Wasserstein distances, the first task is to define a notion of \textit{coupling} of $R\in\cD(\fH)$ and $f\in\cP(\bR^d\times\bR^d)$. Such a coupling will be an operator-valued map
$$
(x,\xi)\mapsto Q(x,\xi)=Q(x,\xi)^*\in\cL(\fH)\,,
$$
such that
$$
Q(x,\xi)\ge 0\,,\qquad\Tr_\fH(Q(x,\xi))=f(x,\xi)\,,\qquad\int_{\bR^{2d}}Q(x,\xi)dxd\xi=R\,.
$$
The set of all couplings of the probability density $f$ and of the density operator $R$ is denoted $\cC(f,R)$.

\smallskip
\noindent
\textbf{Example.} The map
$$
f\otimes F:\,(x,\xi)\mapsto f(x,\xi)R\qquad\hbox{ belongs to }\cC(f,R)\,.
$$
In particular, the set $\cC(f,R)$ is never empty.

\subsubsection{Pseudo-Distance Between Quantum and Classical Densities}

%%%%%%%%%%%%%%%%%%%%%%%%%%%%%%%%%%%%%%%%%%%%%%%%%%%%%%%%%%%%%%%%%%%%%%%%%%%%%%%%%%%%%%%%%%%%%%%%%%%%%%%%%%%%%%%%%%%%%%%

Once a notion of coupling between a classical and a quantum density has been defined, the next task to fulfil in order to arrive at an analogue of the Monge-Kantorovich, or Wasserstein metric is to propose a notion of \textit{cost} for
transporting matter from the phase space point at the position $x$ with momentum $\xi$ to the ``quantum point'' corresponding to the position $y$ and to the (rescaled) momentum $-i\eps\grad_y$. Of course, there are no ``quantum 
points'', but if one has in mind the square Euclidean distance between phase space points, this immediately suggests the cost
$$
c_\eps(x,\xi):=|x-y|^2+|\xi+i\eps\grad_y|^2\,.
$$
At variance with the (classical) transport cost between two phase space points, this new object is an (unbounded) operator-valued function of the classical phase space coordinates $(x,\xi)$. More precisely, it is a harmonic oscillator
in the quantum position variable $y$, shifted in phase space by $(x,\xi)$.

\begin{Def}\cite{FGTPArma} For all $f\in L^1\cap\cP_2(\bR^d\times\bR^d)$ and all $R\in\cD_2(\fH)$, the quantum-to-classical Wasserstein pseudo-distance between $R$ and $f$ is defined by the formula
$$
E_\eps(f,R):=\inf_{Q\in\cC(f,R)}\sqrt{\int_{\bR^{2d}}\Tr_\fH(Q(x,\xi)^{\frac12}c_\eps(x,\xi)Q(x,\xi)^{\frac12})dxd\xi}\,.
$$
\end{Def}

Notice the different normalizations of the transport cost $c_\eps$ in \cite{FGTPArma} and in the present paper.
 
The quantity in the right-hand side of the formula above is always finite, as can beween from the following elementary argument. First
$$
c_\eps(x,\xi)\le 2(|x|^2+|\xi|^2)I_\fH+2c_\eps(0,0)
$$
(this is an inequality between unbounded self-adjoint operators on $\fH$ parametrized by the phase space point $(x,\xi)$). Then, for each $Q\in\cC(f,R)$, one has
$$
\ba
\int_{\bR^{2d}}\Tr_\fH(Q(x,\xi)^{\frac12}c_\eps(x,\xi)Q(x,\xi)^{\frac12})dxd\xi
\\
\le 2\int_{\bR^{2d}}\left((|x|^2+|\xi|^2)\Tr_\fH(Q(x,\xi))+\Tr_\fH(Q(x,\xi)^{\frac12}c_\eps(0,0)Q(x,\xi)^{\frac12})\right)dxd\xi
\\
=2\int_{\bR^{2d}}(|x|^2+|\xi|^2)f(x,\xi)dxd\xi+2\Tr_\fH\left(c_\eps(0,0)^{\frac12}\int_{\bR^{2d}}Q(x,\xi)dxd\xi\,c_\eps(0,0)^{\frac12}\right)
\\
=2\int_{\bR^{2d}}(|x|^2+|\xi|^2)f(x,\xi)dxd\xi+2\Tr_\fH\left(c_\eps(0,0)^{\frac12}Rc_\eps(0,0)^{\frac12}\right)
\\
=2\int_{\bR^{2d}}(|x|^2+|\xi|^2)f(x,\xi)dxd\xi+2\Tr_\fH\left(R^{\frac12}c_\eps(0,0)R^{\frac12}\right)<\infty&\,,
\ea
$$
since $f\in\cP_2(\bR^d\times\bR^d)$ and all $R\in\cD_2(\fH)$. Finally, the inf in the definition of $E_\eps$ is finite since $\cC(f,R)$ is nonempty.

\smallskip
One can prove that quantum-to-classical Monge-Kantorovich or Wasserstein pseudo-distance satisfies the following triangle inequality.

\begin{Thm} [Triangle inequality] 
For all $f,g\in L^1\cap\cP_2(\bR^d\times\bR^d)$ and all density operator $R\in\cD_2(\fH)$, one has
$$
E_\eps(f,R)\le\MKd(f,g)+E_\eps(g,R)\,.
$$
\end{Thm}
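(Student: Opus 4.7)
The idea is to mimic the classical proof of the triangle inequality for Wasserstein distances via \emph{gluing of couplings}, suitably adapted to operator-valued densities. Fix any classical coupling $\pi\in\cC(f,g)$ (assume for simplicity that it has a Lebesgue density $p(x,\xi,y,\eta)$; the general case follows by mollifying $\pi$ by a vanishing Gaussian, which perturbs the quadratic cost and all traces by an arbitrarily small amount) and any quantum--classical coupling $Q\in\cC(g,R)$. Define the glued operator-valued density
$$
\tilde Q(x,\xi) := \int_{\bR^{2d}} p(x,\xi,y,\eta)\,\frac{Q(y,\eta)}{g(y,\eta)}\,dy\,d\eta,
$$
extended by $0$ on the $g$-null set $\{g=0\}$. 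Positivity and self-adjointness are immediate; taking the trace integrates $p$ in $(y,\eta)$ to return $f(x,\xi)$, and integrating in $(x,\xi)$ first cancels the $g$ in the denominator and then returns $\int Q(y,\eta)\,dy\,d\eta=R$. Thus $\tilde Q\in\cC(f,R)$.

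\emph{Pointwise operator inequality.} The heart of the proof is a Minkowski-type bound with intermediate classical point $(y,\eta)$. Writing $z$ for the quantum position variable on $\fH$, decompose
$$
x-z=(x-y)+(y-z),\qquad \xi+i\eps\grad_z=(\xi-\eta)+(\eta+i\eps\grad_z).
$$
Since the scalar shifts $x-y,\,\xi-\eta$ commute with every operator on $\fH$, expanding the squares and taking traces against the density operator $q:=Q(y,\eta)/g(y,\eta)\in\cD(\fH)$ gives
$$
\Tr_\fH[q\,c_\eps(x,\xi)]=|x-y|^2+|\xi-\eta|^2+2\,u\cdot v_q+\Tr_\fH[q\,c_\eps(y,\eta)],
$$
with $u:=(x-y,\xi-\eta)\in\bR^{2d}$ and $v_q:=(\Tr_\fH[q(y-z)],\,\Tr_\fH[q(\eta+i\eps\grad_z)])\in\bR^{2d}$. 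The operator Cauchy--Schwarz inequality $|\Tr_\fH[qA]|^2\le\Tr_\fH[qA^2]$, valid for each self-adjoint $A$, applied componentwise, yields $|v_q|^2\le\Tr_\fH[q\,c_\eps(y,\eta)]$. Combined with the Euclidean Cauchy--Schwarz $u\cdot v_q\le|u|\,|v_q|$, this produces the scalar inequality
$$
\sqrt{\Tr_\fH[q\,c_\eps(x,\xi)]}\le\sqrt{|x-y|^2+|\xi-\eta|^2}+\sqrt{\Tr_\fH[q\,c_\eps(y,\eta)]}.
$$

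\emph{Integral Minkowski and conclusion.} Substituting the definition of $\tilde Q$ into the transport cost $T_\eps(f,\tilde Q):=\int\Tr_\fH[\tilde Q(x,\xi)\,c_\eps(x,\xi)]\,dxd\xi$ turns it into an integral against the measure $p\,dxd\xi dyd\eta$. Squaring the pointwise bound above and applying Minkowski's integral inequality in $L^2(p\,dxd\xi dyd\eta)$ yields
$$
\sqrt{T_\eps(f,\tilde Q)}\le\Bigl(\int(|x-y|^2+|\xi-\eta|^2)\,p\,dxd\xi dyd\eta\Bigr)^{1/2}+\Bigl(\int\Tr_\fH[Q(y,\eta)\,c_\eps(y,\eta)]\,dyd\eta\Bigr)^{1/2},
$$
the second term arising because integrating $p(x,\xi,y,\eta)$ in $(x,\xi)$ returns $g(y,\eta)$, which cancels the $1/g(y,\eta)$ inside $q$. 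The first term on the right is the quadratic cost of the classical coupling $\pi$, while the second equals $\sqrt{T_\eps(g,Q)}$. Taking the infimum over $\pi\in\cC(f,g)$ and $Q\in\cC(g,R)$ delivers the announced inequality $E_\eps(f,R)\le\MKd(f,g)+E_\eps(g,R)$.

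\emph{Main obstacle.} The delicate step is the pointwise expansion: one must exploit that the classical displacements $x-y$ and $\xi-\eta$, being scalar multiples of the identity on $\fH$, commute with the quantum operators $y-z$ and $\eta+i\eps\grad_z$, so that the cross terms collapse to $2u\cdot v_q$ and only the operator Cauchy--Schwarz inequality is needed to control them. A secondary technicality is the absolute-continuity of the optimal $\pi$, which is not guaranteed in general and must be handled by a routine mollification argument using the finite-energy hypotheses $f,g\in\cP_2$ and $R\in\cD_2$.
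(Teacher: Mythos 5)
Your proof is correct, and it implements the natural strategy one would expect: a gluing construction for the operator-valued coupling, a pointwise Minkowski-type operator inequality obtained by splitting the displacement through the intermediate classical phase-space point and controlling the cross term via the trace Cauchy--Schwarz inequality, and finally the integral Minkowski inequality to pass from the pointwise bound to the costs. Note that the paper does not reproduce the argument in-line but instead defers to Theorem~3.5 of the cited companion work of Golse--Paul, so a line-by-line comparison is not possible here; nevertheless, your decomposition
$x-z=(x-y)+(y-z)$, $\xi+i\eps\grad_z=(\xi-\eta)+(\eta+i\eps\grad_z)$ together with the bound $|v_q|^2\le\Tr_\fH[q\,c_\eps(y,\eta)]$ is precisely the mechanism by which the cross term collapses, and the Minkowski step in $L^2(p\,dxd\xi dyd\eta)$ combined with the marginal identities recovers $\MKd(f,g)$ and $E_\eps(g,R)$ on the nose. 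Two technical points worth spelling out in a polished write-up: (i) on the set $\{g=0\}$ the quotient $Q(y,\eta)/g(y,\eta)$ is $0/0$, but since the second marginal of $\pi$ is $g$, the density $p$ vanishes there for a.e.\ $(x,\xi)$, so any measurable choice of density operator on that set leaves $\tilde Q$ unchanged; and (ii) the Cauchy--Schwarz steps $|\Tr_\fH[qA]|^2\le\Tr_\fH[qA^2]$ are for unbounded self-adjoint $A$, but the finite-energy hypotheses $g\in\cP_2$, $R\in\cD_2$ guarantee $\Tr_\fH[q\,c_\eps(y,\eta)]<\infty$ for a.e.\ $(y,\eta)$, which is enough to justify them for the relevant components $y_j-z_j$ and $\eta_j+i\eps\d_{z_j}$. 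Neither point is a gap, but both deserve a sentence each. The mollification step you use to reduce to couplings $\pi$ with Lebesgue density is also legitimate given $f\in L^1$; an alternative that avoids mollification is to disintegrate $\pi$ with respect to its first marginal $f$ and write $\tilde Q(x,\xi)=f(x,\xi)\int q(y,\eta)\,\pi_{(x,\xi)}(dyd\eta)$, which keeps the construction purely measure-theoretic.
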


See Theorem 3.5 in \cite{FGTPJmpa} for a proof of this important result. In particular, this implies that the function $f\mapsto E_\eps(f,R)$ is a nonexpanding map from $L^1\cap\cP_2(\bR^d\times\bR^d)$ equipped with the Monge-Kantorovich 
or Wasserstein metric of exponent $2$ to the real line with its usual metric defined by the absolute value:
$$
|E_\eps(f,R)-E_\eps(g,R)|\le\MKd(f,g)\,.
$$
This property can be used to extend the definition of $E_\eps(f,R)$ by a density argument to the case where $f\in\cP_2(\bR^d\times\bR^d)$ is a probability \textit{measure} and not a probability \textit{density} (with respect to the phase space 
Lebesgue measure).

\smallskip
Now, the pseudo-metric $E_\eps$ remains somewhat mysterious, and it would be helpful to have examples for which $E_\eps$ can be computed explicitly. 

For instance, the Monge-Kantorovich, or Wasserstein distance between two Dirac measures in phase space is easily computed (see Remark 7.5 (ii) 
in \cite{VillaniAMS}):
$$
\MKp(\de_{x_1,\xi_1},\de_{x_2,\xi_2})=\sqrt{|x_1-x_2|^2+|\xi_1-\xi_2|^2}\,.
$$
(In other words, the Monge-Kantorovich, or Wasserstein distance between two Dirac measures is equal to the Euclidean distance between the phase space points where the two Dirac measures are concentrated.) Another important example
is the computation of the Monge-Kantorovich, or Wasserstein distance of exponent $2$ between two Gaussian measures, for which an exact formula is known:
$$
\MKd(G_1,G_2)^2=|m_1-m_2|^2+\Tr(A_1)+\Tr(A_2)-2\Tr((\sqrt{A_1}A_2\sqrt{A_1})^{1/2})\,,
$$
where $G_1$ and $G_2$ are two Gaussian probability measures on $\bR^n$ with means $m_1$ and $m_2$, and with (nonsingular) covariance matrices $A_1$ and $A_2$. See Proposition 7 in \cite{GivensShortt}.

Unfortunately, there are not so many analogous examples for which $E_\eps$ can be computed explicitly. By comparison, the Monge-Kantorovich, or Wasserstein distance $\MKd$ is better understood than $E_\eps$. However, if it may be hard 
to \textit{compute} explicitly $E_\eps$, it is relatively easy to compare $E_\eps$ to better known quantities. We shall discuss two such comparison methods below.

\subsubsection{Wigner and Husimi Transforms and Lower Bound for $E_\eps$}

%%%%%%%%%%%%%%%%%%%%%%%%%%%%%%%%%%%%%%%%%%%%%%%%%%%%%%%%%%%%%%%%%%%%%%%%%%%%%%%%%%%%%%%%%%%%%%%%%%%%%%%%%%%%%%%%%%%%%%%

To each density operator $R$, with integral kernel denoted $r\equiv r(x,y)$, one can associate a phase space function, called its \textit{Wigner transform}, defined as follows:
$$
W_\eps[R](x,\xi):=\tfrac1{(2\pi)^d}\int_{\bR^d}e^{-i\xi\cdot y}r(x+\tfrac12\eps y,x-\tfrac12\eps y)dy\in\bR\,.
$$
One easily checks that
$$
\iint_{\bR^d\times\bR^d}W_\eps[R](x,\xi)dxd\xi=\Tr(R)=1\,.
$$
This suggests the idea of thinking of $W_\eps[R]$ as a distribution function in kinetic theory, in other words a probability density in phase space. Unfortunately, it is easy to find examples of density operators $R$ for which $W_\eps[R]$
is not a.e. nonnegative. For instance, set 
$$
\psi(x)=2^{1/2}\pi^{-1/4}xe^{-x^2/2}\,,\quad x\in\bR\,,
$$
and consider the density operator $R=|\psi\ra\la\psi|$. One easily checks that
$$
W_\eps[R](0,0)=\tfrac1{2\pi}\int_\bR\psi(\tfrac12\eps y)\overline{\psi(-\tfrac12\eps y)}dy=-\tfrac1{\pi\eps}\|\psi\|_{L^2(\bR)}^2<0\,.
$$

For this reason, it is convenient to replace the Wigner transform with a nonnegative variant thereof, known as the \textit{Husimi transform}, obtained in terms of the Wigner function by the formula
$$
\widetilde W_\eps[R]:=e^{\eps\Dlt_{x,\xi}/4}W_\eps[R]\ge 0\,.
$$
See \cite{LionsPaul} for a presentation of these notions.

\begin{Thm}[Lower bound for $E_\eps$]\lb{T-LBEeps} For all $f\in L^1\cap\cP_2(\bR^d\times\bR^d)$ and all density operator $R\in\cD_2(\fH)$, one has
$$
E_\eps(f,R)^2\ge\max\left(d\eps\,,\,\MKd(f,\widetilde W_\eps[R])^2-d\eps\right)\,.
$$
\end{Thm}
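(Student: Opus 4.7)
The plan is to prove the two lower bounds separately, exploiting the fact that the cost operator $c_\eps(x,\xi)=(y-x)^2+(-i\eps\grad_y-\xi)^2$ is a harmonic oscillator in the $y$-variable shifted to phase space point $(x,\xi)$, with $[y_j-x_j,\,-i\eps\d_{y_j}-\xi_j]=i\eps$.

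For the bound $E_\eps(f,R)^2\ge d\eps$, I would first diagonalize the shifted oscillator. Introducing the creation/annihilation operators $a_j^*=((y_j-x_j)-i(-i\eps\d_{y_j}-\xi_j))/\sqrt{2\eps}$ one checks that $c_\eps(x,\xi)=2\eps\sum_j(a_j^*a_j+\tfrac12)\ge d\eps\,I_\fH$ as self-adjoint operators, with ground state precisely the coherent state $\phi_{x,\xi}(y):=(\pi\eps)^{-d/4}e^{-|y-x|^2/2\eps}e^{i\xi\cdot y/\eps}$. Then for any coupling $Q\in\cC(f,R)$,
\[
\int_{\bR^{2d}}\Tr_\fH(Q(x,\xi)^{1/2}c_\eps(x,\xi)Q(x,\xi)^{1/2})dxd\xi\ge d\eps\int_{\bR^{2d}}\Tr_\fH(Q(x,\xi))dxd\xi=d\eps,
\]
since $\int\Tr_\fH(Q)dxd\xi=\int f\,dxd\xi=1$. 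Infimizing over $Q$ yields $E_\eps(f,R)^2\ge d\eps$.

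For the bound involving the Husimi transform, the strategy is to build a classical coupling out of a quantum one and use it to estimate the Wasserstein distance of exponent $2$. Given $Q\in\cC(f,R)$, define the Borel measure on $\bR^{2d}\times\bR^{2d}$ by
\[
\pi(dxd\xi\,dyd\eta):=\tfrac{1}{(2\pi\eps)^d}\Tr_\fH\bigl(Q(x,\xi)^{1/2}|\phi_{y,\eta}\ra\la\phi_{y,\eta}|Q(x,\xi)^{1/2}\bigr)dxd\xi\,dyd\eta.
\]
Using the resolution of identity $\int |\phi_{y,\eta}\ra\la\phi_{y,\eta}|\tfrac{dyd\eta}{(2\pi\eps)^d}=I_\fH$, the first marginal is $\Tr_\fH(Q(x,\xi))\,dxd\xi=f(x,\xi)dxd\xi$; using $\int Q(x,\xi)dxd\xi=R$ together with the identity $\widetilde W_\eps[R](y,\eta)=(2\pi\eps)^{-d}\la\phi_{y,\eta}|R|\phi_{y,\eta}\ra$, the second marginal is $\widetilde W_\eps[R](y,\eta)dyd\eta$. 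Thus $\pi\in\cC(f,\widetilde W_\eps[R])$ (in the classical sense).

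The heart of the proof is then the operator identity
\[
M(x,\xi):=\int_{\bR^{2d}}\bigl(|x-y|^2+|\xi-\eta|^2\bigr)|\phi_{y,\eta}\ra\la\phi_{y,\eta}|\tfrac{dyd\eta}{(2\pi\eps)^d}=c_\eps(x,\xi)+d\eps\,I_\fH.
\]
This is the main technical step and will be carried out by computing, via direct Gaussian integrals and Plancherel's formula, that $\int|\phi_{y,\eta}\ra\la\phi_{y,\eta}|\tfrac{dyd\eta}{(2\pi\eps)^d}=I$, $\int y_j|\phi_{y,\eta}\ra\la\phi_{y,\eta}|\tfrac{dyd\eta}{(2\pi\eps)^d}=Q_j$, $\int \eta_j|\phi_{y,\eta}\ra\la\phi_{y,\eta}|\tfrac{dyd\eta}{(2\pi\eps)^d}=P_j:=-i\eps\d_{y_j}$, together with the second-moment formulas $\int y_j^2|\phi_{y,\eta}\ra\la\phi_{y,\eta}|\tfrac{dyd\eta}{(2\pi\eps)^d}=Q_j^2+(\eps/2)I$ and similarly for $\eta_j$. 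The extra $(\eps/2)$ in each of the $2d$ second moments is the source of the $d\eps$ correction. Once this is established, the computation
\[
\iint(|x-y|^2+|\xi-\eta|^2)\pi(dxd\xi\,dyd\eta)=\int\Tr_\fH\bigl(Q^{1/2}(c_\eps(x,\xi)+d\eps I)Q^{1/2}\bigr)dxd\xi
\]
gives (quantum cost) $+\;d\eps$, and since $\pi$ is a coupling,
\[
\MKd(f,\widetilde W_\eps[R])^2\le \iint(|x-y|^2+|\xi-\eta|^2)d\pi.
\]
Infimizing over $Q\in\cC(f,R)$ yields $\MKd(f,\widetilde W_\eps[R])^2\le E_\eps(f,R)^2+d\eps$, i.e.\ the second bound. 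The main obstacle is the clean verification of the key operator identity $M(x,\xi)=c_\eps(x,\xi)+d\eps I$; this is essentially the statement that anti-Wick quantization of a quadratic symbol agrees with Weyl quantization of the same symbol up to the constant $d\eps$, and must be handled carefully to get the constant right.
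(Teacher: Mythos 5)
Your proposal is correct, and both halves of the argument are sound. The first bound $E_\eps(f,R)^2\ge d\eps$ is exactly the Heisenberg uncertainty principle applied to the shifted oscillator $c_\eps(x,\xi)$, and your harmonic-oscillator diagonalization $c_\eps(x,\xi)=2\eps\sum_j(a_j^*a_j+\tfrac12)\ge d\eps I_\fH$ followed by $\int\Tr_\fH Q\,dxd\xi=\int f\,dxd\xi=1$ is the right way to see it. The second bound is the substantial one, and your plan — use the coherent-state resolution of identity to build from any quantum coupling $Q\in\cC(f,R)$ a classical coupling $\pi\in\cC(f,\widetilde W_\eps[R])$, then compare the classical transport cost to the quantum one via the operator identity $\int(|x-y|^2+|\xi-\eta|^2)|\phi_{y,\eta}\ra\la\phi_{y,\eta}|\tfrac{dyd\eta}{(2\pi\eps)^d}=c_\eps(x,\xi)+d\eps I_\fH$ — is precisely the mechanism used in \cite{FGTPArma}, to which the paper defers for the proof. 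Your identification of the Husimi function as $\widetilde W_\eps[R](y,\eta)=(2\pi\eps)^{-d}\la\phi_{y,\eta}|R|\phi_{y,\eta}\ra$ is correct and is what makes the second marginal of $\pi$ come out right. The key operator identity you defer (and correctly flag as the technical heart) is indeed the quadratic-symbol case of the relation $\Op^T_\eps(a)=\Op^W_\eps(e^{\eps\Delta_{y,\eta}/4}a)$, which for $a(y,\eta)=|x-y|^2+|\xi-\eta|^2$ (with $\Delta_{y,\eta}a\equiv 4d$) gives exactly the correction term $d\eps$; your second-moment computations $\Op^T_\eps(y_j^2)=Q_j^2+(\eps/2)I$ etc., summed over $2d$ coordinates, produce the same constant. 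So the bookkeeping is consistent and the outline is complete modulo carrying out that Gaussian integral. Since the paper does not reproduce the proof itself but only cites Theorem 2.4(2) of \cite{FGTPArma}, there is no in-paper argument to diverge from; your route is the standard one.
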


See Theorem 2.4 (2) in \cite{FGTPArma} for a proof of this inequality.

\smallskip
This inequality compares the somewhat mysterious quantity $E_\eps(f,R)$ with the better known Monge-Kantorovich, or Wasserstein distance of exponent $2$ between $f$ and the Husimi transform of $R$, up to an error of order $O(\eps^{1/2})$. 
The main interest in this inequality is that it holds in the greatest possible generality. In other words, there is no restriction on either $f$ or $R$ for this inequality to hold.

\smallskip
Conversely, it will be useful to have an upper bound for $E_\eps$ in terms of the Monge-Kantorovich, or Wasserstein distance. This involves the notion of \textit{T\"oplitz quantization}, or \textit{positive quantization}, which is a kind
of ``approximate inverse'' of the Husimi transform.

\subsubsection{T\"oplitz Quantization and Upper Bound for $E_\eps$}

%%%%%%%%%%%%%%%%%%%%%%%%%%%%%%%%%%%%%%%%%%%%%%%%%%%%%%%%%%%%%%%%%%%%%%%%%%%%%%%%%%%%%%%%%%%%%%%%%%%%%%%%%%%%%%%%%%%%%%%

First, we recall the notion of \textit{Schr\"odinger coherent state}, or \textit{wave packet} centered at the phase space point $(q,p)\in\bR^d\times\bR^d$:
$$
|q+ip,\eps\ra=(\pi\eps)^{-d/4}e^{-|x-q|^2/2\eps}e^{ip\cdot x/\eps}\,.
$$
This formula is easily seen to define a normalized wave function on $\bR^d$. 

Next, for each positive Borel measure on $\bC^d\simeq\bR^d\times\bR^d$, one defines the \textit{T\"oplitz operator} with symbol $\mu$ by the formula
$$
\Op^T_\eps(\mu):=\tfrac1{(2\pi\eps)^d}\int_{\bC^d}|z,\eps\ra\la z,\eps|\mu(dz)\ge 0\,.
$$
Of course, at this level of generality $\Op^T_\eps(\mu)$ is only defined as an unbounded operator on $\fH=L^2(\bR^d)$. However, one easily checks that
$$
\Op^T_\eps(1)=I_\fH
$$
(where $\mu=1$ designates the Lebesgue measure on $\bC^d\simeq\bR^d\times\bR^d$). Similarly
$$
(2\pi\eps)^d\mu\in\cP(\bR^d\times\bR^d)\implies\Op^T_\eps(\mu)\in\cD(\fH)\,.
$$
The interested reader is referred to Appendix B of \cite{FGMouPaul} for a more detailed discussion of these operators.

With this material, we arrive at the following upper bound on the classical-to-quantum pseudo-distance $E_\eps$.

\begin{Thm}[Upper bound for $E_\eps$ for T\"oplitz density operators]\lb{T-UBEeps} 
Let $f,\mu$ belong to $\cP_2(\bR^d\times\bR^d)$, with $f\in L^1(\bR^d\times\bR^d)$. Then the T\"oplitz operator
$$
\Op^T_\eps((2\pi\hbar)^d\mu)\in\cD_2(\fH)\,,
$$
and
$$
E_\eps(f,\Op^T_\eps((2\pi\eps)^d\mu))^2\le\MKd(f,\mu)^2+d\eps\,.
$$
\end{Thm}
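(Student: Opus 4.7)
The natural approach is to construct a single explicit coupling $Q$ of $f$ with $R:=\Op^T_\eps((2\pi\eps)^d\mu)$ that already realizes the bound, and then take the infimum. Let $\pi\in\cC(f,\mu)$ be an optimal transport plan for the classical Wasserstein distance of exponent two, so that
$$
\iint_{\bR^{2d}\times\bR^{2d}}(|x-q|^2+|\xi-p|^2)\pi(dxd\xi,dqdp)=\MKd(f,\mu)^2.
$$
Given $\pi$, define the operator-valued map
$$
Q(x,\xi):=\int_{\bR^{2d}}|q+ip,\eps\ra\la q+ip,\eps|\pi(dqdp\,|\,x,\xi),
$$
where $\pi(dqdp\,|\,x,\xi)$ is the disintegration of $\pi$ with respect to its first marginal $f$. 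Then $Q(x,\xi)\ge 0$, $\Tr_\fH(Q(x,\xi))=f(x,\xi)$ (since the coherent states are normalized and the conditional is a probability measure), and $\int_{\bR^{2d}}Q(x,\xi)dxd\xi=\int|q+ip,\eps\ra\la q+ip,\eps|\mu(dqdp)=R$, so $Q\in\cC(f,R)$. The membership $R\in\cD_2(\fH)$ is read off the coherent-state identities $\la z,\eps||x|^2|z,\eps\ra=|q|^2+d\eps/2$ and $\la z,\eps|{-\Dlt_x}|z,\eps\ra=(|p|^2+d\eps/2)/\eps^2$, together with $\mu\in\cP_2$.

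The heart of the argument is the identity
$$
\la q+ip,\eps|c_\eps(x,\xi)|q+ip,\eps\ra=|x-q|^2+|\xi-p|^2+d\eps,
$$
which I would prove directly from the explicit form of $|q+ip,\eps\ra$. Since $|\,|q+ip,\eps\ra(y)|^2$ is a Gaussian in $y$ centered at $q$ with variance $\eps/2$ per coordinate, one gets $\la|y-x|^2\ra_{|q+ip,\eps\ra}=|x-q|^2+d\eps/2$. On the momentum side, a direct computation gives $(-i\eps\grad_y-p)|q+ip,\eps\ra=i(y-q)|q+ip,\eps\ra$, whence expanding the self-adjoint operator $|\xi+i\eps\grad_y|^2=|\xi-(-i\eps\grad_y)|^2$ in the coherent state produces $|\xi-p|^2+d\eps/2$. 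Summing the position and momentum contributions yields the stated identity.

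Granted this, the transport cost of $Q$ is computed by a formal cyclicity of the trace (legitimate because each fiber of $Q$ is built from rank-one projectors, reducing $\Tr_\fH(T^{1/2}c_\eps T^{1/2})$ for $T=|\psi\ra\la\psi|$ to $\la\psi|c_\eps|\psi\ra$, after a monotone truncation of $c_\eps$ to bounded operators):
$$
\ba
\int_{\bR^{2d}}\Tr_\fH(Q(x,\xi)^{\frac12}c_\eps(x,\xi)Q(x,\xi)^{\frac12})dxd\xi
\\
=\iint_{\bR^{2d}\times\bR^{2d}}\la q+ip,\eps|c_\eps(x,\xi)|q+ip,\eps\ra\pi(dxd\xi,dqdp)
\\
=\iint_{\bR^{2d}\times\bR^{2d}}(|x-q|^2+|\xi-p|^2)\pi(dxd\xi,dqdp)+d\eps=\MKd(f,\mu)^2+d\eps.
\ea
$$
Taking the infimum over $\cC(f,R)$ gives the theorem.

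The main obstacle I anticipate is not an estimate but the measure-theoretic bookkeeping of the operator-valued coupling when $f$ and $\mu$ are only in $\cP_2$ rather than absolutely continuous: one must make sense of $Q$ as (say) an operator-valued measure and of the disintegration of $\pi$. This can be handled by approximating $\mu$ by smooth densities, noting that $\mu\mapsto\Op^T_\eps((2\pi\eps)^d\mu)$ is weakly continuous from $\cP_2$ into the trace class (because $z\mapsto|z,\eps\ra\la z,\eps|$ depends continuously on $z$), and then passing to the limit using the triangle inequality $|E_\eps(f,R_1)-E_\eps(f,R_2)|$-type stability for $E_\eps$ in its second argument, together with the already known nonexpansiveness in $f$. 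Once this technicality is cleared, the coherent-state variance computation closes the proof essentially in one line.
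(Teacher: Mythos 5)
The paper does not prove this result itself; it cites Theorem~2.4(1) of \cite{FGTPArma}, and your reconstruction is in fact the argument used there: pick an optimal classical transport plan $\pi$ for $\MKd(f,\mu)$, inject the classical side into the quantum side by tensoring with coherent states, and read off the bound from the exact second-moment identity
$\la q+ip,\eps|c_\eps(x,\xi)|q+ip,\eps\ra=|x-q|^2+|\xi-p|^2+d\eps$,
whose verification you carry out correctly.

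One slip to fix: as written, your $Q(x,\xi)=\int|q+ip,\eps\ra\la q+ip,\eps|\,\pi(dq\,dp\mid x,\xi)$ has $\Tr_\fH Q(x,\xi)=1$, not $f(x,\xi)$, because the disintegration kernel is a probability measure for a.e.\ $(x,\xi)$. You need
$$
Q(x,\xi):=f(x,\xi)\int_{\bR^{2d}}|q+ip,\eps\ra\la q+ip,\eps|\,\pi(dq\,dp\mid x,\xi)\,,
$$
which then satisfies both $\Tr_\fH Q(x,\xi)=f(x,\xi)$ and $\int Q(x,\xi)\,dx\,d\xi=\int|z,\eps\ra\la z,\eps|\mu(dz)=\Op^T_\eps((2\pi\eps)^d\mu)$. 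Your subsequent computation (integrating against $\pi(dx\,d\xi,dq\,dp)$) is already consistent with this corrected definition, so the rest of the argument stands. Also, the parenthetical justification of the trace identity via ``each fiber of $Q$ is built from rank-one projectors'' is imprecise, since $Q(x,\xi)$ is in general a non-orthogonal mixture of coherent-state projectors (whose square root is not a superposition of rank-one terms); what actually makes the computation go through is the commutation $\Tr(Q^{1/2}c_\eps Q^{1/2})=\Tr(c_\eps^{1/2}Qc_\eps^{1/2})$ followed by linearity of $T\mapsto\Tr(c_\eps^{1/2}Tc_\eps^{1/2})$ in the positive trace-class operator $T$, justified by monotone truncation of $c_\eps$. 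Finally, the approximation argument you sketch in the last paragraph is not really needed: disintegration of the Borel probability measure $\pi$ on the Polish space $\bR^{4d}$ with respect to its first marginal exists unconditionally, and $Q(x,\xi)\,dx\,d\xi$ is already a genuine operator-valued density because $f\in L^1$.
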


See Theorem 2.4 (1) of \cite{FGTPArma} for a proof of this upper bound.

Notice the difference between Theorems \ref{T-UBEeps} and \ref{T-LBEeps}: in Theorem \ref{T-UBEeps}, the density operator has to be a T\"oplitz operator, while in Theorem \ref{T-LBEeps}, there is no restriction on the density operator.
This observation is crucial for the next section, and for the interest of the pseudo-distance $E_\eps$ in the joint mean-field and classical limit discussed in this lecture.

\subsection{From $N$-body von Neumann to Vlasov}

%%%%%%%%%%%%%%%%%%%%%%%%%%%%%%%%%%%%%%%%%%%%%%%%%%%%%%%%%%%%%%%%%%%%%%%%%%%%%%%%%%%%%%%%%%%%%%%%%%%%%%%%%%%%%%%%%%%%%%%

At this point, we have gathered together the mathematical tools to study the joint mean field and classical limit corresponding to the oblique arrow in the diagram presented at the beginning of this lecture. Assuming that the interaction
potential $V$ has the same regularity as in the Dobrushin inequality presented in Lecture 1, we shall obtain a quantitative estimate for this limit in terms of the classical-to-quantum pseudo-distance $E_\eps$. It is interesting to observe
that this regularity assumption on $V$ is the exactly same as the one used to define the classical $N$-particle dynamics by means of the Cauchy-Lipschitz theorem. In some sense, this assumption could be thought of as minimal in order 
for this approach to the joint mean field and classical limit to be possible.

\begin{Thm} \lb{T-MF+CL} \lb{FGTPArma} Assume that the interaction potential $V$ satisfies assumptions (H1)-(H2) of Lecture 1. Let $R_{\eps,N}(t)=e^{-\frac{it\cH_N}\eps}R^{in}_{\eps,N}e^{+\frac{it\cH_N}\eps}$, where 
$$
\cH_N:=\sum_{j=1}^N-\tfrac12\eps^2\Dlt_{x_j}+\tfrac1N\sum_{1\le m<n\le N}V(x_m-x_n)\,,
$$
and $R^{in}_{\eps,N}\in\cD^s_2(\fH_N)$. On the other hand, let $f$ be the solution of the Vlasov equation
$$
(\d_t+\xi\cdot\grad_x)f(t,x,\xi)=\grad_xV_f(t,x)\cdot\grad_\xi f(t,x,\xi)\,,\qquad x,\xi\in\bR^d\,,\,\,t>0\,,
$$
where
$$
V_f(t,x)=\iint_{\bR^d\times\bR^d}V(x-y)f(t,y,\eta)dyd\eta
$$
is the mean field potential, with initial data $f^{in}\in L^1\cap\cP_2(\bR^d\times\bR^d)$. Set
$$
\Gamma:=1+2\max(1,2\Lip(\grad V)^2)\,.
$$

\smallskip
\noindent
(1) Then, for each $t\ge 0$ one has
$$
E_\eps(f(t),R_{\eps, N:1}(t))^2\!\le\!\frac{E_\eps((f^{in})^{\otimes N},R^{in}_{\eps,N})^2}Ne^{\Gamma t}\!+\!\frac{(2\|\grad V\|_{L^\infty})^2}{N-1}\frac{e^{\Gamma t}\!-\!1}{\Gamma}\,.
$$
(2) If moreover $R_{\eps,N}^{in}=\Op^T_\eps[(2\pi\eps)^{dN}(f^{in})^{\otimes N}]$, then
$$
\MKd(f(t),\widetilde W_\eps[R_{\eps,N:1}(t)])^2\le d\eps(1+e^{\Gamma t})+\frac{(2\|\grad V\|_{L^\infty})^2}{N-1}\frac{e^{\Gamma t}-1}{\Gamma}\,.
$$
\end{Thm}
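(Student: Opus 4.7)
The plan is to lift Dobrushin's classical coupling argument from Lecture 1 to the quantum--classical setting, with the pseudo-distance $E_\eps$ playing the role of $\MK$ and operator-valued couplings playing the role of the Liouville-transported measure $h(t)$ from Lemma \ref{L-Lem2}. The natural arena is the full $N$-particle system: one tries to compare $(f(t))^{\otimes N}$ to $R_{\eps,N}(t)$, then recovers a bound on $E_\eps(f(t), R_{\eps,N:1}(t))^2$ via permutation symmetry.

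First, I would construct a time-dependent coupling. Fix $Q^{in}\in\cC((f^{in})^{\otimes N},R^{in}_{\eps,N})$, viewed as an operator-valued function on $\bR^{2dN}$, and define
$$
Q(t,X_N,\Xi_N):=e^{-it\cH_N/\eps}\,Q^{in}\bigl(\Psi_{-t}(X_N,\Xi_N)\bigr)\,e^{+it\cH_N/\eps},
$$
where $\Psi_t$ is the flow on $\bR^{2dN}$ obtained by applying the Vlasov mean-field characteristic flow (with Hamiltonian $\tfrac12|\xi|^2+V_f(t,x)$) to each of the $N$ classical particles independently. A marginal computation analogous to Lemma \ref{L-Lem2} (one marginal uses the Vlasov characteristic flow, the other the quantum Liouville--von Neumann equation) shows that $Q(t)\in\cC((f(t))^{\otimes N},R_{\eps,N}(t))$ for all $t\ge 0$. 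Then define the per-particle cost functional
$$
\cE_N(t):=\frac1N\sum_{k=1}^N\int_{\bR^{2dN}}\Tr_{\fH_N}\bigl(Q(t)^{1/2}\,c_\eps^{(k)}(x_k,\xi_k)\,Q(t)^{1/2}\bigr)\,dX_Nd\Xi_N,
$$
where $c_\eps^{(k)}=|x_k-y_k|^2+|\xi_k+i\eps\grad_{y_k}|^2$ acts on the $k$-th quantum variable. By symmetrizing $Q(t)$ under simultaneous permutations (both $(f(t))^{\otimes N}$ and $R_{\eps,N}(t)$ are $\fS_N$-invariant) and partially tracing over the remaining $N-1$ quantum variables, one obtains a genuine element of $\cC(f(t),R_{\eps,N:1}(t))$, yielding $E_\eps(f(t),R_{\eps,N:1}(t))^2\le\cE_N(t)$.

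The crux is a Gronwall-type bound on $\dot\cE_N(t)$. Differentiating gives a classical bracket term $\{\tfrac12|\xi_k|^2+V_f(t,x_k),c_\eps^{(k)}\}$ and a quantum commutator term $\tfrac1{i\eps}[\cH_N,c_\eps^{(k)}]$. The kinetic pieces combine, after Peter--Paul, to a term bounded by $2\max(1,2\Lip(\grad V)^2)\cdot\cE_N(t)$, in a manner parallel to Step 3 of the proof of Dobrushin's inequality but written for the operator-valued cost. The genuinely new piece, and the main obstacle, is the interaction difference
$$
\frac1N\sum_{l\ne k}\grad V(y_k-y_l)\;-\;\grad_{x_k}V_f(t,x_k),
$$
which couples the quantum $N$-body pairwise potential to the Vlasov mean-field. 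I would split this as $\frac1N\sum_{l\ne k}(\grad V(y_k-y_l)-\grad V(x_k-x_l))$ (Lipschitz-controlled by $\Lip(\grad V)(|x_k-y_k|+|x_l-y_l|)$, feeding into $\Gamma\cE_N$ after Peter--Paul and using $(f^{in})^{\otimes N}$-symmetry) plus a pure fluctuation $\frac1N\sum_{l\ne k}\grad V(x_k-x_l)-\grad V_f(t,x_k)$; since the classical marginals are tensor products, the cross-variances vanish and the remaining diagonal variance is bounded by $(2\|\grad V\|_{L^\infty})^2/(N-1)$. Gronwall then yields
$$
\cE_N(t)\le\cE_N(0)e^{\Gamma t}+\frac{(2\|\grad V\|_{L^\infty})^2}{N-1}\cdot\frac{e^{\Gamma t}-1}{\Gamma},
$$
and minimizing over $Q^{in}$ gives $\cE_N(0)=E_\eps((f^{in})^{\otimes N},R^{in}_{\eps,N})^2/N$ after accounting for the $\tfrac1N$ normalization. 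This proves (1).

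For (2), specialize $R^{in}_{\eps,N}=\Op^T_\eps[(2\pi\eps)^{dN}(f^{in})^{\otimes N}]$. Apply Theorem \ref{T-UBEeps} in dimension $dN$ to the identical pair $f=(f^{in})^{\otimes N}$ and $\mu=(f^{in})^{\otimes N}$ to obtain $E_\eps((f^{in})^{\otimes N},R^{in}_{\eps,N})^2\le dN\eps$, so that after dividing by $N$ the initial contribution in (1) becomes $d\eps\,e^{\Gamma t}$. Combine with Theorem \ref{T-LBEeps}, which gives $\MKd(f(t),\widetilde W_\eps[R_{\eps,N:1}(t)])^2\le E_\eps(f(t),R_{\eps,N:1}(t))^2+d\eps$, and (2) follows. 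The main obstacle, as noted, is the variance/fluctuation bookkeeping in the interaction term; the rest is essentially a quantum operator-valued transcription of Dobrushin's estimate.
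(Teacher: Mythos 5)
Your proposal is correct and follows essentially the same strategy as the paper's proof: propagate a coupling $Q^{in}\in\cC((f^{in})^{\otimes N},R^{in}_{\eps,N})$ by the combined Vlasov-characteristic/von-Neumann flow (your explicit formula for $Q(t)$ is precisely the solution of the transport-plus-commutator PDE that the paper writes down), compare $E_\eps$ to the per-particle cost functional via $\fS_N$-symmetry and partial trace, split the force discrepancy into a Lipschitz piece (absorbed into $\Gamma$ by Peter--Paul) and a mean-zero fluctuation whose cross-terms cancel by the tensor-product structure of the classical marginals, then Gronwall; part (2) is the same application of Theorems \ref{T-UBEeps} (in $\bR^{2dN}$) and \ref{T-LBEeps}.
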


\begin{proof}[Sketch of the proof of Theorem \ref{T-MF+CL}]
First we notice that (2) follows from (1) by a straightforward application of Theorems \ref{T-LBEeps} and \ref{T-UBEeps}. That the lower bound in Theorem \ref{T-LBEeps} applies to the most general finite energy densities is of utmost 
importance for this argument, since virtually nothing is known on $R_{\eps,N}(t)$, except that it is a density operator. On the other hand, that the upper bound in Theorem \ref{T-UBEeps} applies only to T\"oplitz densities is much less 
problematic, since it is used on the initial data, which can be chosen accordingly. 

It remains to prove (1). The idea is to follow the pattern outlined in the proof of the Dobrushin inequality in Lecture 1. 

Starting from a coupling $Q^{in}_{\eps,N}\in\cC((f^{in})^{\otimes N},R_{\eps,N}^{in})$, define $Q_{\eps,N}(t,X_N,\Xi_N)$to be the solution of the Cauchy problem
$$
\ba
\d_tQ_{\eps,N}(t,X_N,\Xi_N)+\sum_{j=1}^N(\xi_j\cdot\grad_{x_j}-\grad_xV_f(t,x_j)\cdot\grad_{\xi_j})Q_{\eps,N}(t,X_N,\Xi_N)
\\
+\tfrac{i}{\hb}[\cH_N,Q_{\eps,N}(t,X_N,\Xi_N)]&=0
\ea
$$
with initial data
$$
Q_{\eps,N}(0,X_N,\Xi_N)=Q^{in}_{\eps,N}(X_N,\Xi_N)\,.
$$
One easily checks that
$$
Q_{\eps,N}(t,\cdot,\cdot)\in\cC(f(t,\cdot,\cdot)^{\otimes N},R_{\eps,N}(t))\,,\qquad t\ge 0\,.
$$
To $Q_{\eps,N}(t,X_N,\Xi_N)$, we associate the function 
$$
\cD(t)\!=\!\frac1N\!\int_{\bR^{2dN}}\!\Tr_{\fH_N}(Q_{\eps,N}(t,X_N,\Xi_N)^\frac12c_\hb(X_N,\Xi_N)Q_{\eps,N}(t,X_N,\Xi_N)^\frac12)dX_Nd\Xi_N
$$
and observe first that, by definition of $E_\eps$,
$$
\frac{E_\eps((f(t,\cdot,\cdot))^{\otimes N},R_{\eps,N}(t))^2}N\le\cD(t)\,.
$$

On the other hand, since $R_{\eps,N}^{in}\in\cD^s(\fH_N)$, one easily checks that
$$
\cD(t)\!=\!\int_{\bR^{2dN}}\!\Tr_{\fH_N}(Q_{\eps,N}(t,X_N,\Xi_N)^\frac12J_kc_\hb(x_k,\xi_k)Q_{\eps,N}(t,X_N,\Xi_N)^\frac12)dX_Nd\Xi_N
$$
for each $k=1,\ldots,N$, where we recall that
$$
J_kA:=I_{\fH}^{\otimes(k-1)}\otimes A\otimes I_\fH^{\otimes(N-k)}\,.
$$
As a consequence
$$
E_\eps(f(t,\cdot,\cdot),R_{\eps,N:1}(t))^2\le\frac{E_\eps((f(t,\cdot,\cdot))^{\otimes N},R_{\eps,N}(t))^2}N\,.
$$

Finally, it remains to estimate $\cD(t)$. A first observation is that
$$
\ba
\frac{d\cD}{dt}\!=\!\tfrac1N\!\!\int_{\bR^{2dN}}\!\!\Tr_{\fH_N}(Q_{\eps,N}(t,X_N,\Xi_N)^\frac12d_\hb(t,X_N,\Xi_N)Q_{\eps,N}(t,X_N,\Xi_N)^\frac12)dX_Nd\Xi_N
\ea
$$
where
$$
d_\hb(t,X_N,\Xi_N)=\sum_{j=1}^N(\xi_j\cdot\grad_{x_j}-\grad_xV_f(t,x_j)\cdot\grad_{\xi_j})c_\hb(X_N,\Xi_N)+\tfrac{i}{\hb}[\cH_N,c_\hb(X_N,\Xi_N)]\,.
$$
(To see this, derive $\cD(t)$ under the integral sign and the trace, use the equation satisfied by $Q_{\eps,N}(t,X_N,\Xi_N)$, integrate by parts in all variables, and use the cyclicity of the trace.)

One easily computes
$$
\ba
d_\hb(t,X_N,\Xi_N)=&\sum_{j=1}^N(\xi_j+i\eps\grad_{y_j})\vee(x_j-y_j)
\\
&+\sum_{j=1}^N(\xi_j+i\eps\grad_{y_j})\vee\left(\grad_xV_f(t,x_j)-\tfrac1N\sum_{k=1}^N\grad V(y_j-y_k)\right)
\\
=&\sum_{j=1}^N(\xi_j+i\eps\grad_{y_j})\vee(x_j-y_j)
\\
&+\sum_{j=1}^N(\xi_j+i\eps\grad_{y_j})\vee\tfrac1N\sum_{k=1}^N(\grad V(x_j-x_k)-\grad V(y_j-y_k))
\\
&+\sum_{j=1}^N(\xi_j+i\eps\grad_{y_j})\vee\left(\grad_xV_f(t,x_j)-\tfrac1N\sum_{k=1}^N\grad V(x_j-x_k)\right)\,,
\ea
$$
with the notation
$$
(A_1,\ldots,A_d)\vee(B_1,\ldots,B_d)=\sum_{n=1}^d(A_nB_n+B_nA_n)\,.
$$
If $A_n=A_n^*$ and $B_n=B_n^*$ for $n=1,\ldots,n$, one has
$$
(A_1,\ldots,A_d)\vee(B_1,\ldots,B_d)\le\sum_{n=1}^d(A_n^2+B_n^2)\,,
$$
so that, using the Jensen inequality
$$
\ba
d_\hb(t,X_N,\Xi_N)\le c_\hb(X_N,\Xi_N)
\\
+\sum_{j=1}^N\left((\xi_j+i\eps\grad_{y_j})^2+\tfrac1N\sum_{k=1}^N(\grad V(x_j-x_k)-\grad V(y_j-y_k))^2\right)
\\
+\sum_{j=1}^N\left((\xi_j+i\eps\grad_{y_j})^2+\left(\grad_xV_f(t,x_j)-\tfrac1N\sum_{k=1}^N\grad V(x_j-x_k)\right)^2\right)
\\
\le c_\hb(X_N,\Xi_N)
\\
+\sum_{j=1}^N\left((\xi_j+i\eps\grad_{y_j})^2+\tfrac{2\Lip(\grad V)^2}N\sum_{k=1}^N(|x_j-y_j|^2+|x_k-y_k|^2)\right)
\\
+\sum_{j=1}^N\left((\xi_j+i\eps\grad_{y_j})^2+\left(\grad_xV_f(t,x_j)-\tfrac1N\sum_{k=1}^N\grad V(x_j-x_k)\right)^2\right)&\,.
\ea
$$
Hence
$$
\ba
d_\hb(t,X_N,\Xi_N)\le c_\hb(X_N,\Xi_N)+\sum_{j=1}^N\left(2(\xi_j+i\eps\grad_{y_j})^2+4\Lip(\grad V)^2|x_j-y_j|^2\right)
\\
+\sum_{j=1}^N\left(\grad_xV_f(t,x_j)-\tfrac1N\sum_{k=1}^N\grad V(x_j-x_k)\right)^2
\\
\le(1+2\max(1,2\Lip(\grad V)^2)c_\hb(X_N,\Xi_N)
\\
+\sum_{j=1}^N\left(\grad_xV_f(t,x_j)-\tfrac1N\sum_{k=1}^N\grad V(x_j-x_k)\right)^2&\,.
\ea
$$
Thus
$$
\ba
\frac{d\cD}{dt}(t)\le(1+2\max(1,2\Lip(\grad V)^2)\cD(t)
\\
+\int_{\bR^{2dN}}\!\tfrac1N\sum_{j=1}^N\left(\tfrac1N\sum_{k=1}^N(\grad_xV_f(t,x_j)-\grad V(x_j-x_k))\right)^2\prod_{k=1}^N\rho_f(t,x_j)dx_j&\,,
\ea
$$
with the notation
$$
\rho_f(t,x):=\int_{\bR^d}f(t,x,\xi)d\xi\,.
$$
Observe that, for $k\not=l=1,\ldots,N$, one has
$$
\ba
\int_{\bR^{2dN}}(\grad_xV_f(t,x_j)\!-\!\grad V(x_j-x_k))\cdot(\grad_xV_f(t,x_j)\!-\!\grad V(x_j-x_l))\prod_{k=1}^N\rho_f(t,x_j)dx_j\!=\!0
\ea
$$
by definition of $V_f$. Hence
$$
\ba
\frac{d\cD}{dt}(t)\le(1+2\max(1,2\Lip(\grad V)^2)\cD(t)
\\
+\int_{\bR^{2dN}}\!\tfrac1{N^3}\sum_{j,k=1}^N(\grad_xV_f(t,x_j)-\grad V(x_j-x_k))^2\prod_{k=1}^N\rho_f(t,x_j)dx_j
\\
\le(1+2\max(1,2\Lip(\grad V)^2)\cD(t)+\frac{4\|\grad V\|_{L^\infty}^2}{N}&\,.
\ea
$$
By Gronwall's inequality
$$
\frac{E_\eps((f(t,\cdot,\cdot))^{\otimes N},R_{\eps,N}(t))^2}N\le\cD(t)\le e^{\Ga t}\cD(0)+\frac{e^{\Ga t}-1}{\Ga}\frac{4\|\grad V\|_{L^\infty}^2}{N}\,.
$$
The term $\cD(0)$ in the right-hand side of the inequality above involves the initial coupling $Q^{in}_{\eps,N}\in\cC((f^{in})^{\otimes N},R_{\eps,N}^{in})$. Minimizing $\cD(0)$ in $Q^{in}_{\eps,N}$ implies that
$$
\frac{E_\eps((f(t,\cdot,\cdot))^{\otimes N},R_{\eps,N}(t))^2}N\le\frac{E_\eps((f^{in})^{\otimes N},R_{\eps,N}^{in})^2}Ne^{\Ga t}+\frac{e^{\Ga t}-1}{\Ga}\frac{4\|\grad V\|_{L^\infty}^2}{N}\,.
$$
The interested reader is invited to complete the missing details after reading the complete proof of Theorem 2.6 in \cite{FGTPArma}.
\end{proof}

\subsection{Mean-Field and Classical Limits: Quantum Klimontovich Solutions}

%%%%%%%%%%%%%%%%%%%%%%%%%%%%%%%%%%%%%%%%%%%%%%%%%%%%%%%%%%%%%%%%%%%%%%%%%%%%%%%%%%%%%%%%%%%%%%%%%%%%%%%%%%%%%%%%%%%%%%%

The result presented in the previous section, i.e. Theorem \ref{T-MF+CL} is very satisfying because it justifies rigorously the joint mean-field and classical limit as $\tfrac1N+\eps\to 0$ without any restriction on the rate at which $\tfrac1N$
and $\eps$ tend to $0$. For instance, we do not assume any distinguished limit (such as $\eps=N^{-1/3}$ for instance). There are however two shortcomings with this approach

\smallskip
\noindent
(1) the interaction force field  $-\grad V$ must be bounded and Lipschitz continuous, and

\noindent
(2) the best convergence rate as $\tfrac1N+\eps\to 0$ is achieved provided that the initial $N$-particle density operator $R_{\eps,N}^{in}$ is of the form $\Op^T_\eps[(2\pi\eps)^{dN}(f^{in})^{\otimes N}]$.

\smallskip
There are serious difficulties in removing the restriction mentioned in (1); this will be discussed in the next section. The restriction (2) on the initial data, is less formidable. In fact, this can be done by using the formalism of quantum
Klimontovich solutions introduced in lecture 2. 

The idea is again to start from the equation
$$
i\eps\d_t(\cM_N(t)-\cR_{\psi_\eps}(t))(A(t))=-\cC[V,\cM_N(t)-\cR_{\psi_\eps}(t),\cM_N(t)](A(t))
$$
where $\psi_\eps$ is the solution of the Hartree equation with semiclassical scaling
\be\lb{HartreeSemicl}
i\eps\d_t\psi_\eps=-\tfrac12\eps^2\Dlt_x\psi_\eps+V\star_x|\psi_\eps|^2\psi_\eps\,,\qquad\psi_\eps\rstr_{t=0}=\psi_\eps^{in}\,.
\ee
We recall that $\cM_N(t)$ is the quantum Klimontovich solution, while $\cR_{\psi_\eps}(t)$ is the ``chaotic morphism'' associated to the Hartree solution $\psi_\eps$ by Definition \ref{D-ChaoMor}.

In the proof of Theorem \ref{T-MFh=1}, we have used the rather naive estimate
$$
\frac1\eps\|\cM_N(t)([E_\om,A(t)])\|\le\frac1\eps\|[E_\om,A(t)]\|\le\frac2\eps\|A(t)\|\,.
$$
This estimate is clearly suboptimal if $A(t)$ is a multiplication operator, since in that case $[E_\om,A(t)]=0$. In general, one should try to use whatever cancellations might appear in the commutator $[E_\om,A(t)]$ in order to offset the
growth caused by the $1/\eps$ factor. The key idea is to restrict the class of time-dependent operators $A(t)$ used in this estimate --- specifically, one takes for $A(t)$ Weyl pseudo-differential operators\footnote{For $a\equiv a(x,\xi)$
belonging to $\cS(\bR^d\times\bR^d)$, one defines the Weyl operator with symbol $a$ by the duality formula
$$
\la\psi|\Op^W_\eps[a]|\phi\ra:=\iint_{\bR^d\times\bR^d}W_\eps[|\phi\ra\la\psi|](x,\xi)a(x,\xi)dxd\xi\,.
$$
This duality formula can be extended to the case where $a\in\cS'(\bR^d\times\bR^d)$ and defines $\Op^W_\eps[a]$ as a continuous linear map from $\cS(\bR^d)$ to $\cS'(\bR^d)$. One has $\Op^W_\eps[a]^*=\Op^W_\eps[\overline a]$.
The Calder\'on-Vaillancourt theorem states that
$$
\|\Op^W_\eps[a]\|_{\cL(L^2(\bR^d))}\le\g_d\max_{|\a|,|\b|\le[d/2]+1}\|\d_x^\a\d_\xi^\b a(x,\xi)\|_{L^\infty(\bR^d\times\bR^d)}
$$
for some constant $\g_d>0$ depending only on the space dimension $d$ \cite{Boulke}. One has
$$
[x_m,\Op_\eps^W[a]]=i\eps\Op_\eps^W[\d_{\xi_m}a]\,,\quad[-i\eps\d_{x_m},\Op_\eps^W[a]]=-i\eps\Op_\eps^W[\d_{x_m}a]\,,\quad m=1,\ldots,d\,.
$$} 
conjugated by the dynamics of the time-dependent Schr\"odinger equation with time-dependent mean-field potential $V\star_x|\psi_\eps(t,\cdot)|^2$. The important estimate is to be found in Lemma 4.1 in \cite{FGTPEmpirical}; is consists 
of a bound for the quantity
$$
\mathcal N(t):=\max_{1\le j\le d}(\|[x_j,A(t)]\|+\|[-i\eps\d_{x_j},A(t)]\|)\,.
$$
One can check that
$$
\cN(t)\le\cN(\tau)+\max(1,\Ga_2)\left|\int_\tau^t\cN(s)ds\right|\,,
$$
where
$$
\Ga_2:=\max_{1\le j\le d}\sum_{k=1}^d\tfrac1{(2\pi)^d}\int_{\bR^d}|\om_j||\om_k||\hat V(\om)|d\om\,,
$$
and conclude by Gronwall's inequality. That such an estimate helps in controlling the term $\frac1\eps\|[E_\om,A(t)]\|$ follows from the elementary formula
$$
[E_\om,A(s)]=\int_0^1E_{\l\om}[i\om\cdot x,A(s)]E_{(1-\l)\om}d\l\,.
$$
Observe that if $A(s)$ is a Weyl operator, the integrand is of order $\eps$ in operator norm by the footnote above, provided that the symbol of $A(s)$ has sufficiently many bounded derivatives in $x$ and $\xi$. This bound in operator 
norm is then propagated by conjugation with the Hartree dynamics, which is a unitary operator. The interested reader is referred to the proof of Lemma 4.1 in \cite{FGTPEmpirical} for the missing details.

With this (fundamental) observation, one arrives at the following convergence rate for the joint mean-field and classical limit. We shall need the following notation
$$
\|f\|_{n,n,\infty}:=\max_{\max(|\a|,|\b|)\le n}\|\d_x^\a\d_x^\b f\|_{L^\infty(\bR^d_x\times\bR^d_\xi)}
$$
for all $f\in C^{n,n}_b(\bR^d_x\times\bR^d_\xi)$, the linear space of functions $f\equiv f(x,\xi)$ such that $\d_x^\a\d_x^\b f$ exists and is continuous and bounded on $\bR^d_x\times\bR^d_\xi$ for all multiindices $\a,\b$ with length 
at most $n$. We designate by $C^{n,n}_b(\bR^d_x\times\bR^d_\xi)'$ the topological dual of $C^{n,n}_b(\bR^d_x\times\bR^d_\xi)$ with the topology defined by the norm $\|\cdot\|_{n,n,\infty}$, and by $\|\cdot\|'_{n,n,\infty}$ the dual
norm. Specifically, for each linear functional $L\in C^{n,n}_b(\bR^d_x\times\bR^d_\xi)'$, one defines
$$
\|L\|'_{n,n,\infty}:=\sup_{\|f\|_{n,n,\infty}\le1}|\la L,f\ra|\,.
$$

\begin{Thm}\lb{T-MF+CLKlim}\cite{FGTPEmpirical}
Assume that $V\in C_0(\bR^d)$ satisfies (H1) and
$$
\bV:=\tfrac1{(2\pi)^d}\int_{\bR^d}(1+|\om|)^{[d/2]+3}|\hat V(\om)|d\om<\infty\,.
$$
Let $\cH_N$ be the scaled $N$-particle Hamiltonian \eqref{HNsemicl}, and set
$$
\Psi_{\eps,N}(t,\cdot):=e^{-it\cH_N/\eps}(\psi_\eps^{in})^{\otimes N}\,.
$$
Then, for each $\hb\in(0,1]$, each $N\ge 1$ and each $t\ge 0$
$$
\ba
\|W_\eps[R_{\eps,N:1}(t)]-W_\eps[|\psi_\eps(t,\cdot)\ra\la\psi_\eps(t,\cdot)|]\|'_{[d/2]+2,[d/2]+2,\infty}
\\
\le\frac{\g_d+1}{\sqrt{N}}\exp\left(\sqrt{d}\g_dte^{t\max(1,\Ga_2)}\bV\right)
\ea
$$
for some constant $\g_d$ depending only on the space dimension $d$.
\end{Thm}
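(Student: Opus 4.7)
The plan is to reduce the dual-norm bound for Wigner transforms to an estimate on matrix elements of Weyl-quantized observables conjugated by the Hartree dynamics, and then run a Gronwall argument in the spirit of the proof of Theorem \ref{T-MFh=1} but with the semiclassically sharp commutator estimate sketched just before the theorem statement. First, by Lemma \ref{L-CharPropMN(t)} and the defining duality $\Tr_\fH(R\,\Op^W_\eps[a])=\iint W_\eps[R]\,a\,dx\,d\xi$, the quantity to bound is
$$
\sup_{\|a\|_{[d/2]+2,[d/2]+2,\infty}\le 1}\,|\la\Psi^{in}_{\eps,N}|(\cM_N(t)-\cR_{\psi_\eps}(t))(\Op^W_\eps[a])|\Psi^{in}_{\eps,N}\ra|\,,
$$
with $\Psi^{in}_{\eps,N}=(\psi^{in}_\eps)^{\otimes N}$. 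Fix such a symbol $a$ and let $A(t)$ be the Heisenberg evolution of $A(0)=\Op^W_\eps[a]$ under the single-particle Hartree Hamiltonian $H_\psi(t):=-\tfrac12\eps^2\Dlt+V_{\psi_\eps}(t,\cdot)$. Combining Theorems \ref{T-qKlim} and \ref{T-MFh=1}, the terms built from $H_\psi(t)$ cancel and one obtains
$$
i\eps\d_t(\cM_N(t)-\cR_{\psi_\eps}(t))(A(t))=-\cC[V,\cM_N(t)-\cR_{\psi_\eps}(t),\cM_N(t)](A(t))\,.
$$

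Second, set $d_N(t):=\sup_{\|a\|_{[d/2]+2,[d/2]+2,\infty}\le 1}|\la\Psi^{in}_{\eps,N}|(\cM_N(t)-\cR_{\psi_\eps}(t))(A(t))|\Psi^{in}_{\eps,N}\ra|$. Integrating the above identity and expanding $\cC[V,\cdot,\cdot]$ exactly as in the proof of Theorem \ref{T-MFh=1}, one is left to estimate, for each $\om\in\bR^d$, the quantity $\frac1\eps\|[E_\om,A(s)]\|$ (instead of the crude bound $\frac{2}{\eps}\|A(s)\|$ used there), plus a $O(1/N)$ commutator term $\|[\cM_N(s)(E_\om A(s)),\cM_N(s)(E_\om^*)]\|$ coming from the non-commuting $k=l$ contributions. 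The basic identity $[E_\om,A(s)]=\int_0^1 E_{\l\om}[i\om\cdot x,A(s)]E_{(1-\l)\om}d\l$ together with its analogue for $[\cM_N(s)(E_\om A(s)),\cM_N(s)(E_\om^*)]$ reduces both to controlling
$$
\cN(s):=\max_{1\le j\le d}\big(\|[x_j,A(s)]\|+\|[-i\eps\d_{x_j},A(s)]\|\big)\,.
$$

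Third, differentiate $[x_j,A(t)]$ and $[-i\eps\d_{x_j},A(t)]$ in time using the Heisenberg equation, and use $[x_j,H_\psi(t)]=i\eps^2\d_{x_j}$ and $[-i\eps\d_{x_j},H_\psi(t)]=-i\eps\d_{x_j}V_{\psi_\eps}(t,\cdot)$. Writing $V=\frac1{(2\pi)^d}\int\hat V(\om)E_\om d\om$ and applying the same integral identity to the resulting commutators of $E_\om$ with $A(t)$, one obtains the integral inequality
$$
\cN(t)\le\cN(0)+\max(1,\Ga_2)\int_0^t\cN(s)\,ds\,,
$$
so by Gronwall $\cN(t)\le\cN(0)e^{t\max(1,\Ga_2)}$. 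The Calder\'on-Vaillancourt bound and the symbolic identities $[x_j,\Op^W_\eps[a]]=i\eps\Op^W_\eps[\d_{\xi_j}a]$, $[-i\eps\d_{x_j},\Op^W_\eps[a]]=-i\eps\Op^W_\eps[\d_{x_j}a]$ yield $\cN(0)\le \sqrt d\,\eps\,\g_d\|a\|_{[d/2]+2,[d/2]+2,\infty}$. Plugging this into the Dyson-type integral bound for $d_N$ and using $d_N(0)\le 1/\sqrt N$ (by the tensor-product Cauchy-Schwarz computation used at the end of the proof of Theorem \ref{T-MFh=1}), Gronwall again gives the claimed estimate, with the factor $\sqrt d\,\g_d\,\bV$ in the exponent coming from integrating $|\om|$ against $|\hat V(\om)|$ in the definition of $\cC[V,\cdot,\cdot]$.

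The main obstacle is the apparent $1/\eps$ singularity produced by the time-integration of the semiclassical Schr\"odinger flow: the naive bound $\|[E_\om,A(s)]\|\le 2$ would blow up in the limit $\eps\to 0$. The crucial gain is that restricting test operators to Heisenberg-evolved Weyl quantizations gives $\|[E_\om,A(s)]\|\le|\om|\cN(s)=O(\eps)$ through Calder\'on-Vaillancourt, which exactly cancels the $1/\eps$ and yields an estimate uniform in $\eps$. The delicate bookkeeping is to show that this Weyl structure is approximately propagated by the Heisenberg flow driven by the (possibly rough) mean-field potential $V_{\psi_\eps}$; this is precisely the content of the Gronwall inequality for $\cN(t)$, which is where the constant $\Ga_2$ (and hence the loss of two derivatives of $V$ through the weight $(1+|\om|)^{[d/2]+3}$ in $\bV$) enters.
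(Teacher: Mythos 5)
Your proposal identifies all the key ingredients of the paper's argument (which is itself only sketched here, with the full details referred to Section~4 of \cite{FGTPEmpirical}): the duality identity $\Tr_\fH(R\,\Op^W_\eps[a])=\iint W_\eps[R]\,a\,dx\,d\xi$ to recast the dual-norm bound as a sup over Weyl symbols, the difference equation for $\cM_N(t)-\cR_{\psi_\eps}(t)$, the integral formula $[E_\om,A(s)]=\int_0^1 E_{\l\om}[i\om\cdot x,A(s)]E_{(1-\l)\om}\,d\l$, the Gronwall control of $\cN(s)$, the Calder\'on--Vaillancourt estimate, and the tensor-product Cauchy--Schwarz bound $d_N(0)\lesssim 1/\sqrt N$. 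Your observation that the $k=l$ commutator term is also $O(\eps/N)$ via the same $\cN$-control (so that dividing by $\eps$ leaves $O(1/N)$, uniformly in $\eps$) is a correct and necessary point that many students would miss.

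There is, however, one genuine issue with the direction of time in your Heisenberg flow. You impose the Weyl structure at $s=0$, setting $A(0)=\Op^W_\eps[a]$, and define
$$d_N(t):=\sup_{\|a\|_{[d/2]+2,[d/2]+2,\infty}\le1}\big|\la\Psi^{in}_{\eps,N}|(\cM_N(t)-\cR_{\psi_\eps}(t))(A(t))|\Psi^{in}_{\eps,N}\ra\big|\,.$$
But then $A(t)=U(t)\Op^W_\eps[a]U(t)^*$ is \emph{not} a Weyl operator with a symbol of bounded $\|\cdot\|_{[d/2]+2,[d/2]+2,\infty}$-norm, so $d_N(t)$ is not the dual norm the theorem asks for. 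In the proof of Theorem~\ref{T-MFh=1} this issue does not arise because conjugation is an isometry of the full unit ball of $\cL(\fH)$, so the sup over $A^{in}$ equals the sup over $A(t)$; here the relevant class of operators (Weyl quantizations of symbols bounded in $C^{[d/2]+2,[d/2]+2}_b$) is not preserved by the Hartree flow. The correct setup imposes the terminal condition $A(t)=\Op^W_\eps[a]$, solves the Heisenberg equation backward, then uses the Gronwall inequality $\cN(s)\le\cN(t)+\max(1,\Ga_2)\big|\int_s^t\cN(u)\,du\big|$ (note the absolute value in the paper's statement, precisely to accommodate the backward direction) to get $\cN(s)\le 2\eps\g_d\,e^{(t-s)\max(1,\Ga_2)}$; the remaining Gronwall is then for the integrated Dyson expansion from $0$ to $t$, and the $O(1/\sqrt N)$ initial bound is applied to $A(0)$, which is no longer a Weyl operator but still satisfies $\|A(0)\|=\|\Op^W_\eps[a]\|\le\g_d$. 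With this reversal, everything in your argument goes through and gives the claimed bound.
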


\smallskip
The interested reader is referred to section 4 of \cite{FGTPEmpirical} for the detailed proof of this result, which is somewhat technical.

\subsection{From the Quantum Coulomb Gas to Pressureless Euler-Poisson}

%%%%%%%%%%%%%%%%%%%%%%%%%%%%%%%%%%%%%%%%%%%%%%%%%%%%%%%%%%%%%%%%%%%%%%%%%%%%%%%%%%%%%%%%%%%%%%%%%%%%%%%%%%%%%%%%%%%%%%%

So far in this lecture and in Lecture 1, we have dealt with regular (at least $C^{1,1}$) potential. The treatment of singular potentials at the end of Lecture 2 does not seem compatible with the classical limit --- observe indeed the presence
of the $1/\hb$ term in the exponential amplifying rate in Theorem \ref{T-Coulh=1} (3). In order to pass to the limit in the quantum dynamical equation as both $1/N$ and $\eps$ tend to $0$, we need an estimate on the Coulomb force field 
replacing the Lipschitz continuity argument used in the Dobrushin inequality. 

First we recall the $N$-particle quantum Hamiltonian for the Coulomb gas:
$$
\cH_N=\sum_{k=1}^N-\tfrac12\eps^2\Dlt_{x_k}+\tfrac1N\sum_{1\le k<l\le N}\frac1{|x_k-x_l|}\,.
$$
By Kato's theorem, the operator $\cH_N$ has an extension to $\fH_N=L^2(\bR^{3N})$ as an unbounded operator such that $\cH_N=\cH_N^*$.

Henceforth, we assume that the $N$-particle initial state is factorized, i.e. is of the form
$$
\Psi_{\eps,N}^{in}:=\prod_{j=1}^N\psi_\eps(x_j)\,,
$$
where $\psi_\eps$ is a normalized element of $\fH$. The $N$-particle wave function at time $t$ is therefore
$$
\Psi_{\eps,N}(t,\cdot)=e^{-it\cH_N/\eps}\Psi_{\eps,N}^{in}\,.
$$
(Indeed, $e^{-it\cH_N/\eps}$ is a unitary group on $\fH_N$ according to Stone's theorem.)

In addition, we shall assume that the Wigner function of the initial single-particle state satisfies
$$
W_\eps[|\psi_\eps^{in}\ra\la\psi_\eps^{in}|](x,\xi)\to\rho^{in}(x)\de(\xi-u^{in}(x))
$$
in $\cS'(\bR^3\times\bR^3)$ as $\eps\to 0$. This type of phase space probability measure is referred to as a ``monokinetic'' distribution.

\smallskip
\noindent
\textbf{Example.} Perhaps the most famous example of (single-particle) wave function leading to a ``monokinetic'' Wigner measure in the vanishing $\eps$ limit is the case of \textit{WKB wave function}:
$$
\psi_\eps^{in}(x)=a^{in}(x)e^{iS^{in}(x)/\eps}\,,
$$
with 
$$
\|a^{in}\|_{L^2(\bR^3)}=1\quad\text{ and }\quad S^{in}\in W^{1,\infty}(\bR^3)\,.
$$
One easily checks that
$$
W_\eps[|\psi_\eps^{in}\ra\la\psi_\eps^{in}|](x,\xi)\to a^{in}(x)^2\de(\xi-\grad S^{in}(x))\,.
$$

\subsubsection{The Pressureless Euler-Poisson System}

%%%%%%%%%%%%%%%%%%%%%%%%%%%%%%%%%%%%%%%%%%%%%%%%%%%%%%%%%%%%%%%%%%%%%%%%%%%%%%%%%%%%%%%%%%%%%%%%%%%%%%%%%%%%%%%%%%%%%%%

The target equation of interest here is the following pressureless Euler-Poisson system. Its unknown is $(\rho,u)$, where $\rho(t,x)\ge 0$ is the gas density and $u(t,x)\in\bR^3$ its velocity field. It reads
$$
\left\{
\ba
{}&\d_t\rho+\Div_x(\rho u)=0\,,&&\rho\rstr_{t=0}=\rho^{in}\,,
\\
&\d_tu+u\cdot\grad_xu=-\grad_x\tfrac1{|x|}\star_x\rho\,,\qquad&&u\rstr_{t=0}=u^{in}\,.
\ea
\right.
$$
The Euler-Poisson system is related to the Vlasov-Poisson system by the following observation: if $(\rho,u)$ is a classical solution of the pressureless Euler-Poisson system, the monokinetic phase space probability measure
$$
f(t,x,\xi):=\rho(t,x)\de(\xi-u(t,x))
$$
is a solution of the Vlasov-Poisson system
$$
\left\{\ba
{}&\d_tf+\xi\cdot\grad_xf-\grad_xV_f(t,x)\cdot\grad_\xi f=0\,,
\\
&-\Dlt_xV_f(t,x)=4\pi\int_{\bR^3}f(t,x,\xi)d\xi\,.
\ea
\right.
$$

One easily proves the following local existence result for the pressureless Euler-Poisson system.

\begin{Thm}[Local Existence/Uniqueness for Euler-Poisson]
Let $u^{in}\in L^\infty(\bR^3)$ be such that $\grad_xu^{in}\in H^{2m}(\bR^3)$, and let $\rho^{in}\in H^{2m}(\bR^3)$ satisfy
$$
\rho^{in}(x)\ge 0\text{ for a.e. }x\in\bR^3\,,\quad\text{ and }\quad\int_{\bR^3}\rho^{in}(y)dy=1\,.
$$
Then

\smallskip
\noindent
(1) there exists $T\equiv T[\|\rho^{in}\|_{H^{2m}(\bR^3)}+\|\grad_xu^{in}\|_{H^{2m}(\bR^3)}]>0$, and a unique solution $(\rho,u)$ of the Euler-Poisson system such that
$$
u\in L^\infty([0,T]\times\bR^3)\quad\text{ while }\rho\text{ and }\grad_xu\in C([0,T],H^{2m}(\bR^3))\,;
$$
(2) besides, for all $t\in[0,T]$, one has
$$
\rho(t,x)\ge 0\text{ for a.e. }x\in\bR^3\,,\quad\text{ and }\quad\int_{\bR^3}\rho(t,y)dy=1\,.
$$
\end{Thm}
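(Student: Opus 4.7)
The plan is to treat the pressureless Euler--Poisson system as a quasilinear symmetric hyperbolic system coupled to a nonlocal Poisson force, and apply the classical Picard--Friedrichs iteration combined with $H^{2m}$ energy estimates. I would set $(\rho_0,u_0):=(\rho^{in},u^{in})$ and define iterates $(\rho_{n+1},u_{n+1})$ as the (unique) solutions of the linearized problem
$$
\ba
\d_t\rho_{n+1}+\Div_x(\rho_{n+1}u_n)&=0\,,\qquad\rho_{n+1}\rstr_{t=0}=\rho^{in}\,,
\\
\d_tu_{n+1}+(u_n\!\cdot\!\grad_x)u_{n+1}&=-\grad_x\tfrac1{|x|}\star_x\rho_n\,,\quad u_{n+1}\rstr_{t=0}=u^{in}\,.
\ea
$$
Each linearized problem is solved by transport along the flow of $u_n$, which is well-defined because $\grad u_n\in H^{2m}\hookrightarrow L^\infty(\bR^3)$ for $m\ge 1$ (using $2m>3/2$), hence $u_n$ is globally Lipschitz.

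The main a priori estimate is obtained by differentiating the linearized equations up to order $2m$ and performing $L^2$ energy estimates. The quasilinear transport terms produce commutators $[\d^\a,u_n\!\cdot\!\grad]$ in which \emph{only derivatives} of $u_n$ appear, so they are controlled by Moser--Kato--Ponce inequalities using $\|\grad u_n\|_{H^{2m}}$ and $\|\grad u_n\|_{L^\infty}$, and never require $u_n$ itself to be in a Sobolev space; the source $\rho_{n+1}\Div u_n$ in the continuity equation is handled by the Moser algebra estimate in $H^{2m}$. For the Poisson force $F_n:=-\grad(1/|x|)\star_x\rho_n$, the identity $\grad F_n=-\grad^2(1/|x|)\star\rho_n$ together with Calder\'on--Zygmund/Riesz-transform boundedness gives $\|\grad F_n\|_{H^{2m}}\le C\|\rho_n\|_{H^{2m}}$, while the classical potential estimate $\|F_n\|_{L^\infty}\le C\|\rho_n\|_{L^1}^{1/3}\|\rho_n\|_{L^\infty}^{2/3}$ keeps $F_n$ bounded. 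Combining these bounds yields
$$
\tfrac{d}{dt}(\|\rho_{n+1}\|_{H^{2m}}^2+\|\grad u_{n+1}\|_{H^{2m}}^2)\le C\mathcal{Q}(E_n(t))\,(1+E_{n+1}(t)^2)\,,
$$
where $E_k(t):=\|\rho_k(t)\|_{H^{2m}}+\|\grad u_k(t)\|_{H^{2m}}$ and $\mathcal{Q}$ is a polynomial. An induction then gives a common existence time $T=T(E_0(0))>0$ and a uniform bound $\sup_{[0,T]}E_n(t)\le 2E_0(0)$; the $L^\infty$ norm of $u_n$ is controlled separately via Duhamel along the characteristics of $u_{n-1}$ using $\|F_{n-1}\|_{L^\infty}$.

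Convergence of the iterates is established in the weaker norm $C([0,T];L^2(\bR^3))$. The differences $(\rho_{n+1}\!-\!\rho_n,\,u_{n+1}\!-\!u_n)$ solve a linearized system whose source involves analogous differences at level $n$ and $n{-}1$, and $L^2$ energy estimates (in which the Riesz-transform structure of $\grad F$ gives $L^2$-continuity of the nonlocal term in $\rho$) yield an estimate of the form $\|\de_{n+1}(t)\|_{L^2}\le C\int_0^t\|\de_n(s)\|_{L^2}ds$, hence geometric contraction for $T$ small enough. Interpolating the $L^2$ convergence against the uniform $H^{2m}$ bound gives the stated regularity of the limit $(\rho,u)$. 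Uniqueness is proved by the same $L^2$-Gronwall argument applied to the difference of two solutions of the full system, using that the coefficients are globally Lipschitz. Finally, since $u\in L^\infty([0,T];W^{1,\infty}(\bR^3))$, the continuity equation can be integrated along characteristics $X(t,\cdot)$, giving $\rho(t,X(t,y))\Det(DX(t,y))=\rho^{in}(y)\ge 0$, which yields $\rho(t,\cdot)\ge 0$; integrating the continuity equation in $x$ (permissible because $\rho u\in L^1$ via $\rho\in L^1\cap L^\infty$ and $u\in L^\infty$) gives $\tfrac{d}{dt}\int_{\bR^3}\rho(t,y)dy=0$, hence conservation of total mass.

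The main obstacle --- and the reason for the unusual function-space statement --- is that $u^{in}$ is only $L^\infty$ and need not decay at infinity, so $u(t,\cdot)$ does not lie in any $L^2$-based Sobolev space. The scheme is forced into a two-tier structure in which every high-order estimate is carried out on $\grad u$, while $u$ is controlled in $L^\infty$ by the transport equation $\d_tu+(u\!\cdot\!\grad)u=F$. Keeping $F$ bounded throughout the iteration is what requires the $\rho\in L^1\cap L^\infty$ control (ultimately supplied by mass conservation and the Sobolev embedding $H^{2m}\hookrightarrow L^\infty$), and weaving this tier-splitting through the commutator and Moser estimates is the main technical point to verify.
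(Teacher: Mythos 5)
The paper does not actually give a proof of this statement: it asserts that the result ``is proved by a standard energy method,'' notes in passing that $\grad^k_x\rho$ and $\grad^{k+1}_xu$ appear at the same order, and points to \cite{SerfatyDMJ2020}. Your iteration-plus-$H^{2m}$-energy scheme is exactly the ``standard energy method'' being alluded to, and you correctly identify and exploit the two-tier structure that the paper's parenthetical hint is about: all high-order estimates live on $\grad u$ rather than $u$, so that $\grad^{k+1}u$ is paired with $\grad^k\rho$, while $u$ itself is only tracked in $L^\infty$ via the transport equation and the a priori bound $\|F\|_{L^\infty}\lesssim\|\rho\|_{L^1}^{1/3}\|\rho\|_{L^\infty}^{2/3}$. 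The a priori estimate section, the positivity-by-characteristics argument and the mass conservation step are all correct.

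There is, however, a genuine gap in the contraction step. In the difference equation for $u_{n+1}-u_n$, the nonlocal source is $F_n-F_{n-1}=-\grad\tfrac1{|x|}\star(\rho_n-\rho_{n-1})$, i.e.\ the Riesz \emph{potential} of order one applied to $\de\rho_n$, \emph{not} the Riesz transform. This operator is not bounded $L^2(\bR^3)\to L^2(\bR^3)$; by Hardy--Littlewood--Sobolev it maps $L^{6/5}\to L^2$ (equivalently $L^2\to L^6$), so $\|F_n-F_{n-1}\|_{L^2}$ is \emph{not} controlled by $\|\de\rho_n\|_{L^2}$ alone. Your parenthetical appeal to ``the Riesz-transform structure of $\grad F$'' does not rescue this, because $\grad(F_n-F_{n-1})$ is not what appears in the $u$-difference equation and the difference $u_{n+1}-u_n$ is not itself an $H^1$ object in your framework. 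As stated, the inequality $\|\de_{n+1}(t)\|_{L^2}\le C\int_0^t\|\de_n(s)\|_{L^2}ds$ does not close, and the same issue reappears in the uniqueness argument (two solutions differ by a $u$ that is only $L^\infty$, so the ``$L^2$-Gronwall on the difference'' has the identical defect). Standard patches exist --- run the contraction in $L^2\cap L^{6/5}$ for $\de\rho$ together with $L^\infty$ for $\de u$ (at the price of a square-root, i.e.\ non-geometric, rate coming from $\|F_1-F_2\|_{L^\infty}\lesssim\|\de\rho\|_{L^2}^{1/2}\|\de\rho\|_{L^6}^{1/2}$), or avoid contraction altogether via compactness of the uniformly-$H^{2m}$-bounded iterates and a separate relative-energy uniqueness argument, or work in weighted spaces --- but one of them must be spelled out; as written, the convergence-of-iterates and uniqueness paragraphs are not complete.
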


This result is proved by a standard energy method (notice that $\grad^k_x\rho$ and $\grad^{k+1}_xu$ appear at the same order in this estimate; see \cite{SerfatyDMJ2020} for more details).

\subsubsection{From $N$-Body Schr\"odinger to the Euler-Poisson System}

%%%%%%%%%%%%%%%%%%%%%%%%%%%%%%%%%%%%%%%%%%%%%%%%%%%%%%%%%%%%%%%%%%%%%%%%%%%%%%%%%%%%%%%%%%%%%%%%%%%%%%%%%%%%%%%%%%%%%%%

Our result on the joint mean field and classical limit for the $N$-particle Coulomb gas (i.e. the oblique arrow in the diagram at the beginning of Lecture 3) with monokinetic initial data is summarized in the following theorem.

\begin{Thm}\cite{FGTPCpam}
Let $\rho^{in}\in H^4(\bR^3)\cap\cP(\bR^3)$ and $u^{in}\in L^\infty(\bR^3)^3$ be such that $\grad u^{in}\in H^4(\bR^3)^3$. 

Let $\Psi_{\eps,N}^{in}=(\psi_\eps^{in})^{\otimes N}$, with $\|\psi_\eps\|_{L^2}=1$ satisfying
$$
\sup_{0<\eps<1}\la\psi_\eps^{in}|\,\,|\eps D_x|^4\,|\psi_\eps^{in}\ra<\infty\,,\quad\la\psi_\eps^{in}|\,\,|\eps D_x-u^{in}|^2\,|\psi_\eps^{in}\ra\to 0
$$
and
$$
\iint_{\bR^3\times\bR^3}\frac{(|\psi_\eps^{in}(x)|^2-\rho^{in}(x))(|\psi_\eps^{in}(y)|^2-\rho^{in}(y))}{|x-y|}dxdy\to 0
$$
as $\eps\to 0$. Set
$$
\cH_N:=\sum_{k=1}^N-\tfrac12\eps^2\Dlt_{x_k}+\tfrac1N\sum_{1\le k<l\le N}\frac1{|x_k-x_l|}\,,
$$
the $N$-particle quantum Hamiltonian for the Coulomb gas. Let 
$$
\Psi_{\eps,N}(t,\cdot):=e^{\frac{it\cH_N}\eps}\Psi_{\eps,N}^{in}\,,\qquad R_{\eps,N}(t)=|\Psi_{\eps,N}(t,\cdot)\ra\la\Psi_{\eps,N}(t,\cdot)|\,,
$$
and let $R_{\eps,N:1}(t)$ be the first marginal (reduced density operator) of $R_{\eps,N}(t)$.

Let $(\rho,u)$ be the (classical) solution on $[0,T]\times\bR^3$ for some $T>0$ of the pressureless Euler-Poisson system with initial data $(\rho^{in},u^{in})$.

\medskip
Then, in the limit as $\eps+\tfrac1N\to 0$, the reduced density operator of the $N$-particle wave function $\Psi_{\eps,N}(t)$ satisfies
$$
W_\eps[R_{\eps,N:1}(t)](x,\xi)\to\rho(t,x)\de(\xi-u(t,x))\quad\text{ in }\cS'(\bR^3)\,,
$$
and
$$
\ba
\int_{\bR^3}W_\eps[R_{\eps,N:1}(t)]d\xi&\to\rho(t,\cdot)
\\
\int_{\bR^3}\xi W_\eps[R_{\eps,N:1}(t)]d\xi&\to\rho u(t,\cdot)
\ea
$$
for the narrow topology of Radon measures on $\bR^3$.
\end{Thm}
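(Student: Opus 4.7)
The plan is to run a modulated energy argument tailored to monokinetic data, which is the quantum analogue of the modulated energy method developed for the Vlasov-Poisson and Euler-Poisson limits by Brenier, Serfaty, Duerinckx and others. The hypotheses on the initial data are precisely those required for the modulated energy to be small at $t=0$: the $|\eps D_x-u^{in}|^2$-expectation controls the initial ``velocity defect,'' while the double integral involving $|\psi_\eps^{in}|^2-\rho^{in}$ weighted by $1/|x-y|$ is the initial modulated Coulomb energy. The uniform bound on $\la\psi_\eps^{in}|\,|\eps D_x|^4|\psi_\eps^{in}\ra$ is needed to control time derivatives of the modulated kinetic term.

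First I would introduce the modulated energy
\[
\mathcal{E}_N(t):=\tfrac{1}{N}\La\Psi_{\eps,N}(t)\Big|\sum_{k=1}^N\tfrac12|{-i\eps\grad_{x_k}}-u(t,x_k)|^2\Big|\Psi_{\eps,N}(t)\Ra+\tfrac12\La\Psi_{\eps,N}(t)\Big|\mathcal{P}_N(t)\Big|\Psi_{\eps,N}(t)\Ra,
\]
where $\mathcal{P}_N(t)$ is the renormalized modulated Coulomb energy
\[
\mathcal{P}_N(t):=\tfrac{1}{N^2}\sum_{k\neq l}\tfrac{1}{|x_k-x_l|}-\tfrac{2}{N}\sum_{k=1}^N V_\rho(t,x_k)+\iint_{\bR^3\times\bR^3}\tfrac{\rho(t,x)\rho(t,y)}{|x-y|}\,dxdy,
\]
with $V_\rho(t,x):=(\rho(t,\cdot)\star\tfrac{1}{|\cdot|})(x)$. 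By the hypotheses, $\mathcal{E}_N(0)\to 0$ as $\eps+\tfrac{1}{N}\to 0$ (modulo a trivial counter-term that controls the diagonal of the pair interaction by the $H^4$-regularity of $\rho^{in}$ and a Fefferman-type $1/N$ correction, which is easy to absorb).

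Next I would differentiate $\mathcal{E}_N(t)$ in time, using the Schr\"odinger equation for $\Psi_{\eps,N}$ and the Euler-Poisson system for $(\rho,u)$. The computation splits into three pieces: (a) transport-type terms that reproduce $\|\grad u(t,\cdot)\|_{L^\infty}\mathcal{E}_N(t)$ and thus feed a Gronwall loop; (b) commutator terms coming from the kinetic-Laplacian acting on the non-quadratic modulation $u(t,x_k)$, which are $O(\eps^2)$ times quantities bounded by the uniform $|\eps D_x|^4$-control and the $H^4$-regularity of $u$; (c) an interaction commutator of the form
\[
\tfrac{1}{N}\sum_{k=1}^N(u(t,x_k)-u(t,x_l))\cdot\Big(\grad\tfrac{1}{|x_k-x_l|}\Big)
\]
integrated against $|\Psi_{\eps,N}|^2$, which is the crux of the argument.

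The main obstacle is precisely this Coulomb commutator: Lipschitz continuity of $\grad V$ is unavailable and the Dobrushin strategy of Lecture 1 fails. The right tool is Serfaty's commutator estimate, which in the quantum form asserts that for any smooth vector field $v$,
\[
\tfrac{1}{2N^2}\sum_{k\neq l}(v(x_k)-v(x_l))\cdot\grad\tfrac{1}{|x_k-x_l|}\le C\|\grad v\|_{L^\infty}\Big(\mathcal{P}_N(t)+o_N(1)+o_\eps(1)\Big),
\]
in the sense of expectations against symmetric $N$-body states, with an explicit $O(N^{-2/3})$ concentration defect. Applying this with $v=u(t,\cdot)$ closes the Gronwall loop and yields $\mathcal{E}_N(t)\to 0$ uniformly on $[0,T]$. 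Finally, the convergence
\[
\int_{\bR^3}W_\eps[R_{\eps,N:1}(t)]\,d\xi\to\rho(t,\cdot),\qquad\int_{\bR^3}\xi\,W_\eps[R_{\eps,N:1}(t)]\,d\xi\to\rho u(t,\cdot)
\]
follows because the modulated kinetic part of $\mathcal{E}_N(t)$ is exactly $\tfrac12\iint|\xi-u(t,x)|^2 W_\eps[R_{\eps,N:1}(t)]\,dxd\xi$ (up to an $O(\eps^2)$ Laplacian smoothing term coming from the Wigner calculus) and $\mathcal{P}_N$ controls weak-$*$ convergence of the first marginal density to $\rho$ via the negative Sobolev norm $\|\cdot\|_{\dot{H}^{-1}}$; tightness in $\xi$ is then recovered from the uniform bound $\la\Psi_{\eps,N}|-\eps^2\Dlt_{x_1}|\Psi_{\eps,N}\ra\le C$, and the Dirac-mass structure $\rho(t,x)\de(\xi-u(t,x))$ emerges from the vanishing of the velocity variance.
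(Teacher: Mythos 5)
Your proposal follows exactly the modulated-energy strategy of the paper: the same two-part functional (modulated kinetic term plus renormalized Coulomb energy $F[X_N,\rho]$, which is identical to your $\mathcal P_N$ up to normalization), the same time-derivative decomposition into a $\grad u$-transport term, an $O(\eps^2)$ dispersive remainder, and the Coulomb commutator $G[X_N,\rho,u]$, closed via Serfaty's commutator inequality and Gronwall, followed by reading off momentum and density convergence from the two pieces of the modulated energy. This is the proof of \cite{FGTPCpam} as sketched in the text, with no essential deviation.
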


\smallskip
The conclusions of the theorem above can be recast as follows: in the limit as $\eps+\tfrac1N\to 0$, one has
$$
\ba
\int_{\bR^{3(N-1)}}|\Psi_{\eps,N}(t,\cdot,X_{2,N})|^2dX_{2,N}&\to\rho(t,\cdot)
\\
\hb\int_{\bR^{3(N-1)}}\Im\left(\overline{\Psi_{\eps,N}}\grad_{x_1}\Psi_{\eps,N}\right)(t,\cdot,X_{2,N})dX_{2,N}&\to\rho u(t,\cdot)
\ea
$$
for the narrow topology of Radon measures on $\bR^3$.

\smallskip
The key new ingredient used in the proof of this theorem is the following remarkable inequality, due to Serfaty \cite{SerfatyDMJ2020}.

For all $\rho\in L^\infty(\bR^3)$, all $u\in W^{1,\infty}(\bR^3)^3$ and all $X_N\in\bR^{3N}$, set
$$
\left\{
\ba
{}&F[X_N,\rho]:=\iint_{x\not=y}\frac{(\mu_{X_N}-\rho)(dx)(\mu_{X_N}-\rho)(dy)}{|x-y|}\,,
\\
&G[X_N,\rho,u]\!:=\!\!\iint_{x\not=y}\!\!\!\tfrac{(u(x)\!-\!u(y))\cdot(x\!-\!y)}{|x-y|^3}(\mu_{X_N}\!-\!\rho)(dx)(\mu_{X_N}\!-\!\rho)(dy)\,.
\ea
\right.
$$

\bigskip
\noindent
\fbox{\sc Serfaty's Inequality}

\medskip
There exists $C>2$ such that, for all $\rho\in L^\infty(\bR^3)$, all $u\in W^{1,\infty}(\bR^3)^3$ and a.e. $X_N\in\bR^{3N}$
$$
|G[X_N,\rho,u]|\le C\|\grad u\|_{L^\infty}F_N[X_N,\rho]+\frac{C}{N^{1/3}}(1+\|\rho\|_{L^\infty})(1+\|u\|_{W^{1,\infty}})\,.
$$

Besides, there exists $C'>0$ such that
$$
F[X_N,\rho]\ge -\frac{C'}{N^{2/3}}(1+\|\rho\|_{L^\infty(\bR^3)})\,.
$$

\bigskip
With this inequality, Serfaty and Duerinckx proved that, if $t\mapsto(X_N,\Xi_N)(t)$ is a solution of the $N$-body Newton equations with Coulomb repulsive potential such that
$$
\tfrac1N\sum_{j=1}^N\de_{x_j(0)}\to\rho^{in}\quad\text{ and }\quad\frac1N\sum_{j=1}^N\xi_j(0)\de_{x_j(0)}\to\rho^{in} u^{in}
$$
in the narrow topology as $N\to\infty$, then, for each $t\in[0,T]$, one has
$$
\mu_{X_N(t)}\!:=\!\tfrac1N\sum_{j=1}^N\de_{x_j(t)}\!\to\!\rho(t,\cdot)\quad\text{ and }\quad\tfrac1N\sum_{j=1}^N\xi_j(t)\de_{x_j(t)}\!\to\!\rho u(t,\cdot)\,.
$$
To arrive at this result, Serfaty and Duerinckx obtained a Gronwall type inequality for the classical modulated energy for Klimontovich solutions of the Vlasov equation, in other words, for the quantity
$$
\tfrac1N\sum_{j=1}^N|\xi_j(t)-u(t,x_j(t))|^2+F[X_N(t),\rho(t,\cdot)]
$$

Serfaty's inequality is used in a slightly different manner in the proof of the above theorem.

\begin{proof}[Sketch of the proof]
First, we define some appropriate modulation of the total energy of the quantum particles. With the notation 
$$
J_1A=A\otimes \overbrace{I\otimes\ldots\otimes I}^{N-1\text{ terms}}\,,
$$
we consider the quantity (modulated energy)
$$
\ba
\cE[\Psi_{\hbar,N},\rho,u](t):=&\left\la\Psi_{\hbar,N}(t)\big|\,J_1|-i\hbar\grad_x-u(t,\cdot)|^2\,\big|\Psi_{\hbar,N}(t)\right\ra
\\	\\
&+\la\Psi_{\hbar,N}(t)|\,F[X_N,\rho(t,\cdot)]\,|\Psi_{\hbar,N}(t)\ra\,.
\ea
$$
Denoting $\Sigma:=\tfrac12(\grad_xu+(\grad_xu)^T)$ the deformation tensor, some fastidious (but easy!) computations show that 
$$
\ba
\frac{d}{dt}\cE[\Psi_{\hbar,N},\rho,u](t)+2\left\la\Psi_{\hbar,N}\big|\,J_1((i\hbar\grad_x\!+\!u)^T\Sigma(i\hbar\grad_x\!+\!u))\,\big|\Psi_{\hbar,N}\right\ra&
\\	\\
=\tfrac12\hb^2\left\la\Psi_{\hbar,N}\big|\,J_1(\Dlt_x\Div_xu(t,\cdot))\,\big|\Psi_{\hbar,N}\right\ra+\left\la\Psi_{\hbar,N}\big|\,G[X_N,\rho,u]\,\big|\Psi_{\hbar,N}\right\ra&\,.
\ea
$$
Using Gronwall's and Serfaty's inequalities, one arrives at the bound
$$
\ba
0\le\cE[\Psi_{\hbar,N},\rho,u](t)+\frac{C'}{N^{2/3}}(1+\|\rho\|_{L^\infty(\bR^3)})
\\
\le e^{CT\|\grad u\|_{L^\infty}}\left(\underbrace{\cE[\Psi_{\hbar,N},\rho,u](0)}_{\to 0}+\frac{C'}{N^{2/3}}(1+\|\rho\|_{L^\infty(\bR^3)})\right)
\\
+Te^{CT\|\grad u\|_{L^\infty}}\frac{C}{N^{1/3}}(1+\|\rho\|_{L^\infty})(1+\|u\|_{W^{1,\infty}})
\\
+Te^{CT\|\grad u\|_{L^\infty}}\tfrac12\hb^2\|\Dlt_x\Div_xu\|_{L^\infty}&\,.
\ea
$$

By the lower bound in Serfaty's inequality and the Cauchy-Schwarz inequality
$$
\ba
\cE[\Psi_{\hbar,N},\rho,u](t)&+\frac{C'}{N^{2/3}}(1+\|\rho\|_{L^\infty(\bR^3)})
\\
&\ge\left|\left\la\Psi_{\hbar,N}(t)\big|\,J_1(-i\hbar\grad_x-u(t,\cdot))\,\big|\Psi_{\hbar,N}(t)\right\ra\right|^2\,.
\ea
$$
This implies the announced convergence to the momentum density.

The narrow convergence of the densities is the second important conclusion deduced from the  modulated energy. It is specifically based on the properties of the potential energy. Starting from the decomposition (left to the reader as an exercise)
$$
\frac1{4\pi|x-y|}\!=\!\!\int_0^\infty\!\!\!dr\!\int_{\bR^3}\!G_r(x\!-\!z)G_r(y\!-\!z)dz\,,\qquad\text{ with }\quad G_r(w)\!:=\tfrac{e^{-\frac{|w|^2}{2r}}}{(2\pi r)^{\frac32}}\,,
$$
one can prove that
$$
\ba
\int_\eps^\infty\|e^{r\Dlt/2}\rho_{\hb,N:1}(t,\cdot)-\rho(t,\cdot)\|^2_{L^2}dr
\\
\le\underbrace{\la\Psi_{\hbar,N}(t)|\,F[X_N,\rho(t,\cdot)]\,|\Psi_{\hbar,N}(t)\ra}_{\to 0}+O(N^{-2/3})&\,,
\ea
$$
where
$$
\rho_{\hb,N:1}(t,\cdot):=\int_{\bR^{3(N-1)}}|\Psi_{\hb,N}(t,\cdot,X_{2,N})|^2dX_{2,N}\,.
$$
\end{proof}

The interested reader is referred to \cite{FGTPCpam} for the missing details.

\subsection{Miscellaneous Remarks}

%%%%%%%%%%%%%%%%%%%%%%%%%%%%%%%%%%%%%%%%%%%%%%%%%%%%%%%%%%%%%%%%%%%%%%%%%%%%%%%%%%%%%%%%%%%%%%%%%%%%%%%%%%%%%%%%%%%%%%%%%

\smallskip
\noindent
(1) As already mentioned in Remark (7) at the end of lecture 2, the mean-field limit for a gas of $N$ fermions with comparable kinetic and potential energies involve a distinguished limit which is reminiscent of a joint mean-field and classical
limit (with $\eps=N^{-1/3}$ in space dimension $3$). Hopefully, the material presented in lecture 3 might become useful to a better understanding of this case, which is of considerable importance, for instance in chemistry. The interested
reader should read \cite{NarnSewell} --- see also \cite{BPSS}.

\smallskip
\noindent
(2) The method of \cite{FGTPCpam} based on Serfaty's inequality can be used to derive rigorously the Vlasov-Poisson system from the Hartree equation in the Coulomb case and in the monokinetic setting: see Proposition 2.4 in \cite{FGTPCpam}.
 (This is the right vertical arrow in the diagram at the beginning of lecture 3, in other words, the classical limit of the Hartree equation leading to the Vlasov-Poisson system). This problem has already been treated some time ago: see Theorem IV.5 
in \cite{LionsPaul}. (See also \cite{BPSS} in the case of regular potentials.) There is however a fundamental difference between Proposition 2.4 in \cite{FGTPCpam} and Theorem IV.5 in \cite{LionsPaul}. Indeed, Theorem IV.5 in \cite{LionsPaul} 
assumes that the Wigner transform of the states considered is bounded in $L^\infty([0,T],L^2(\bR^3\times\bR^3))$. This incompatible with the monokinetic setting in \cite{FGTPCpam}, where the Wigner functions considered converge to a Dirac 
distribution in the momentum variable. Thus Proposition 2.4 in \cite{FGTPCpam} and Theorem IV.5 in \cite{LionsPaul} both establish the validity of the classical limit of the Hartree equation, but in radically different asymptotic regimes.

\smallskip
\noindent
(3) All the quantum dynamics considered here do not include any magnetic field. The quantum-to-classical Wasserstein pseudo distance can also be used in the presence of an external magnetic field: see \cite{IBPPhD}.

\smallskip
\noindent
(4) Whether the results in \cite{SerfatyDMJ2020} or in \cite{FGTPCpam} can be extended beyond the monokinetic case is a major open question. What is at stake is a rigorous derivation of the Vlasov-Poisson system starting from a classical
or a quantum Coulomb gas, a notoriously difficult and fundamental problem in the kinetic theory of charged particles.

%%%%%%%%%%%%%%%%%%%%%%%%%%%%%%%%%%%%%%%%%%%%%%%%%%%%%%%%%%%%%%%%%%%%%%%%%%%%%%%%%%%%%%%%%%%%%%%%%%%%%%%%%%%%%%%%%%%%%%%%%

%%%%%%%%%%%%%%%%%%%%%%%%%%%%%%%%%%%%%%%%%%%%%%%%%%%%%%%%%%%%%%%%%%%%%%%%%%%%%%%%%%%%%%%%%%%%%%%%%%%%%%%%%%%%%%%%%%%%%%%%%

\end{document}